\renewcommand{\ALG@beginalgorithmic}{\footnotesize}
\numberwithin{equation}{section}
\def\draft{0}  
\newcommand{\mycomment}[3]{{\color{#2}{[\bf{#1: #3}]}}}
\newcommand{\here}[1]{\bf{[[[#1]]]}}
\newcommand{\mycomment}[3]{}
\newcommand{\here}[1]{}  
\newcommand{\Eric}[1]{\mycomment{Eric}{magenta}{#1}}
\newcommand{\Hossein}[1]{\mycomment{Hossein}{blue}{#1}}
\newcommand{\node}{node}
\newcommand{\nodes}{nodes}
\newcommand{\block}{block}
\newcommand{\blocks}{blocks}
\newcommand{\ordering}{ordering}
\algnewcommand\algorithmicforeach{\bf{for each}}
\algrenewcommand\Call[2]{\op{#1}\ifthenelse{\equal{#2}{}}{}{(#2)}}
\newcommand\keywordfont{\sffamily\bfseries}
\algrenewcommand\algorithmicend{{\keywordfont end}}
\algrenewcommand\algorithmicfor{{\keywordfont for}}
\algrenewcommand\algorithmicforeach{{\keywordfont for each}}
\algrenewcommand\algorithmicdo{{\keywordfont do}}
\algrenewcommand\algorithmicuntil{{\keywordfont until}}
\algrenewcommand\algorithmicfunction{{\keywordfont function}}
\algrenewcommand\algorithmicif{{\keywordfont if}}
\algrenewcommand\algorithmicthen{{\keywordfont then}}
\algrenewcommand\algorithmicelse{{\keywordfont else}}
\algrenewcommand\algorithmicreturn{{\keywordfont return}}
\newcommand{\sub}[1]{\textsubscript{#1}}
\renewcommand{\tt}[1]{\texttt{#1}}
\renewcommand{\it}[1]{\textit{#1}}
\renewcommand{\bf}[1]{\textbf{#1}}
\newcommand{\nf}[1]{{\normalfont{\texttt{#1}}}}
\newcommand{\cmt}[1]{\Comment{#1}}
\newcommand{\head}{\fld{head}}
\newcommand{\size}{\fld{size}}
\newcommand{\eleft}{\fld{end\sub{left}}}
\newcommand{\eright}{\fld{end\sub{right}}}
\newcommand{\edir}{\fld{end\sub{dir}}}
\newcommand{\hangbox}[1]{\parbox[t]{\dimexpr\linewidth-\algorithmicindent\relax}{%
    \setlength{\hangindent}{\algorithmicindent}%
    #1}}
\newcommand{\spac}{\medskip}
\newcommand{\assign}{\mbox{:=}}
\newcommand{\var}[1]{\mbox{\textnormal{\it{#1}}}}
\newcommand{\fld}[1]{\mbox{\it{#1}}}
\newcommand{\typ}[1]{\mbox{\sf #1}}
\newcommand{\ceil}[1]{\lceil #1 \rceil}
\newcommand{\nl}{\mbox{\sf null}}
\newcommand{\tr}{\mbox{\sf true}}
\newcommand{\fa}{\mbox{\sf false}}
\newcommand{\opa}[2]{\op{#1(\var{#2})}}
\newcommand{\op}[1]{{\sf #1}}
\newcommand{\opemph}[1]{{\sf \bf{#1}}}
\newcommand{\com}{$\triangleright$}
\newcommand{\linecomment}[1]{\com\ #1}
\newcommand{\gc}[1]{\color{blue}#1\color{black}}
\newtheorem{theorem}{Theorem}
\newtheorem{lemma}[theorem]{Lemma}
\newtheorem{mytheorem}[theorem]{Theorem}
\newtheorem{corollary}[theorem]{Corollary}
\newtheorem{proposition}[theorem]{Proposition}
\newtheorem{observation}[theorem]{Observation}
\newtheorem{invariant}[theorem]{Invariant}
\newtheorem{claim}{Claim}[theorem]
\newenvironment{customlemma}[1]
  {\innercustomlemma}
  {\endinnercustomlemma}
\newenvironment{customprop}[1]
  {\innercustomprop}
  {\endinnercustomprop}
\title{A Wait-free Queue with Polylogarithmic Step Complexity}
\author{Hossein Naderibeni}
\affiliation{%
  \institution{York University}
  \streetaddress{P.O. Box 1212}
  \city{Toronto}
  \state{Ontario}
  \country{Canada}
  \postcode{43017-6221}
}
\author{Eric Ruppert}
\affiliation{%
  \institution{York University}
  \streetaddress{P.O. Box 1212}
  \city{Toronto}
  \state{Ontario}
  \country{Canada}
  \postcode{43017-6221}
}
\keywords{concurrent data structures, wait-free queues}
\begin{document}

\begin{abstract}
We present a novel linearizable wait-free queue implementation using single-word CAS instructions.  
Previous lock-free queue implementations from \op{CAS} all have amortized step complexity of 
$\Omega(p)$ per operation in worst-case executions, where $p$ is the number of processes that 
access the queue.  Our new wait-free queue  takes $O(\log p)$ steps per enqueue and 
$O(\log^2 p +\log q)$ steps per dequeue, where $q$ is the size of the queue.  
A bounded-space version of the implementation has $O(\log p \log(p+q))$ amortized step complexity  per operation.
\end{abstract}

\maketitle

\Eric{Remove references to appendix and replace with tech report}


\section{Introduction}

There has been a great deal of research in the past several decades on the design of shared queues.
Besides being a fundamental data structure, queues are used in
significant concurrent applications, including OS kernels \cite{MP91}, memory management \cite{BBFRSW21}\here{try to find an older, more canonical reference},
synchronization \cite{KAE23},\here{is this a good citation for this?}
and sharing resources or tasks.
We focus on shared queues that are \emph{linearizable} \cite{HW90}, meaning that operations
appear to take place atomically, and \emph{lock-free}, meaning that some operation on the queue
is guaranteed to complete regardless of how asynchronous processes are scheduled to take~steps.

The lock-free MS-queue of Michael and Scott \cite{MS98} is a classic shared queue implementation.
It uses a singly-linked list with pointers to the front and back nodes.
To dequeue or enqueue an element, the front or back pointer is updated by a 
compare-and-swap (CAS) instruction.
If this CAS fails, the operation must retry.
In the worst case, this means that each successful CAS may cause all other processes to
fail and retry, leading to an amortized step complexity of $\Omega(p)$ per operation in a system of $p$ processes.
(To measure amortized step complexity of a lock-free implementation, we consider all possible finite executions
and divide the number of steps in the execution by the number of operations  in the execution.)
Numerous papers have suggested modifications to the MS-queue \cite{DBLP:conf/opodis/HoffmanSS07,DBLP:conf/podc/KoganH14,DBLP:conf/ppopp/KoganP11,DBLP:journals/dc/Ladan-MozesS08,MKLLP22,DBLP:conf/spaa/MoirNSS05,RC17}, but 
all still have $\Omega(p)$ amortized step complexity as a result of
contention on the front and back of the queue.
Morrison and Afek \cite{DBLP:conf/ppopp/MorrisonA13} called this the \emph{CAS retry problem}.
The same problem occurs in array-based implementations of queues \cite{DBLP:conf/iceccs/ColvinG05,DBLP:conf/icdcn/Shafiei09,DBLP:conf/spaa/TsigasZ01,DBLP:conf/opodis/GidenstamST10}.
Solutions that tried to sidestep this problem using fetch\&increment \cite{DBLP:conf/ppopp/MorrisonA13,DBLP:conf/ppopp/YangM16,Nik19,10.1145/3490148.3538572}
rely on slower mechanisms to handle worst-case executions and still have $\Omega(p)$ step complexity.

Many concurrent data structures that keep track of a set of elements also have an $\Omega(p)$ term in their step complexity, as observed by Ruppert \cite{Rup16}.
For example, lock-free lists \cite{FR04,Sha15}, stacks \cite{Tre86} and search trees \cite{EFHR14} 
have an $\Omega(c)$ term in their step complexity, where $c$ represents contention,
the number of processes that access the data structure concurrently, which can be $p$ in the worst case.
Attiya and Fouren \cite{DBLP:conf/opodis/AttiyaF17} proved 
that amortized $\Omega(c)$ steps per operation are indeed necessary
for any CAS-based implementation of a lock-free bag data structure, which provides operations
to insert an element or remove an arbitrary element (chosen non-deterministically).
Since a queue trivially implements a bag, this lower bound also applies to queues.
Although this might seem to settle the step complexity of lock-free queues, the lower bound
holds only if $c$ is $O(\log\log p)$ so it should be stated more precisely as
an amortized bound of $\Omega(\min(c,\log\log p))$ steps per operation.

We exploit this loophole.  We show  it is, in fact, possible for a linearizable queue
to have step complexity sublinear in $p$.
Our queue is the first whose step complexity  is polylogarithmic in $p$ and in $q$, the number of elements in the queue.
It is \emph{wait-free}, meaning that every operation is guaranteed to complete within a finite number of its own steps.
For ease of presentation, we first give an unbounded-space construction where enqueues take $O(\log p)$ steps and
dequeues take $O(\log^2 p + \log q)$ steps,
and then modify it to bound the space
while  having $O(\log p\log( p+ q))$ amortized step complexity  per operation.
Moreover, each operation does $O(\log p)$ CAS instructions in the worst case, whereas previous
lock-free queues use 
$\Omega(p)$ CAS instructions, even in an amortized sense.
Both versions of our queue use single-word CAS on reasonably-sized 
words.
We assume that a word is large enough to store an item to be enqueued (or at least a pointer to it).  We also assume that the number of operations performed on the queue can be stored (in binary) in $O(1)$ words.
This is analogous to the assumption for the classical RAM model that the number of bits per word is logarithmic in the problem size.
For the space-bounded version, we unlink unneeded objects from our data structure.
We do not address the orthogonal problem of reclaiming memory; we assume a safe
garbage collector, such as the highly optimized one that Java provides.

Our queue uses a binary tree, called the \emph{\ordering\ tree}, where each process has its own leaf.
A process adds its operations to its leaf.
As in previous work (e.g., \cite{DBLP:conf/stoc/AfekDT95,DBLP:conf/fsttcs/JayantiP05}), operations are propagated from the leaves up to the root in a cooperative way that ensures wait-freedom
and avoids the CAS retry problem.
Operations in the root are ordered, 
and this order is used to linearize the operations and compute their responses.
\here{Either here or in related work section, talk about previous usage of ordering tree and how ours differs from it}
Explicitly storing  operations in the tree nodes would be too costly.
Instead, we use a novel implicit representation of sets
of operations that allows us to quickly merge two sets from the children of a node,
and quickly access any  operation in a~set.
A preliminary version of this work appeared in~\cite{Nad22}.
\here{Maybe say a little more about techniques used in the implementation, if space permits}


\section{Related Work}

\paragraph{List-based Queues.}
The MS-queue \cite{MS98} is a lock-free queue that has stood the test of time.
The standard Java Concurrency Package includes a version of it.  
See \cite{MS98} for a survey of the early history of concurrent queues.
As mentioned above, the MS-queue suffers from the CAS retry problem because of contention at the front and back of the queue.
Thus, it is lock-free but not wait-free and has an amortized step complexity of $\Theta(p)$ per operation.


Many papers have described ways to reduce contention in the MS-queue.
Moir et al.~\cite{DBLP:conf/spaa/MoirNSS05} 
added an elimination array that allows an enqueue to pass its enqueued value directly
to a concurrent dequeue when the queue is empty.
However, when there are $p$ concurrent enqueues (and no dequeues), the CAS retry problem
is still present.
The baskets queue of
Hoffman, Shalev, and Shavit~\cite{DBLP:conf/opodis/HoffmanSS07} 
attempts to reduce contention by grouping concurrent enqueues into baskets.
An enqueue that fails its CAS is put in the basket with the enqueue that succeeded.
Enqueues within a basket order  themselves without having to access the back of the queue.
However, if $p$ concurrent enqueues are in the same basket
the CAS retry problem occurs when they order themselves using CAS instructions.
Both modifications still have $\Omega(p)$ amortized step complexity.

Kogan and Herlihy \cite{DBLP:conf/podc/KoganH14} improved the MS-queue's performance using \emph{futures}.
Operations return future
objects instead of responses. Later, when an operation's response is needed, it
is evaluated using the future object.
This allows batches of enqueues or dequeues to be done at once on an MS-queue.
However, the implementation satisfies a weaker correctness condition than linearizability.
Milman-Sela et al.~\cite{MKLLP22} extended this approach to allow batches
to mix enqueues and dequeues.
\here{Is this worth saying:
They use some properties of the queue size before and after a batch, similar to a part of our work.}
In the worst case, where operations require their response right away,
batches have size 1, and both of these implementations behave like a standard MS-queue.

In the MS-queue, an enqueue requires two CAS steps.
Ladan-Mozes and Shavit~\cite{DBLP:journals/dc/Ladan-MozesS08}
presented an optimistic  queue implementation
that uses a doubly-linked list to reduce the number of
\texttt{CAS} instructions to one in the best case. 
Pointers in the doubly-linked list can be inconsistent, but are fixed when necessary by traversing the list.
This fixing is rare in practice, but it yields an amortized complexity of $\Omega(qp)$ 
steps per operation in the worst case.

Kogan and Petrank~\cite{DBLP:conf/ppopp/KoganP11} 
used Herlihy's helping
technique~\cite{10.1145/114005.102808} to make the MS-queue
wait-free.
Then, they introduced the 
fast-path slow-path methodology \cite{10.1145/2370036.2145835} for making data structures wait-free:
the fast path has good performance and the slow path guarantees termination.
They applied their methodology to combine the MS-queue (as the fast path)
with their wait-free queue (as the slow path).
Ramalhete and Correia \cite{RC17} added a different helping mechanism to the MS-queue.
Although these approaches can perform well in practice,
the amortized step complexity remains~$\Omega(p)$. 
\Hossein{I get what this sentence means, but from a high overview i think it is inaccurate. Since the step complexity of MS-queue is infinite, so nothing can be added to it and a wait-free queue is always better than a lock-free queue in the worst-case.}
\Eric{I modified the wording.  I also added the material about fast-path slow-path here (since I had forgotten that paper also applies it to a queue), and fixed the paragraph about FPSP method below when discussing array-based queues.}

\paragraph{Array-Based Queues.}
Arrays can be used to implement queues with bounded capacity \cite{DBLP:conf/iceccs/ColvinG05,DBLP:conf/icdcn/Shafiei09,DBLP:conf/spaa/TsigasZ01}.  
Dequeues and enqueues update
indices of the front and back elements using CAS instructions.
Gidenstam, Sundell, and Tsigas~\cite{DBLP:conf/opodis/GidenstamST10} avoid
the capacity constraint by using a linked list of arrays.
These solutions also use $\Omega(p)$ steps per operation due to the CAS retry problem.

Morrison and Afek \cite{DBLP:conf/ppopp/MorrisonA13} also used a linked list of (circular) arrays.
To avoid the CAS retry problem, concurrent operations try to claim spots in an array using fetch\&increment instructions.
(It was shown recently that this implementation can be modified to use single-word CAS instructions rather than double-width CAS \cite{RK23}.)
If livelock between enqueues and a racing dequeue prevent enqueues from claiming a spot,
the enqueues fall back on using a CAS to add a new array to the linked list, 
and the CAS retry problem reappears.
This approach is similar to the fast-path slow-path methodology \cite{10.1145/2370036.2145835}.
Other array-based queues \cite{Nik19,10.1145/3490148.3538572,DBLP:conf/ppopp/YangM16} also used this methodology.
In worst-case executions that use the slow path,
they also take $\Omega(p)$ steps per operation, 
due either to the CAS retry problem or helping mechanisms.



\paragraph{Universal Constructions.}
One can also build a queue using a universal construction \cite{10.1145/114005.102808}.
Jayanti \cite{DBLP:conf/podc/Jayanti98a} observed that
the universal construction of Afek, Dauber, and
Touitou~\cite{DBLP:conf/stoc/AfekDT95} can be modified to use $O(\log p)$ steps per operation, 
assuming that words can store $\Omega(p \log p)$ bits. 
(Thus, in terms of operations on reasonably-sized $O(\log p)$-bit words, their construction would take $\Omega(p\log p)$ steps per operation.)
Fatourou and Kallimanis \cite{FK14} used their own universal construction based on fetch\&add and LL/SC instructions
to implement a queue, but its step complexity is also $\Omega(p)$.
\here{Double check this.}

\paragraph{Restricted Queues.}
David gave the first sublinear-time queue
\cite{DBLP:conf/wdag/David04}, but it works only for a single enqueuer.
It uses fetch\&increment and swap  instructions and takes $O(1)$ steps per operation, but
uses unbounded memory.  Bounding the space increases the steps per operation to $\Omega(p)$.
Jayanti and Petrovic gave a wait-free polylogarithmic
queue~\cite{DBLP:conf/fsttcs/JayantiP05}, but only for a single dequeuer. 
Our \ordering\ tree is similar to the tree structure they use to agree on a linearization ordering.
Concurrently with our work, which first appeared in \cite{Nad22}, Johnen, Khattabi and Milani \cite{JKM23} built on \cite{DBLP:conf/fsttcs/JayantiP05} to give a wait-free queue  
that achieves $O(\log p)$ steps for enqueue operations but fails to achieve polylogarithmic step complexity for dequeues: their dequeue operations take $O(k \log p)$ steps if there are $k$ dequeuers.

\here{Should we mention https://arxiv.org/abs/2103.11926 , which seems to have a queue for 2 enqueuers and 2 dequeuers ?  May not be worth mentioning because then $p$ is a constant, so complexity in terms of $p$ doesn't make sense.}

\paragraph{Other Primitives.}
Khanchandani and Wattenhofer \cite{KW18} gave a wait-free queue 
with $O(\sqrt{p})$ step complexity using non-standard synchronization primitives
called half-increment and half-max, which can be viewed as 
double-word read-modify-write operations.
They use this as evidence that their primitives can be more efficient than CAS
since previous CAS-based queues all required $\Omega(p)$ step complexity.
Our new implementation counters this argument.

\paragraph{Fetch\&Increment Objects.}
Ellen, Ramachandran and Woelfel \cite{ERW12} gave an implementation of 
fetch\&increment objects that uses a polylogarithmic number of steps per operation.
Like our queue, they also use a tree structure similar to the universal construction of 
\cite{DBLP:conf/stoc/AfekDT95} to keep track of the operations
that have been performed.
However, our construction requires more intricate
data structures to represent sets of operations, since a queue's state cannot be represented as succinctly
as the single-word state of a fetch\&increment object.
Ellen and Woelfel \cite{10.1007/978-3-642-41527-2_20} gave an improved implementation of fetch\&increment with better step complexity.


\section{Queue Implementation} \label{DescriptQ}

\subsection{Overview}
Our \emph{\ordering\ tree} data structure is used to agree on a total ordering of the operations performed on the queue.
It is a static binary tree of height $\ceil{\log_2 p}$ with one leaf 
for each process. 
Each tree node  stores an array of \emph{blocks}, where each block represents a 
sequence of enqueues and a sequence of dequeues.
See Figure \ref{orderingtree} for an example.
In this section, we use an infinite array of blocks in each node.
Section \ref{reducing} describes how to replace the infinite array by a representation that uses bounded space.

To perform an operation on the queue, a process $P$ appends a new block containing that  
operation to the \fld{blocks} array in $P$'s leaf.
Then, $P$ attempts to propagate the operation to each node along the path from that leaf to the root of the tree.
We shall define a total order on all operations that have been propagated to the root, which 
will serve as the linearization ordering of the operations.

\Eric{Reworded following parag after Hossein's comments of Jan 4}
To propagate operations from a node $\var{v}$'s children to $\var{v}$, $P$ first observes
the blocks in both of $\var{v}$'s children that are not already in $\var{v}$,
creates a new block by combining information from those blocks, and attempts to append this 
new block to $\var{v}$'s \fld{blocks} array using a \op{CAS}.
Following \cite{DBLP:conf/fsttcs/JayantiP05}, we call this a 3-step sequence a
\op{Refresh} on $\var{v}$. 
A \op{Refresh}'s \op{CAS} may fail if there is another concurrent \op{Refresh} on~$\var{v}$.
However, since a successful \op{Refresh} propagates multiple pending operations 
from $\var{v}$'s children to $\var{v}$,
we can prove that if two \op{Refresh}es by $P$ on $\var{v}$ fail,
then $P$'s operation has been propagated to $\var{v}$ by some other process, so $P$ can continue 
onwards towards the~root.

Now suppose $P$'s operation has been propagated all the way to the root.
If $P$'s operation is an enqueue, it has obtained a place in the linearization ordering and can terminate.
If $P$'s operation is a dequeue, $P$ must use information in the tree to compute the value that the
dequeue must return.  To do this, $P$ first determines which block in the root contains its dequeue
(since the dequeue may have been propagated to the root by some other process).
$P$ does this by finding the dequeue's location in each node along the path from the leaf to the root.
Then, $P$ determines whether the queue is empty when its dequeue is linearized. 
If so, it returns \nl\ and we call it a \emph{null dequeue}.
If not, $P$ computes the rank $r$ of its dequeue among all non-null dequeues in the linearization ordering.  (We say that the $r$th element in a sequence has \emph{rank} $r$ within that sequence.)
$P$ then returns the value of the $r$th enqueue in the linearization.

We must choose what to store in each block so that the following tasks can be done efficiently.
\begin{enumerate}[label={(T\arabic*)}]
\item
\label{construct}
Construct a block for node $\var{v}$ that represents the operations in consecutive blocks in $\var{v}$'s children, as required for a \op{Refresh}.
\item
\label{findinroot}
Given a dequeue in a leaf that has been propagated to the root, find that operation's position in the root's \fld{blocks} array.
\item
\label{findrank}
Given a dequeue's position in the root, decide if it is a null dequeue (i.e., if the queue is empty when it is linearized)
or determine the rank $r$ of the enqueue whose value it returns.
\item
\label{findenqueue}
Find the $r$th enqueue in the linearization ordering.
\end{enumerate}
Since these tasks depend on the linearization ordering, we describe that ordering next.

\begin{figure*}[t]
\input{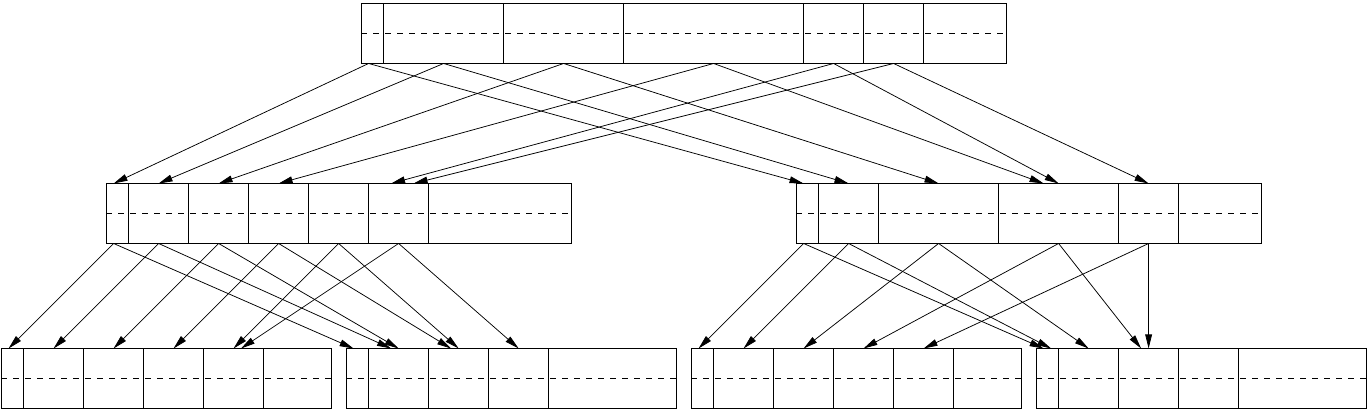_t}
\caption{An example \ordering\ tree with four processes. 
We show explicitly the enqueue sequence and dequeue sequence represented by each block in the \fld{blocks} arrays of the seven nodes.  The leftmost element of each \fld{blocks} array is a dummy block.
Arrows represent the indices stored in \eleft\ and \eright\ fields of blocks (as described in Section \ref{sec:fields}).
The fourth process's Deq\sub{6} is still propagating towards the root.
The linearization order for this tree is
Enq(a) Enq(e) Deq\sub{2} $\mid$ Enq(b) Deq\sub{4} Deq\sub{5} $\mid$ Enq(d) Enq(f) Enq(h) Deq\sub{1} $\mid$ Enq(c) Deq\sub{3} $\mid$ Enq(g), where vertical bars indicate boundaries of blocks in the root.\label{orderingtree}}
\medskip
\input{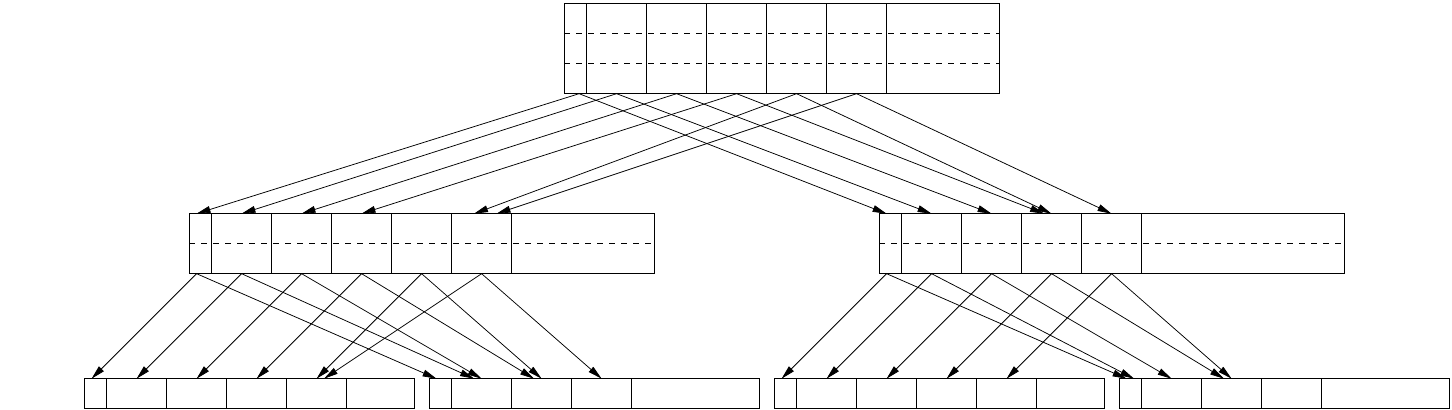_t}
\caption{\label{implicit}The actual, implicit representation of the tree shown in Figure \ref{orderingtree}.
The leaf blocks simply show the \fld{element} field.
Internal blocks show the \fld{sum\sub{enq}} and \fld{sum\sub{deq}} fields,
and \eleft\ and \eright\ fields are shown using arrows as in Figure \ref{orderingtree}.
Root blocks also have the additional \fld{size} field.
The \fld{super} field is not shown.}
\end{figure*}

\subsection{Linearization Ordering}

Performing a double \op{Refresh} at each node along the path from the leaf to the root ensures 
a block containing the operation is appended to the root before the operation completes.
So, if an operation $op_1$ terminates before another operation $op_2$ begins, 
$op_1$ will be in an earlier block than $op_2$ in the root's blocks array.
Thus, we linearize operations according to the block they belong to in the root's array.
We can choose how to order operations in the same block, since they must be concurrent.

Each block in a leaf represents one operation.
Each block $B$ in an internal node $\var{v}$ results from merging
several consecutive blocks from each of $\var{v}$'s children.
The merged blocks in $\var{v}$'s children are called the \emph{direct subblocks} of $B$.
A block $B'$ is a \emph{subblock} of $B$ if it is a direct subblock of $B$
or a subblock of a direct subblock of $B$.
A block $B$ represents the set of operations in all of $B$'s subblocks in leaves of the tree.
The operations propagated by a \op{Refresh} are all pending when the \op{Refresh} occurs,
so there is at most one operation per process.
Hence, a block represents at most $p$ operations in total.  
Moreover, we never append empty blocks, so 
each block represents at least one operation and it follows that a block can have at most $p$ direct subblocks.

As mentioned above, we are free to order operations within a block however we like.
We order the enqueues and dequeues separately, and put the 
operations propagated from the left child before the operations from the right child.
More formally, we inductively define sequences $E(B)$ and $D(B)$ of the enqueues and dequeues
represented by a block $B$.
If $B$ is a block in a leaf representing an enqueue operation, its enqueue sequence $E(B)$ is that operation
and its dequeue sequence $D(B)$ is empty.  If $B$ is a block in a leaf representing a dequeue, $D(B)$ is that single operation and $E(B)$ is empty.
If $B$ is a block in an internal node $\var{v}$ with direct subblocks $B^L_1, \ldots, B^L_\ell$ from 
$\var{v}$'s left child
and $B^R_1,\ldots,B^R_r$ from $\var{v}$'s right child, then $B$'s operation sequences are defined by the concatenations 
\begin{eqnarray}
E(B) &=& E(B^L_1)\cdots E(B^L_\ell)\cdot E(B^R_1) \cdots E(B^R_r) \mbox{ and }\nonumber\\
D(B) &=& D(B^L_1)\cdots D(B^L_\ell)\cdot D(B^R_1) \cdots D(B^R_r)\label{defSeqs}
\end{eqnarray}
We say the block $B$ \emph{contains} the operations in $E(B)$ and $D(B)$.

When linearizing the operations propagated to the root, we must
choose how to order operations within a block.  
We choose to put
each block's enqueues before its dequeues.
Thus, if the root's blocks array contains blocks $B_1, \ldots, B_k$, the 
linearization ordering~is 
\begin{equation}
L=E(B_1)\cdot D(B_1) \cdot E(B_2) \cdot D(B_2) \cdots E(B_k) \cdot D(B_k).
\label{linearization}
\end{equation}

\subsection{Designing a Block Representation to Solve Tasks \ref{construct} to \ref{findenqueue}}
\label{sec:fields}

\renewcommand{\algorithmiccomment}[1]{\hfill\eqparbox{COMMENTSINGLE}{\com\ #1}}
\begin{figure}
\begin{algorithmic}
\setcounter{ALG@line}{1}
\Statex \linecomment{Shared variable}
\begin{itemize}
\item \typ{Node} \var{root} \Comment{root of binary tree of \tt{Node}s with one leaf per process}
\end{itemize}

\Statex \linecomment{Thread-local variable}
\begin{itemize}
\item \typ{Node} \var{leaf} \Comment{process's leaf in the tree}
\end{itemize}

\Statex $\blacktriangleright$ \typ{Node}
\begin{itemize}
\item \typ{Node} \fld{left}, \fld{right}, \fld{parent} \Comment{tree pointers initialized  when creating the tree}
\item \typ{Block}[0..$\infty$] \fld{blocks} \Comment{blocks that have been propagated to this node;}\linebreak
	\mbox{ }\Comment{\var{blocks}[0] is empty block whose integer fields are 0}
\item \typ{int} \head \Comment{position to append next \block\ to \fld{blocks}, initially 1}
\end{itemize}

\Statex $\blacktriangleright$ \typ{Block} 

\begin{itemize}
  	\item \typ{int} \fld{sum\sub{enq}, sum\sub{deq}}
  		\Comment{number of enqueues, dequeues in \fld{blocks} array}\linebreak
		\mbox{ }\Comment{up to this block (inclusive)}
  	\item \typ{int} \fld{super}
  		\Comment{approximate index of superblock in \fld{parent.blocks}}
	\item[\com] Blocks in internal \nodes\ have the following additional fields
	\begin{itemize}[leftmargin=3mm]
		\item[$\bullet$] \typ{int} \eleft, \eright
  		\Comment{index of last direct subblock in the left and right child}
	\end{itemize}
  	\item[\com] Blocks in leaf \nodes\ have the following additional field
  	\begin{itemize}[leftmargin=3mm]
		\item[$\bullet$] \typ{Object} \fld{element}
  		\Comment{\var{x} for \opa{Enqueue}{x} operation; otherwise \nl}
	\end{itemize}
	\item[\com] Blocks in the root \node\ have the following additional field
	\begin{itemize}[leftmargin=3mm]
		\item[$\bullet$] \typ{int} \size%
  		\Comment{size of queue after performing all operations up }\linebreak
		\mbox{ }\Comment{to the end of this block}
	\end{itemize}
\end{itemize}




\end{algorithmic}
\caption{Objects used in the \ordering\ tree data structure.\label{object-fields}\Eric{Fix indentation}}
\end{figure}

Each \node\ of the \ordering\ tree has an infinite array called \fld{blocks}.
To simplify the code, \fld{blocks}[0] is initialized with an empty \block\ $B_0$, 
where $E(B_0)$ and $D(B_0)$ are empty sequences.
Each \node's \fld{head} index  stores the position in the \fld{blocks} array to be used
for the next attempt to append a \block.

If a block contained an explicit representation of its sequences of enqueues and dequeues,
it would take $\Omega(p)$ time to construct a block, which would be too slow for task \ref{construct}.
Instead, the block stores an implicit representation of the sequences.
We now explain how we designed the fields for this implicit representation. 
Refer to Figure \ref{implicit} for an example showing how the tree in Figure \ref{orderingtree} is actually represented, and Figure \ref{object-fields} for the definitions of the fields of \blocks\ and \nodes.

A block in a leaf represents a single enqueue or dequeue.  The block's \fld{element} field stores the value
enqueued if the operation is an enqueue, or \nl\ if the operation is a dequeue.

\here{If space permits, we might want to add some examples in the following paragraphs that refer back to Figure \ref{implicit}.}

Each block in an internal \node\ $\var{v}$ has fields \eleft\ and \eright\ that store the indices of the block's last direct subblock in $\var{v}$'s left and right child.  
Thus, the direct subblocks of $\var{v}.\fld{blocks}[b]$ are
\begin{eqnarray}\label{defsubblock}
\var{v}.\fld{left.blocks}[\var{v}.\fld{blocks}[b\!-\!1].\fld{end\sub{left}}\!+\!1..\var{v}.\fld{blocks}[b].\fld{end\sub{left}}] \mbox{ and}\nonumber\\
\var{v}.\fld{right.blocks}[\var{v}.\fld{blocks}[b\!-\!1].\fld{end\sub{right}}\!+\!1..\var{v}.\fld{blocks}[b].\fld{end\sub{right}}].\!
\end{eqnarray}
The \eleft\ and \eright\ fields allow us to navigate to a block's direct subblocks.
Blocks also store some prefix sums:
$\var{v}.\fld{blocks}[b]$ has two fields \fld{sum\sub{enq}} and \fld{sum\sub{deq}}
that store the total numbers of enqueues and dequeues in $\var{v}.\fld{blocks}[1..\var{b}]$.
We use these to search for a particular operation.
For example, consider finding the $r$th enqueue $E_r$ in the linearization.
A binary search for $r$ on \fld{sum\sub{enq}} fields of the root's blocks 
finds the block  containing $E_r$.
If we know a block $B$ in a \node\ $\var{v}$ contains $E_r$,
we can use the \fld{sum\sub{enq}} field again to determine which child of $\var{v}$ contains $E_r$
and then do a binary search
among the direct subblocks of $B$ in that child.
Thus, we work our way down the tree until we find the leaf block that  stores 
$E_r$ explicitly.
We shall show that the binary search in the root can be done in $O(\log p + \log q)$ steps,
and the binary search within each other \node\ along the path to a leaf takes $O(\log p)$ steps,
for a total of $O(\log^2 p + \log q)$ steps for task \ref{findenqueue}.

A block is called the \emph{superblock} of all of its direct subblocks.
To facilitate task \ref{findinroot}, each block $B$ has a field \fld{super} that contains
the (approximate) index of its superblock in the parent \node's \fld{blocks} array (it may differ from the true index by 1).
This allows a process to determine the true location of the superblock by checking the \eleft\ or \eright\ values of just two \blocks\ in the parent \node.
Thus, starting from an operation in a leaf's block, one can use these indices to track the 
operation  up the path to the root, and determine the operation's location in a root block
in $O(\log p)$ time.

Now consider task \ref{findrank}.
To determine whether the queue is empty when a dequeue occurs,
each block in the root has a \fld{size} field storing the number of elements
in the queue after all operations in the linearization up to that block (inclusive) 
have been done.
We can  determine which dequeues in a block $B_d$ in the root are null dequeues using
$B_{d-1}.\fld{size}$, which is the size of the queue just before $B_d$'s operations, and the number of enqueues and dequeues in $B_d$.
Moreover, the total number of non-null dequeues in blocks $B_1, \ldots, B_{d-1}$ 
is $B_{d-1}.\fld{sum\sub{enq}}-B_{d-1}.\fld{size}$.
We can use this information to determine the
rank of a non-null dequeue in $B_d$ among all non-null dequeues in the linearization, which is the rank  (among all enqueues) of the enqueue
whose value the dequeue should return.

Having defined the fields required for tasks \ref{findinroot}, \ref{findrank} and \ref{findenqueue},
we can easily see how to construct a new block $B$ during a \op{Refresh} in $O(1)$ time.
A \op{Refresh} on \node\ $\var{v}$ reads the values $h_{\ell}$ and~$h_{r}$ of the \fld{head} fields of $\var{v}$'s children and stores 
$h_{\ell}-1$ and $h_{r}-1$ in $B.\eleft$ and $B.\eright$.
Then, we can compute 
\begin{eqnarray*}
B.\fld{sum\sub{enq}}&=&\var{v}.\fld{left}.\fld{blocks}[B.\eleft].\fld{sum\sub{enq}} \\
&&+ \var{v}.\fld{right}.\fld{blocks}[B.\eright].\fld{sum\sub{enq}}.
\end{eqnarray*}
For a block $B$ in the root, $B.\fld{size}$ is computed using the \fld{size} field of the previous block $B'$ and
the number of enqueues and dequeues in~$B$:
\begin{eqnarray*}
B.\fld{size} &=& \max(0, B'.\fld{size} + (B.\fld{sum\sub{enq}}-B'.\fld{sum\sub{enq}})\\ 
			&& \hspace*{19mm} - (B.\fld{sum\sub{deq}} - B'.\fld{sum\sub{deq}})).	
\end{eqnarray*}

The only remaining field is $B.\fld{super}$.  When the block 
$B$ is created for a \node\ $\var{v}$, we do not yet know where its
superblock will eventually be installed in \var{v}'s parent.
So, we leave $B.\fld{super}$ blank.  
Soon after $B$ is installed,
some process will
set \var{B}.\fld{super} to a value read from the \fld{head} field of \var{v}'s parent.
We shall show that this happens soon enough that $B.\fld{super}$ can differ from the true index of $B'$
by at most 1.

\subsection{Details of the Implementation}

We now discuss the queue implementation in more detail.  Pseudocode is provided in Figure \ref{pseudocode1}.

\renewcommand{\algorithmiccomment}[1]{\hfill\eqparbox{COMMENTDOUBLE}{\com\ #1}}

\here{For consistency of notation, perhaps change $n$ to $B$ in code for Enqueue, Dequeue.  Maybe also new to B for CreateBlock, Refresh.}

\begin{figure*}
\begin{minipage}[t]{0.405\textwidth}
\begin{algorithmic}[1]
\setcounter{ALG@line}{0}

\Function{void}{Enqueue}{\typ{Object} \var{e}} 
    \State \hangbox{let \var{B} be a new \typ{Block} with \fld{element} \assign\ \var{e},\\
		$\fld{sum\sub{enq}} \assign\ \var{leaf.}\fld{blocks}[\var{leaf.}\head-1].\fld{sum\sub{enq}}+1$,\\
		$\fld{sum\sub{deq}} \assign\ \var{leaf.}\fld{blocks}[\var{leaf.}\head-1].\fld{sum\sub{deq}}$}\label{enqNew}
    \State \Call{Append}{\var{B}}
\EndFunction{Enqueue}

\spac

\Function{Object}{Dequeue()}{} 
    \State \hangbox{let \var{B} be a new \typ{Block} with \fld{element} \assign\ \nl,\\
	    $\fld{sum\sub{enq}} \assign\ \var{leaf.}\fld{blocks}[\var{leaf.}\head-1].\fld{sum\sub{enq}}$,\\
	    $\fld{sum\sub{deq}} \assign\ \var{leaf.}\fld{blocks}[\var{leaf.}\head-1].\fld{sum\sub{deq}}+1$}\label{deqNew}
    \State \Call{Append}{\var{B}}
    \State $\langle \var{b}, \var{i}\rangle$ \assign\ \Call{IndexDequeue}{\var{leaf}, $\var{leaf.}\head-1$, $1$}\label{invokeIndexDequeue}
    \State \Return{ \Call{FindResponse}{\var{b, i}}}\label{deqRest}
\EndFunction{Dequeue}

\spac

\Function{void}{Append}{\typ{Block} \var{B}} \com\ append block to leaf and propagate to root
    \State \var{leaf.}\fld{blocks}[\var{leaf.}\head] \assign\ \var{B}\label{appendLeaf}
    \State $\var{leaf.}\head\ \assign\ \var{leaf.}\head+1$ \label{appendEnd} 
    \State \Call{Propagate}{\var{leaf.}\fld{parent}} 
\EndFunction{Append}

\spac

\Function{void}{Propagate}{\typ{Node} \var{v}} \com\ propagate blocks from \var{v}'s children to root
    \If{\bf{not} \Call{Refresh}{\var{v}}} \label{firstRefresh}  \hfill \com\ double refresh
        \State \Call{Refresh}{\var{v}} \label{secondRefresh}
    \EndIf
    \If{$\var{v} \neq \var{root}$} \hfill \com\ recurse up tree
        \State \Call{Propagate}{\var{v}.\fld{parent}}
    \EndIf
\EndFunction{Propagate}

\spac

\Function{boolean}{Refresh}{\typ{Node} \var{v}} \com\ try to append a new block to \var{v}.\fld{blocks}
    \State \var{h} \assign\ \var{v}.\head \label{readHead}
    \ForEach{\fld{dir} {\keywordfont{in}} $\{\fld{left, right}\}$} \label{startHelpChild1}
        \State \var{childHead} \assign\ \var{v}.\fld{dir}.\head \label{readChildHead}
        \If{\var{v}.\fld{dir.blocks}[\var{childHead}] $\neq$ \nl} \label{ifHeadnotNull}
            \State \Call{Advance}{\var{v}.\fld{dir}, \var{childHead}} \label{helpAdvance}
        \EndIf
    \EndFor \label{endHelpChild1}
    \State \var{new} \assign\ \Call{CreateBlock}{\var{v, h}} \label{invokeCreateBlock}
    \If{\var{new = \nl}} \Return{\tr} \label{addOP} 
	\Else
	    \State \var{result} \assign\ \Call{CAS}{\var{v}.\fld{blocks}[\var{h}], \nl, \var{new}} \label{cas}
    	\State \Call{Advance}{\var{v, h}}\label{advance}
    	\State \Return{ \var{result}}
	\EndIf
\EndFunction{Refresh}

\spac

\Function{Block}{CreateBlock}{\typ{Node} \var{v}, \typ{int} \var{i}} 
    \State \linecomment{create new block for a \op{Refresh} to install in \var{v}.\fld{blocks}[\var{i}]}
    \State let \var{new} be a new \typ{Block} \label{initNewBlock}
    \State \var{new}.\eleft \assign\ $\var{v}.\fld{left}.\head - 1$\label{createEndLeft}
    \State \var{new}.\eright \assign\ $\var{v}.\fld{right}.\head - 1$\label{createEndRight}
	\State \hangbox{\var{new}.\fld{sum\sub{enq}} \assign\ \var{v}.\fld{left.blocks}[\var{new}.\eleft].\fld{sum\sub{enq}} + \\
			\hspace*{11mm}\var{v}.\fld{right.blocks}[\var{new}.\eright].\fld{sum\sub{enq}}}\label{createSumEnq}
	\State \hangbox{\var{new}.\fld{sum\sub{deq}} \assign\ \var{v}.\fld{left.blocks}[\var{new}.\eleft].\fld{sum\sub{deq}} + \\
			\hspace*{11mm}\var{v}.\fld{right.blocks}[\var{new}.\eright].\fld{sum\sub{deq}}}\label{createSumDeq}
    \State \var{num\sub{enq}} \assign\ $\var{new}.\fld{sum\sub{enq}} - \var{v}.\fld{blocks}[\var{i}-1].\fld{sum\sub{enq}}$\label{computeNumEnq}
    \State \var{num\sub{deq}} \assign\ $\var{new}.\fld{sum\sub{deq}} - \var{v}.\fld{blocks}[\var{i}-1].\fld{sum\sub{deq}}$
    \If{$\var{v} = \var{root}$}
        \State \hangbox{\var{new}.\fld{size} \assign\ max(0, $\var{v}.\fld{blocks}[\var{i}-1].\size\ + \var{num\sub{enq}} - \var{num\sub{deq}}$)}\label{computeLength}
    \EndIf
    \If{$\var{num\sub{enq}} + \var{num\sub{deq}} = 0$}\label{testEmpty}
        \State \Return \nl \hfill \com\ no blocks need to be propagated to \var{v}
    \Else
        \State \Return \var{new}
    \EndIf
\EndFunction{CreateBlock}

\end{algorithmic}
\end{minipage}
\begin{minipage}[t]{0.585\textwidth}

\begin{algorithmic}[1]
\setcounter{ALG@line}{57}
\Function{void}{Advance}{\typ{Node} \var{v}, \typ{int} \var{h}} \com\ set \var{v}.\fld{blocks}[\var{h}].\fld{super} and increment \var{v}.\fld{head} from \var{h} to $\var{h}+1$
    \If{$\var{v}\neq \var{root}$}
	    \State \var{h\sub{p}} \assign\ \var{v}.\fld{parent}.\head \label{readParentHead}
    	\State \Call{CAS}{\var{v}.\fld{blocks}[\var{h}].\fld{super}, \nl, \var{h\sub{p}}} \label{setSuper1}
	\EndIf
    \State \Call{CAS}{\var{v}.\head, \var{h}, \var{h}+1} \label{incrementHead}
\EndFunction{Advance}

\spac

\Function{$\langle\typ{int}, \typ{int}\rangle$}{IndexDequeue}{\typ{Node} \var{v}, \typ{int} \var{b}, \typ{int} \var{i}} \com\ return $\langle\var{b}', \var{i}'\rangle$ such that \var{i}th dequeue in
    \State \linecomment{$D(\var{v}.\fld{blocks}[\var{b}])$ is $(\var{i}')$th dequeue of $D(\var{root}.\fld{blocks}[\var{b}'])$}
    \State \linecomment{Precondition: $\var{v}.\fld{blocks}[\var{b}]$ is not \nl, was propagated to root, and contains at least}
    \State \linecomment{\var{i} dequeues}
    \If{$\var{v} = \var{root}$} \Return $\langle\var{b, i}\rangle$ \label{indexBaseCase}
    \Else
	    \State \fld{dir} \assign\ (\var{v}.\fld{parent.left} = \var{v} ? \fld{left} : \fld{right}) 
    	\State \var{sup} \assign\ \var{v}.\fld{blocks}[\var{b}].\fld{super}\label{idsup1}
	    \If{$\var{b} > \var{v}.\fld{parent.blocks}[\var{sup}].\fld{end\sub{dir}}$} \var{sup} \assign\ $\var{sup}+1$\label{supertest}\label{idsup2}
	    \EndIf\label{idsup3}
	    \State \linecomment{compute index \var{i} of dequeue in superblock}
	    \State \hangbox{\var{i} += $\var{v}.\fld{blocks}[\var{b}-1].\fld{sum\sub{deq}} -$ 
	    		$\var{v}.\fld{blocks}[\var{v}.\fld{parent.blocks}[\var{sup}-1].\edir].\fld{sum\sub{deq}}$}\label{computeISuperStart}
        \If{$\fld{dir} = \fld{right}$} 
        	\State \hangbox{\var{i} += $\var{v}.\fld{blocks}[\var{v}.\fld{parent.blocks}[\var{sup}].\eleft].\fld{sum\sub{deq}} - \mbox{ }$\\
					$\var{v}.\fld{blocks}[\var{v}.\fld{parent.blocks}[\var{sup}-1].\eleft].\fld{sum\sub{deq}}$}\label{considerLeftBeforeRight}
        \EndIf \label{computeISuperEnd}
        \State \Return\Call{IndexDequeue}{\var{v}.\fld{parent}, \var{sup}, \var{i}}
    \EndIf
\EndFunction{IndexDequeue}

\spac

\Function{element}{FindResponse}{\typ{int} \var{b}, \typ{int} \var{i}} \com\ find response to \var{i}th dequeue in $D(\var{root}.\fld{blocks}[\var{b}])$
    \State \linecomment{Precondition:  $1\leq i\leq |D(\var{root}.\fld{blocks}[\var{b}])|$}
    \State \hangbox{\var{num\sub{enq}} \assign\ $\var{root}.\fld{blocks}[\var{b}].\fld{sum\sub{enq}} - \var{root}.\fld{blocks}[\var{b}-1].\fld{sum\sub{enq}}$}\label{FRNum}
    \If{$\var{root}.\fld{blocks}[\var{b}-1].\size + \var{num\sub{enq}} < \var{i}$}\label{checkEmpty}
        \State \Return \nl \hfill \com\ queue is empty when dequeue occurs\label{returnNull}
    \Else \ \linecomment{response is the \var{e}th enqueue in the root}
        \State \var{e} \assign\ \var{i} + \var{root}.\fld{blocks}[\var{b}-1].\fld{sum\sub{enq}} - 
			\var{root}.\fld{blocks}[\var{b}-1].\size\label{computeE}
		\State \linecomment{compute enqueue's block using binary search}
		\State find min $b_e\leq \var{b}$ with $\var{root}.\fld{blocks}[b_e].\fld{sum\sub{enq}} \geq \var{e}$\label{FRb}
		\State \linecomment{find rank of enqueue within its block}
		\State $i_e \assign\ \var{e} - \var{root}.\fld{blocks}[b_e-1].\fld{sum\sub{enq}}$\label{FRi}
        \State \Return \Call{GetEnqueue}{\var{root}, $b_e$, $i_e$}\label{findAnswer}
    \EndIf
\EndFunction{FindResponse}

\spac 

\Function{element}{GetEnqueue}{\typ{Node} \var{v}, \typ{int} \var{b}, \typ{int} \var{i}} \Comment{returns argument of \var{i}th enqueue in $E(\var{v}.\fld{blocks}[\var{b}])$}
    \State \linecomment{Preconditions: $\var{i}\geq 1$ and \var{v}.\fld{blocks}[\var{b}] is non-\nl\ and contains at least \var{i} enqueues}
	
    \If{\var{v} is a leaf node} \Return \var{v}.\fld{blocks}[\var{b}].\fld{element} \label{getBaseCase}
    \Else 
        \State \var{sum\sub{left}} \assign\ \var{v}.\fld{left.blocks}[\var{v}.\fld{blocks}[\var{b}].\eleft].\fld{sum\sub{enq}} 
        \State \linecomment{\var{sum\sub{left}} is the number of enqueues in \var{v}.\fld{blocks}[1..$\var{b}$] from \var{v}'s left child}
        \State \var{prev\sub{left}} \assign\ \var{v}.\fld{left.blocks}[\var{v}.\fld{blocks}[$\var{b}-1$].\eleft].\fld{sum\sub{enq}} 
        \State \linecomment{\var{prev\sub{left}} is the number of enqueues in \var{v}.\fld{blocks}[1..$\var{b}-1$] from \var{v}'s left child}
        \State \var{prev\sub{right}} \assign\ \var{v}.\fld{right.blocks}[\var{v}.\fld{blocks}[$\var{b}-1$].\eright].\fld{sum\sub{enq}} 
        \State \linecomment{\var{prev\sub{right}} is the number of enqueues in \var{v}.\fld{blocks}[1..$\var{b}-1$] from \var{v}'s right child}
        \If{$\var{i} \leq \var{sum\sub{left}} - \var{prev\sub{left}}$} \label{leftOrRight} \Comment{required enqueue is in \var{v}.\fld{left}}
            \State \fld{dir} \assign\ \fld{left}
        \Else \Comment{required enqueue is in \var{v}.\fld{right}}
            \State \fld{dir} \assign\ \fld{right}
            \State $\var{i}\ \assign\ \var{i} - (\var{sum\sub{left}} - \var{prev\sub{left}})$
        \EndIf \label{endChooseDir}
        \State \linecomment{Use binary search to find enqueue's block in \var{v}.\fld{dir} and its rank within block}
        \State \hangbox{find minimum $\var{b}'$ in range [\var{v}.\fld{blocks}[$\var{b}-1$].\edir+1..\var{v}.\fld{blocks}[\var{b}].\edir] such that\\
        	 $\var{v}.\fld{dir.blocks}[\var{b}'].\fld{sum\sub{enq}} \geq \var{i} + \var{prev\sub{dir}}$\label{getChild}}
        \State $\var{i}'$ \assign\ $\var{i} - (\var{v}.\fld{dir.blocks}[\var{b}'-1].\fld{sum\sub{enq}} - \var{prev\sub{dir}})$\label{getChildIndex}
        \State \Return\Call{GetEnqueue}{\var{v}.\fld{dir}, $\var{b}'$, $\var{i}'$} \label{getRecurse}
    \EndIf
\EndFunction{GetEnqueue}

\end{algorithmic}
\end{minipage}
\vspace*{-3mm}
\caption{Queue implementation.\label{pseudocode1}}
\end{figure*}

An \opemph{Enqueue}(\var{e}) appends a \block\ to the process's leaf.
The block has $\fld{element}=\var{e}$ to indicate it represents an \op{Enqueue}(\var{e}) operation.
It suffices to propagate the operation to the root and
then use its position in the linearization for future \op{Dequeue}
operations.

A \opemph{Dequeue} also appends a \block\ to the process's leaf.
The block has $\fld{element}=\nl$ to indicate that it represents a \op{Dequeue} operation.
After propagating the operation to the root, it computes
its position in the root using
\op{IndexDequeue} and then computes its response by calling \op{FindResponse}. 

\opemph{Append}(\var{B}) first adds the block \var{B} to the invoking process's leaf.
The leaf's \fld{head} field stores the first empty slot in the leaf's \fld{blocks} array,
so the \op{Append} writes \var{B} there and increments \fld{head}.
Since \op{Append} writes only to the process's own leaf, there cannot be concurrent updates to a leaf.
\op{Append} then calls \op{Propagate} to ensure the operation represented by \var{B} is propagated to the root.

\opemph{Propagate}(\var{v}) guarantees that any blocks that are in \var{v}'s children when \op{Propagate} is invoked are propagated to the root.
It uses the double \op{Refresh} idea described
above and invokes two \op{Refresh}es on \var{v} in Lines
\ref{firstRefresh} and \ref{secondRefresh}. 
If both fail to add a block to \var{v}, it means some other process has done a successful \op{Refresh}
that propagated blocks that were in \var{v}'s children prior to line \ref{firstRefresh} to \var{v}.
Then, \op{Propagate} recurses to \var{v}.\fld{parent} to continue propagating blocks up to the root.  

A \opemph{Refresh} on node \var{v} creates a block representing the new blocks
in \var{v}'s
children and tries to append it to \var{v}.\fld{blocks}. 
Line \ref{readHead} reads \var{v}.\fld{head} into the local variable \var{h}.
Line \ref{invokeCreateBlock} creates the new block to install in \var{v}.\fld{blocks}[\var{h}].
If line \ref{invokeCreateBlock} returns \nl\ instead of a new block, there were no new blocks in \var{v}'s children to propagate to \var{v}, so \op{Refresh} can return true at line \ref{addOP} and terminate.
Otherwise, the CAS at line \ref{cas} tries to install the new block into \var{v}.\fld{blocks}[\var{h}].
Either this CAS succeeds or some other process has installed a  block in this location.
Either way, line \ref{advance} then calls \opemph{Advance} to advance \var{v}'s head index 
from \var{h} to $\var{h}+1$
and fill in the \fld{super} field of the most recently appended block.
The boolean value returned by \op{Refresh} indicates whether its CAS succeeded.
A \op{Refresh} may pause after a successful CAS before calling \op{Advance} at line \ref{advance},
so other processes help keep \fld{head} up to date by  calling \op{Advance}, 
either at line \ref{helpAdvance} during a \op{Refresh} on \var{v}'s parent or line \ref{advance} during a \op{Refresh} on~\var{v}.

\opemph{CreateBlock}(\var{v, i}) is used
by \op{Refresh} to construct a block to be installed in \var{v}.\fld{blocks}[\var{i}].
The \eleft\ and \eright\ fields store the indices of the last blocks appended to \var{v}'s
children, obtained by reading the \fld{head} index in \var{v}'s children.
Since the \fld{sum\sub{enq}} field should store the number of enqueues in
\var{v}.\fld{blocks}[1..\var{i}] and these enqueues come from \var{v}.\fld{left.blocks}[1..\var{new}.\eleft] and \var{v}.\fld{blocks}[1..\var{new}.\eright], line \ref{createSumEnq} sets
\fld{sum\sub{enq}} to the sum of $\var{v}.\fld{left.blocks}[\var{new}.\eleft].\fld{sum\sub{enq}}$ and $\var{v}.\fld{right}.\fld{blocks}[\var{new}.\eright].\fld{sum\sub{enq}}$.
Line \ref{computeNumEnq} sets \var{num\sub{enq}} to the number of enqueues in the new block by
subtracting  the number of enqueues  in \var{v}.\fld{blocks}[$1..\var{i}-1$] from \var{new}.\fld{sum\sub{enq}}.
The values of \var{new}.\fld{sum\sub{deq}} and \var{num\sub{deq}} are computed similarly.
Then, if \var{new}
is going to be installed in the root, line \ref{computeLength} computes the \fld{size} field, which
represents the number of elements in the queue after the operations in the block are performed.
Finally, if the new block contains no operations, \op{CreateBlock} returns \nl\ to indicate
 there is no need to install it.

Once a dequeue is appended to a block of the process's leaf and propagated to the root,
the \opemph{IndexDequeue} routine finds the dequeue's location in the root.
More precisely, \opa{IndexDequeue}{v, b, i}
computes the block in the root and the rank
within that block  of the \var{i}th dequeue of the block \var{B} stored in \var{v}.\fld{blocks}[\var{b}].
Lines \ref{idsup1}--\ref{idsup3} compute the location of $B$'s superblock in \var{v}'s parent, taking into account the fact that $B.\fld{super}$ may be off by one from the superblock's true index.
The arithmetic in lines \ref{computeISuperStart}--\ref{computeISuperEnd} compute the dequeue's 
rank within the superblock's sequence of dequeues, using  (\ref{defSeqs}).


To compute the response of the $i$th \op{Dequeue} in the $b$th block
of the root, \opemph{FindResponse}(\var{b, i}) determines at line \ref{checkEmpty} if the queue is empty.
If not, line \ref{computeE} computes the rank \var{e} of the
\op{Enqueue} whose argument is the \op{Dequeue}'s response. 
A binary search on the \fld{sum\sub{enq}} fields of \var{root}.\fld{blocks} finds the index $b_e$ of the block that contains 
the \var{e}th enqueue.
Since the enqueue is linearized before the dequeue, $b_e\leq b$.  To find the left end of the range for the binary search for $b_e$, we can first do a doubling search \cite{BY76}, comparing \var{e} to the \fld{sum\sub{enq}} fields at indices $b-1, b-2, b-4, b-8, \ldots$.
Then, \op{GetEnqueue} is used to trace down through the tree to find the required enqueue in a leaf.

\opemph{GetEnqueue}(\var{v, b, i}) returns the argument of the
$i$th enqueue in the $b$th block $B$ of \typ{Node} $\var{v}$. 
It recursively finds the location of the enqueue in each node along the path from $v$ to a leaf, which stores the argument explicitly.
\op{GetEnqueue} first determines which child of \var{v} contains the enqueue, and
then finds the range of blocks within that child that are subblocks of $B$ using information stored
in $B$ and the block that precedes $B$ in $v$.
\op{GetEnqueue} finds the exact subblock containing the enqueue using a binary search on the \fld{sum\sub{enq}}
field (line~\ref{getChild}) and proceeds recursively down the tree.


\section{Proof of Correctness}
\label{sec::correctness}

After proving some basic properties in Section \ref{sec::basicProperties},
we show in Section \ref{sec::propagating} that a double refresh at each node
suffices to propagate an operation to the root.
In Section \ref{sec::tracingCorrect} we show \op{GetEnqueue} and \op{IndexDequeue}
correctly navigate through the  tree.
Finally, we prove linearizability in Section \ref{sec::linearizability}.

\subsection{Basic Properties}
\label{sec::basicProperties}

A \typ{Block} object's fields, except for \fld{super}, are immutable:  they are written only 
when the block is created.
Moreover, only a \op{CAS} at line \ref{setSuper1}  modifies  \fld{super}  
(from \nl\ to a non-\nl\ value), so it is changed only once.
Similarly, only a \op{CAS} at line \ref{cas} modifies an element of a node's \fld{blocks} array 
(from \nl\ to a non-\nl\ value), so blocks are permanently added to nodes.
Only a \op{CAS} at line \ref{incrementHead} can update a node's \head\ field by incrementing it,
which implies the following.

\begin{observation} \label{nonDecreasingHead}
For each node \var{v},  \var{v}.\fld{head} is non-decreasing over time.
\end{observation}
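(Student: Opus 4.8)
The plan is to argue directly from the pseudocode that every modification of $\var{v}.\head$ after its initialization can only increase it, so the field is non-decreasing over time.

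First I would confirm, by a line-by-line inspection of Figure \ref{pseudocode1}, that line \ref{incrementHead} inside \op{Advance} is the \emph{only} place where any process writes to $\var{v}.\head$; every other occurrence of $\var{v}.\head$ in the code is a read. The text preceding the observation already asserts exactly this, so this step amounts to a routine check that the claim is exhaustive.

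Next I would analyze that single write. Line \ref{incrementHead} performs $\op{CAS}(\var{v}.\head, \var{h}, \var{h}+1)$, where $\var{h}$ is the value passed to the enclosing \op{Advance} call. By the semantics of \op{CAS}, this instruction atomically replaces the contents of $\var{v}.\head$ by $\var{h}+1$ precisely when the current value equals $\var{h}$, and leaves $\var{v}.\head$ untouched otherwise. Hence each execution of line \ref{incrementHead} either increases $\var{v}.\head$ by exactly one (from $\var{h}$ to $\var{h}+1$) or does not change it at all.

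Since $\var{v}.\head$ begins at its initial value $1$ and every later change to it is such a \op{CAS}, its value never decreases, which is precisely the statement of the observation. There is no genuine obstacle here; the only point requiring care is the exhaustiveness of the claim that line \ref{incrementHead} is the unique writer of $\var{v}.\head$, and that is immediate from the code since no other line assigns to a \head\ field.
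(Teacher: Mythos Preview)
Your approach matches the paper's: the observation is justified in the paper simply by the sentence preceding it, which notes that the \op{CAS} at line \ref{incrementHead} is the only way $\var{v}.\head$ changes, and that \op{CAS} can only increment it.

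However, your exhaustiveness check has a small gap. Line \ref{appendEnd} in \op{Append} executes $\var{leaf}.\head \assign \var{leaf}.\head + 1$, which is a direct write to a \head\ field, not a read. So your claim that ``no other line assigns to a \head\ field'' is false. This does not break the conclusion, since line \ref{appendEnd} is also an increment (and is executed only by the leaf's owner, so there is no interference), but you should include it in your case analysis rather than assert that line \ref{incrementHead} is the unique writer. The paper's own one-line justification is arguably imprecise on the same point, but since you explicitly flag the exhaustiveness check as ``the only point requiring care,'' you should get it right.
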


\begin{observation} \label{lem::headInc}
Let $R$ be an instance of \opa{Refresh}{v} whose call to \op{CreateBlock} returns a non-\nl\ block.  When $R$ terminates, \var{v}.\head\ is strictly greater than the value $R$ reads from it at line \ref{readHead}.
\end{observation}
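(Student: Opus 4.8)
The plan is to trace the execution of $R$ past the point where \op{CreateBlock} returns. Since that call returns a non-\nl\ block, $R$ does not take the early return at line \ref{addOP}; instead it executes the \op{CAS} at line \ref{cas} and then calls $\op{Advance}(\var{v}, \var{h})$ at line \ref{advance}, where $\var{h}$ is the value read from $\var{v}.\head$ at line \ref{readHead}. Inside this \op{Advance}, line \ref{incrementHead} performs $\op{CAS}(\var{v}.\head, \var{h}, \var{h}+1)$. Hence it suffices to show that immediately after this \op{CAS} we have $\var{v}.\head > \var{h}$, and then appeal to Observation \ref{nonDecreasingHead} to conclude that the strict inequality still holds when $R$ terminates.

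First I would observe that, because $R$ reads the value $\var{h}$ from $\var{v}.\head$ at line \ref{readHead}, we have $\var{v}.\head = \var{h}$ at that instant, and therefore $\var{v}.\head \geq \var{h}$ at every later time, by Observation \ref{nonDecreasingHead}. Now I would split on the outcome of the \op{CAS} at line \ref{incrementHead}. If it succeeds, it sets $\var{v}.\head$ to $\var{h}+1$, so $\var{v}.\head > \var{h}$ right afterward. If it fails, then the value of $\var{v}.\head$ at the moment of the \op{CAS} differs from $\var{h}$; combined with $\var{v}.\head \geq \var{h}$, this forces $\var{v}.\head > \var{h}$ at that moment as well. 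In either case $\var{v}.\head > \var{h}$ immediately after line \ref{incrementHead}, and by monotonicity (Observation \ref{nonDecreasingHead}) this persists until $R$ terminates.

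This argument is essentially immediate, so I do not expect a real obstacle. The only points requiring care are (i) confirming that the non-\nl\ return from \op{CreateBlock} guarantees $R$ actually reaches the $\op{Advance}(\var{v}, \var{h})$ call at line \ref{advance} (rather than returning early at line \ref{addOP}), and (ii) handling the failed-\op{CAS} case, which is precisely where the monotonicity of $\var{v}.\head$ supplied by Observation \ref{nonDecreasingHead} does the work.
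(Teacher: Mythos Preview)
Your proposal is correct and follows essentially the same approach as the paper: both arguments observe that $R$ reaches the \op{CAS} at line~\ref{incrementHead} inside \op{Advance}$(\var{v},\var{h})$, after which $\var{v}.\head \neq \var{h}$, and then invoke Observation~\ref{nonDecreasingHead} to conclude $\var{v}.\head > \var{h}$. Your version is simply more explicit about the case split on success/failure of that \op{CAS} and about why the non-\nl\ return from \op{CreateBlock} forces $R$ to reach line~\ref{advance}.
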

\begin{proof}
After $R$'s \op{CAS} at line \ref{incrementHead}, \var{v}.\head\ is no longer equal to the value \var{h}
read at line \ref{readHead}.  The claim follows from \Cref{nonDecreasingHead}.
\end{proof}

Now we show $\var{v}.\fld{blocks}[\var{v}.\head]$ is either the last non-\nl\ block or the first \nl\ block in node $\var{v}$.

\begin{invariant}\label{lem::headPosition} 
For $0 \leq i < \var{v}.\head$, $\var{v}.\fld{blocks}[i]\neq\nl$.  For $i>\var{v}.\head$, $\var{v}.\fld{blocks}[i]=\nl$.
If $\var{v}\neq \var{root}$,  $\var{v}.\fld{blocks}[i].super \neq \nl$ for $0<i<\var{v}.\head$.
\end{invariant}

\begin{proof}
Initially, $\var{v}.\head=1$, $\var{v}.\fld{blocks}[0]\neq\nl$  and $\var{v}.\fld{blocks}[i]=\nl$ for  $i>0$, so the claims~hold.

Assume the claims hold before a change to $\var{v}.\fld{blocks}$, which can be made only
by a successful \op{CAS} at line \ref{cas}.
The \op{CAS} changes $\var{v}.\fld{blocks}[h]$ from \nl\ to a non-\nl\ value.
Since $\var{v}.\fld{blocks}[h]$ is \nl\ before the CAS, $\var{v}.\head \leq h$ by the hypothesis.
Since $h$ was read from $\var{v}.\head$ earlier at line \ref{readHead}, the current value of 
$\var{v}.\head$ is at least $h$ by \Cref{nonDecreasingHead}.
So, $\var{v}.\head=h$ when the \op{CAS} occurs and a change to $\var{v}.\fld{blocks}[\var{v}.\head]$ preserves the invariant.

Now, assume the claim holds before a change to $\var{v}.\head$, which can only be an increment from $h$ to $h+1$
by a successful \op{CAS} at line \ref{incrementHead} of \op{Advance}.
For the first two claims, it suffices to show that $\var{v}.\fld{blocks}[head] \neq \nl$.
\nf{Advance} is called either at line \ref{helpAdvance} 
after testing that $\var{v}.\fld{blocks}[h]\neq\nl$ at line \ref{ifHeadnotNull},
or at line \ref{advance} after the \op{CAS} at line \ref{cas} ensures $\var{v}.\fld{blocks}[h]\neq\nl$.
For the third claim, observe that prior to incrementing $\var{v}.\head$ to $i+1$ at line \ref{incrementHead},
the \op{CAS} at line \ref{setSuper1} ensures that $\var{v}.\fld{blocks}[i].super\neq \nl$.
\end{proof}

It follows that blocks accessed by the \op{Enqueue}, \op{Dequeue} and \op{CreateBlock} routines are non-\nl.

The following two lemmas show that no operation appears in more than one block of the root.
\begin{lemma} \label{lem::headProgress}
 If $b>0$ and $\var{v}.\fld{blocks}[b] \neq \nl$, then
 \begin{eqnarray*}
 \var{v}.\fld{blocks}[b-1].\fld{end\sub{left}} &\leq& \var{v}.\fld{blocks}[b].\fld{end\sub{left}} \mbox{ and}\\
 \var{v}.\fld{blocks}[b-1].\fld{end\sub{right}} &\leq& \var{v}.\fld{blocks}[b].\fld{end\sub{right}}.
 \end{eqnarray*}
\end{lemma}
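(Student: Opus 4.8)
The plan is to prove the two inequalities by pinning down \emph{when} the $\eleft$ (resp.\ $\eright$) field of each block was computed, and then invoking the monotonicity of the child's \fld{head} pointer (\Cref{nonDecreasingHead}); I give the argument for $\eleft$, the one for $\eright$ being identical with \fld{left} replaced by \fld{right}. The base case $b=1$ is immediate: $\var{v}.\fld{blocks}[0]$ is the initial empty block, whose integer fields (including $\eleft$) are $0$, whereas any block produced by \op{CreateBlock} sets $\eleft = \var{v}.\fld{left}.\head - 1 \geq 0$, since \fld{head} starts at $1$ and never decreases.

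For $b \geq 2$, write $B_{b-1} = \var{v}.\fld{blocks}[b-1]$ and $B_b = \var{v}.\fld{blocks}[b]$. Each was placed by a successful \op{CAS} at line \ref{cas} belonging to a \op{Refresh} instance $R_{b-1}$, $R_b$ that read $h=b-1$, $h=b$ respectively at line \ref{readHead} (the CAS target $\var{v}.\fld{blocks}[h]$ forces these values). The $\eleft$ field is immutable once set, and is assigned the value $\var{v}.\fld{left}.\head - 1$ inside \op{CreateBlock} at line \ref{createEndLeft}; hence, by \Cref{nonDecreasingHead}, it suffices to show that $R_{b-1}$ executes line \ref{createEndLeft} strictly before $R_b$ does.

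I would establish this with a short chain of temporal inequalities. $R_{b-1}$ reads $\var{v}.\fld{left}.\head$ (line \ref{createEndLeft}) before installing $B_{b-1}$ with its own \op{CAS} (line \ref{cas}). Because each array slot is written exactly once, $\var{v}.\fld{blocks}[b-1]$ is \nl\ until that install; and the single increment of $\var{v}.\head$ from $b-1$ to $b$ can occur only after some process observes $\var{v}.\fld{blocks}[b-1]\neq\nl$ (guaranteed at line \ref{ifHeadnotNull} before a helping \op{Advance}, or by the \op{CAS} at line \ref{cas} before the \op{Advance} at line \ref{advance}), so this increment happens strictly after $B_{b-1}$ is installed. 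Finally $R_b$ read $\var{v}.\head = b$ at line \ref{readHead}, so that increment preceded this read, which in turn precedes $R_b$'s execution of line \ref{createEndLeft}. Composing these gives the desired ordering, and monotonicity yields $B_{b-1}.\eleft \leq B_b.\eleft$.

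The step I expect to be the main obstacle is this middle link: arguing that $\var{v}.\head$ cannot advance to $b$ before $B_{b-1}$ is installed. This requires combining \Cref{lem::headPosition} (so that reaching head value $b$ presupposes $\var{v}.\fld{blocks}[b-1]\neq\nl$) with the observation that the $b-1\to b$ transition is a unique, well-defined event under CAS semantics, and checking both call sites of \op{Advance} to confirm that the non-\nl\ precondition always holds before the increment.
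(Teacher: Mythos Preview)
Your proof is correct and follows essentially the same approach as the paper: both show that \op{CreateBlock}$(\var{v},b-1)$ terminates before \op{CreateBlock}$(\var{v},b)$ begins, then invoke \Cref{nonDecreasingHead} on the child's \fld{head}. The paper dispatches your ``middle link'' in one line by citing \Cref{lem::headPosition} directly (when $R_b$ reads $\var{v}.\head=b$, the invariant gives $\var{v}.\fld{blocks}[b-1]\neq\nl$ immediately), so your re-examination of the \op{Advance} call sites is unnecessary; on the other hand, your explicit base case $b=1$ cleanly handles the dummy block, which the paper's phrasing (``the \op{CreateBlock}$(\var{v},b-1)$ that created $B'$'') technically glosses over.
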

\begin{proof}
Let $B$ be the block in $\var{v}.\fld{blocks}[b]$.
Before creating $B$ at line \ref{invokeCreateBlock}, the \op{Refresh} that installed $B$
read $b$ from $\var{v}.\head$ at line \ref{readHead}.
At that time, $\var{v}.\fld{blocks}[b-1]$ contained a block $B'$, by \Cref{lem::headPosition}.
Thus, the \op{CreateBlock}($\var{v},b-1$) that created $B'$ terminated before the \op{CreateBlock}($\var{v},b$) that
created $B$ started.
It follows from \Cref{nonDecreasingHead} that the value that 
line \ref{createEndLeft} of \op{CreateBlock}($\var{v},b-1$) stores in $B'.\fld{end\sub{left}}$   
is less than or equal to the value that line \ref{createEndLeft} of \op{CreateBlock}($\var{v},b$) 
stores in $B.\fld{end\sub{left}}$.
Similarly, the values stored in $B'.\eright$ and $B.\eright$ at line \ref{createEndRight} 
satisfy the claim.
\end{proof}

\begin{lemma} \label{lem::subblocksDistinct}
If $B$ and $B'$ are two blocks in nodes at the same depth, their sets of subblocks are disjoint.
\end{lemma}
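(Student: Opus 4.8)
The plan is to split into two cases according to whether $B$ and $B'$ lie in the same node. The key auxiliary fact I would establish first is that every subblock of a block in a node $\var{v}$ lies in a \emph{proper descendant} of $\var{v}$, and hence in a node of strictly smaller height than $\var{v}$ (taking leaves to have height $0$). This is a routine structural induction on the definition of subblock: the direct subblocks of a block in $\var{v}$ lie in $\var{v}$'s children, and a subblock of a direct subblock lies, by induction, in a proper descendant of a child of $\var{v}$, hence in a proper descendant of $\var{v}$.

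When $B$ and $B'$ lie in distinct nodes $\var{v}$ and $\var{v}'$ at the same depth, I would just observe that two distinct nodes at equal depth in a binary tree root disjoint subtrees. By the auxiliary fact all subblocks of $B$ live in the subtree of $\var{v}$ and all subblocks of $B'$ in the subtree of $\var{v}'$, so the two sets are disjoint with nothing further to prove.

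The substance is the same-node case, $B = \var{v}.\fld{blocks}[b]$ and $B' = \var{v}.\fld{blocks}[b']$ with (without loss of generality) $b < b'$; the dummy block $\var{v}.\fld{blocks}[0]$ has no subblocks and may be ignored. I would argue by induction on the height of $\var{v}$, the base case (leaves, which have no subblocks) being trivial. First I show the \emph{direct} subblocks of $B$ and $B'$ are disjoint: by the range description (\ref{defsubblock}), $B$'s direct subblocks in the left child occupy indices $\var{v}.\fld{blocks}[b-1].\eleft+1$ through $\var{v}.\fld{blocks}[b].\eleft$, while $B'$'s begin at index $\var{v}.\fld{blocks}[b'-1].\eleft+1$; since $b\le b'-1$, \Cref{lem::headProgress} gives $\var{v}.\fld{blocks}[b].\eleft \le \var{v}.\fld{blocks}[b'-1].\eleft$, so these index ranges are disjoint, and the symmetric argument on \eright\ handles the right child. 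As the left- and right-child subblocks live in different nodes, no block is a direct subblock of both $B$ and $B'$.

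Finally I would assemble the full subblock sets from the direct subblocks together with their descendants and check the four kinds of cross terms. Direct-versus-direct is the range argument just given. A direct subblock of $B$ sits at height $h-1$ while a subblock of a direct subblock of $B'$ sits at height strictly below $h-1$ by the auxiliary fact, so these cannot coincide (and symmetrically). For the last type, given a direct subblock $C$ of $B$ and a direct subblock $C'$ of $B'$, disjointness of the direct-subblock sets forces $C \neq C'$, and both are blocks at height $h-1$, so the induction hypothesis (the full lemma at height $h-1$, whose distinct-node instances are covered by the subtree argument above) makes their subblock sets disjoint. Combining the cases gives the lemma. I expect the only real difficulty to be organizational: arranging the induction so the recursion into deeper subblocks always compares blocks of equal height and correctly invokes both the same-node and distinct-node parts of the hypothesis, with the height-mismatch observation doing the work on the cross terms.
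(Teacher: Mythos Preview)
Your proof is correct and follows essentially the same approach as the paper: reverse induction on depth (equivalently, induction on height), using \Cref{lem::headProgress} to show that direct subblocks of two blocks in the same node occupy disjoint index ranges, and then recursing. The paper's version is terser—it handles the same-node and different-node cases uniformly inside a single induction and leaves the cross-term depth check implicit—whereas you separate out the different-node case via the subtree observation and spell out the four cross-term cases explicitly; but the underlying argument is the same.
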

\begin{proof}
We prove the lemma by reverse induction on the depth.
If $B$ and $B'$ are in leaves, they have no subblocks, so the claim holds.
Assume the claim holds for nodes at depth $d+1$ and let $B$ and $B'$ be two blocks in nodes at depth $d$.
Consider the direct subblocks of $B$ and $B'$ defined by~(\ref{defsubblock}).
If $B$ and $B'$ are in different nodes at depth $d$, then their direct subblocks are disjoint.
If $B$ and $B'$ are in the same node, it follows from \Cref{lem::headProgress} that their direct subblocks are disjoint.
Either way, their direct subblocks (at depth $d+1$) are disjoint, so the claim follows from the induction hypothesis.
\end{proof}

It follows that each block has at most one superblock.
Moreover, we can now prove each operation is contained in at most one block of each node,
and hence appears at most once in the linearization~$L$.

\here{Might be able to get rid of this corollary and just cite previous lemma instead to save space}
\begin{corollary}\label{lem::noDuplicates}
For  $i\neq j$, $\var{v}.\fld{blocks}[i]$ and $\var{v}.\fld{blocks}[j]$ cannot both contain the same operation.
\end{corollary}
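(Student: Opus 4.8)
The plan is to derive this almost immediately from \Cref{lem::subblocksDistinct}. Since $\var{v}.\fld{blocks}[i]$ and $\var{v}.\fld{blocks}[j]$ are both blocks of the single node $\var{v}$, they lie at the same depth of the tree, so \Cref{lem::subblocksDistinct} tells us their sets of subblocks are disjoint. The only real work is to translate the relation ``$B$ contains the operation $op$'' into a statement about the subblocks of $B$.

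The key observation I would establish is that each operation $op$ is represented by exactly one leaf block $L_{op}$: a process writes blocks only to its own leaf, so there are no concurrent writes to a leaf, and each leaf block represents a single enqueue or dequeue. I would then show, by induction on depth using the recursive definition~(\ref{defSeqs}) of $E(B)$ and $D(B)$, that $op$ is contained in a block $B$ if and only if $L_{op}$ is a subblock of $B$. The base case is a leaf block, where the claim is immediate. The inductive step follows because $E(B)$ and $D(B)$ are concatenations of the $E$ and $D$ sequences of the direct subblocks of $B$, so $op$ appears in $E(B)$ or $D(B)$ exactly when it appears in the corresponding sequence of some direct subblock, which by the induction hypothesis happens exactly when $L_{op}$ is a subblock of that direct subblock, i.e.\ a subblock of $B$.

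With this characterization in hand, the corollary follows by contradiction: if $\var{v}.\fld{blocks}[i]$ and $\var{v}.\fld{blocks}[j]$ both contained $op$, then $L_{op}$ would be a subblock of both, contradicting the disjointness guaranteed by \Cref{lem::subblocksDistinct}.

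I expect the main obstacle to be merely the bookkeeping in the characterization rather than any genuine difficulty. One must be careful that ``subblock'' is taken to include the degenerate case where $B$ is itself a leaf block (so that its only leaf subblock is $B=L_{op}$), and that the uniqueness of $L_{op}$ is justified by the sequential, single-writer nature of each leaf together with the fact that we never append empty blocks. Everything beyond this is a direct appeal to the disjointness lemma.
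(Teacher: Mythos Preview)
Your proposal is correct and follows essentially the same approach as the paper: both reduce the claim to \Cref{lem::subblocksDistinct} via the observation that an operation appears in a unique leaf block and that a block contains an operation exactly when that leaf block is among its (leaf) subblocks. The paper simply asserts this characterization in one sentence, whereas you spell out the induction on depth; your extra care about the degenerate leaf case is well placed, since the paper's definition of ``subblock'' is not reflexive and its proof glosses over this point.
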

\begin{proof}
A block $B$ contains the operations in $B$'s subblocks in leaves of the tree.
An operation by process $P$ appears in just one block of $P$'s leaf, so
an operation 
cannot be in two different leaf blocks. 
By \Cref{lem::subblocksDistinct}, $\var{v}.\fld{blocks}[i]$ and $\var{v}.\fld{blocks}[j]$ have no common subblocks, so the claim follows.
\end{proof}

%


The accuracy of the values stored in the \fld{sum\sub{enq}} and \fld{sum\sub{deq}} fields
on lines \ref{enqNew}, \ref{deqNew}, \ref{createSumEnq} and \ref{createSumDeq} follows easily
from the definition of subblocks.  See Appendix 
\ref{app::tracingDetails} for a detailed proof of \Cref{lem::sum}.

\begin{restatable}{invariant}{sumRes}
\label{lem::sum}
If $B$ is a block stored in $\var{v}.\fld{blocks}[i]$, then
\begin{eqnarray*}
B.\fld{sum\sub{enq}} &=& | E(\var{v}.\fld{blocks}[0])\cdots E(\var{v}.\fld{blocks}[i]) | \mbox{ and}\\
B.\fld{sum\sub{deq}} &=& | D(\var{v}.\fld{blocks}[0])\cdots D(\var{v}.\fld{blocks}[i]) |.
\end{eqnarray*}
\end{restatable}

This allows us to prove that every block a \op{Refresh} installs contains at least one operation.

\begin{corollary}\label{blockNotEmpty}
If a block $B$ is in $\var{v}.\fld{blocks}[i]$ where $i>0$, then $E(B)$ and $D(B)$ are not both empty.
\end{corollary}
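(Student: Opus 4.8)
The plan is to trace the block $B$ back to the \op{CreateBlock} call that produced it and exploit the emptiness test at line \ref{testEmpty}. Since $B$ occupies $\var{v}.\fld{blocks}[i]$ with $i>0$, it was written there by a successful \op{CAS} at line \ref{cas} inside some \op{Refresh} $R$ on $\var{v}$. That \op{CAS} targets $\var{v}.\fld{blocks}[h]$, where $h$ was read from $\var{v}.\head$ at line \ref{readHead}, so its succeeding at index $i$ forces $h=i$, and the block it installs is exactly the non-\nl\ block returned by the call $C=\op{CreateBlock}(\var{v},i)$ that $R$ made at line \ref{invokeCreateBlock}. Because $C$ returned a block rather than \nl, the test at line \ref{testEmpty} must have found $\var{num\sub{enq}} + \var{num\sub{deq}} \neq 0$.

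It then remains to identify $\var{num\sub{enq}}$ with $|E(B)|$ and $\var{num\sub{deq}}$ with $|D(B)|$. Line \ref{computeNumEnq} and the line following it set these quantities to $B.\fld{sum\sub{enq}} - \var{v}.\fld{blocks}[i-1].\fld{sum\sub{enq}}$ and $B.\fld{sum\sub{deq}} - \var{v}.\fld{blocks}[i-1].\fld{sum\sub{deq}}$, respectively. I would apply \Cref{lem::sum} to both $B$ (at index $i$) and the block at index $i-1$, so that the two differences telescope to $|E(\var{v}.\fld{blocks}[i])| = |E(B)|$ and $|D(\var{v}.\fld{blocks}[i])| = |D(B)|$. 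Combining this with the previous paragraph yields $|E(B)| + |D(B)| = \var{num\sub{enq}} + \var{num\sub{deq}} \neq 0$, and since both terms are cardinalities, $E(B)$ and $D(B)$ cannot both be empty.

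The one step that needs care, and the main obstacle, is justifying that the value of $\var{v}.\fld{blocks}[i-1].\fld{sum\sub{enq}}$ read \emph{inside} $C$ is the permanent field value to which \Cref{lem::sum} applies, rather than a stale or not-yet-installed one. For this I would note that $C$ reads $\var{v}.\fld{blocks}[i-1]$ only after $R$ read $\var{v}.\head = i$ at line \ref{readHead}; by \Cref{lem::headPosition}, at that moment $\var{v}.\fld{blocks}[i-1] \neq \nl$ because $i-1 < \var{v}.\head$, and by the immutability of a \typ{Block}'s \fld{sum} fields (observed at the start of \Cref{sec::basicProperties}), the value $C$ reads equals the installed one. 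The same argument covers the \fld{sum\sub{deq}} read, and with this synchronization in hand the remaining manipulations on the prefix sums are routine.
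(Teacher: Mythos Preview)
Your argument is essentially the paper's proof, just expanded with explicit justification of the synchronization (your third paragraph); the paper simply asserts that \Cref{lem::sum} gives $\var{num\sub{enq}}=|E(B)|$ and $\var{num\sub{deq}}=|D(B)|$ and concludes.

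One small gap, which the paper's proof shares: your opening claim that $B$ ``was written there by a successful \op{CAS} at line~\ref{cas} inside some \op{Refresh}'' is false when $\var{v}$ is a leaf, since leaf blocks are installed by \op{Append} at line~\ref{appendLeaf} and are never produced by \op{CreateBlock}. The leaf case is of course trivial---such a block represents a single operation by construction (lines~\ref{enqNew} and~\ref{deqNew}), so exactly one of $E(B)$, $D(B)$ has length~1---but strictly speaking your argument (and the paper's) should mention it.
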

\begin{proof}
The \op{Refresh} that installed $B$ got $B$ as the response to its call to \op{CreateBlock} on line \ref{invokeCreateBlock}.
Thus, at line \ref{testEmpty}, $\var{num\sub{enq}}+\var{num\sub{deq}}\neq 0$.
By \Cref{lem::sum}, $\var{num\sub{enq}} = |E(B)|$ and $\var{num\sub{deq}} = |D(B)|$,
so these sequences cannot both be empty.
\end{proof}

\subsection{Propagating Operations to the Root}
\label{sec::propagating}

In the next two lemmas, we show two \op{Refresh}es suffice to propagate operations from a child to its parent.
We say that node $\var{v}$ \emph{contains} an operation $op$ if some block in $\var{v}.\fld{blocks}$ contains $op$.
Since blocks are permanently added to nodes, if $\var{v}$ contains $op$ at some time, $v$ contains $op$ at all later times too.

\begin{lemma}\label{successfulRefresh}
Let $R$ be a call to \op{Refresh}($\var{v}$) that performs a successful \op{CAS} on line \ref{cas} (or terminates at line \ref{addOP}).
After that CAS (or termination, respectively), $\var{v}$ contains all operations that $\var{v}$'s children contained 
when $R$ executed line~\ref{readHead}.
\end{lemma}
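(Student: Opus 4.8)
The plan is to track, for each child of $\var{v}$, exactly which child blocks are guaranteed to be ``covered'' by $\var{v}$'s blocks once $R$ acts, and to show this covered range includes every block the child held when $R$ read $\var{v}.\head$ at line \ref{readHead}. Throughout I would lean on two monotonicity facts: heads never decrease (\Cref{nonDecreasingHead}) and installed blocks are permanent. Together these imply that the set of operations a child contains only grows over time, so it suffices to account for the children's blocks at one convenient moment.

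First I would set up the coverage bookkeeping. By telescoping the direct-subblock definition~(\ref{defsubblock}) and using \Cref{lem::headProgress}, the blocks $\var{v}.\fld{blocks}[0..m]$ together contain exactly the operations in $\var{v}.\fld{left.blocks}[1..\var{v}.\fld{blocks}[m].\eleft]$ and $\var{v}.\fld{right.blocks}[1..\var{v}.\fld{blocks}[m].\eright]$. Since a block created at line \ref{invokeCreateBlock} sets its \eleft\ and \eright\ to $\var{v}.\fld{left}.\head-1$ and $\var{v}.\fld{right}.\head-1$ as read inside \op{CreateBlock}, the coverage after $R$ acts is determined entirely by the child heads that $R$ observes inside \op{CreateBlock}.

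The crux is a timing claim: if $k_\ell$ is the index of the last non-\nl\ block in $\var{v}.\fld{left}$ at the instant $R$ executes line \ref{readHead}, then the value of $\var{v}.\fld{left}.\head$ that \op{CreateBlock} later reads is strictly greater than $k_\ell$ (and symmetrically for the right child). By \Cref{lem::headPosition}, $k_\ell$ equals either $H_\ell-1$ or $H_\ell$, where $H_\ell$ is $\var{v}.\fld{left}.\head$ at line \ref{readHead}. In the first case $\var{v}.\fld{left}.\head = H_\ell = k_\ell+1 > k_\ell$ already, and heads only grow. In the second case ($k_\ell = H_\ell$, i.e.\ a block was installed in the child but its head not yet advanced) I would invoke the help-advance step at lines \ref{startHelpChild1}--\ref{endHelpChild1}: when $R$ reads $\var{childHead}=\var{v}.\fld{left}.\head$ there, either it already exceeds $k_\ell$, or it equals $k_\ell$, in which case $\var{v}.\fld{left.blocks}[k_\ell]\neq\nl$ triggers \op{Advance}$(\var{v}.\fld{left},k_\ell)$, whose \op{CAS} at line \ref{incrementHead} forces $\var{v}.\fld{left}.\head > k_\ell$ by the time \op{Advance} returns (whether that \op{CAS} succeeds or fails, the head has moved past $k_\ell$). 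This is the step I expect to be the main obstacle, since it is precisely where the helping mechanism is needed to prevent an installed-but-uncounted child block from being missed.

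Given the timing claim, I would finish by cases. If $R$ performs a successful \op{CAS} at line \ref{cas}, the installed block $\var{v}.\fld{blocks}[h]=\var{new}$ satisfies $\var{new}.\eleft \geq k_\ell$ and $\var{new}.\eright \geq k_r$, so by the coverage bookkeeping $\var{v}.\fld{blocks}[0..h]$ contains every operation in $\var{v}.\fld{left.blocks}[0..k_\ell]$ and $\var{v}.\fld{right.blocks}[0..k_r]$, which are exactly the operations the children held at line \ref{readHead}. If instead $R$ returns at line \ref{addOP}, then \op{CreateBlock} returned \nl, so $\var{num\sub{enq}}+\var{num\sub{deq}}=0$; using \Cref{lem::sum} this says the operation count of $\var{v}.\fld{blocks}[0..h-1]$ equals that of the children up to the heads read in \op{CreateBlock}. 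Since the operations of $\var{v}.\fld{blocks}[0..h-1]$ form a subset of the (disjoint, by subtree) left/right child operations up to those heads, equal cardinalities force the sets to coincide, so $\var{v}$ already contains all child operations up to heads exceeding $k_\ell$ and $k_r$, and in particular those present at line \ref{readHead}. (Alternatively, \Cref{blockNotEmpty} shows the uncovered child ranges must be empty.) In both cases the conclusion is preserved at all later times by permanence of installed blocks.
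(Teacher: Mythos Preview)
Your proposal is correct and follows essentially the same approach as the paper's proof. The paper fixes an arbitrary operation $op$ in the child and its block index $i$, then argues (via \Cref{lem::headPosition}, \Cref{nonDecreasingHead}, and the help-\op{Advance} at lines \ref{startHelpChild1}--\ref{endHelpChild1}) that the child's \head\ exceeds $i$ before \op{CreateBlock} reads it, so $\var{new}.\eleft\geq i$; your version does the same with $k_\ell$, the maximum such $i$, which is a purely cosmetic difference. For the \nl-return case, the paper uses exactly your second alternative (the \fld{sum} equalities plus \Cref{blockNotEmpty} force the uncovered child ranges to be empty, giving $\var{v}.\fld{blocks}[h-1].\eleft\geq \var{new}.\eleft\geq k_\ell$).
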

\begin{proof}
Suppose $\var{v}$'s child (without loss of generality, $\var{v}.\fld{left}$) contained an operation $op$ 
when $R$ executed line \ref{readHead}.
Let $i$ be the index such that the block $B=\var{v}.\fld{left.blocks}[i]$ contains $op$.
By \Cref{nonDecreasingHead} and \Cref{lem::headProgress}, the value of $childHead$ that $R$ reads from
$\var{v}.\fld{left.head}$ in line \ref{readChildHead} is at least $i$.
If it is equal to $i$, $R$ calls \op{Advance} at line \ref{helpAdvance}, which ensures that 
$\var{v}.\fld{left.head} > i$.
Then, $R$ calls \op{CreateBlock}($\var{v},h$) in line \ref{invokeCreateBlock}, where $h$ is the value $R$ reads at line \ref{readHead}.
\op{CreateBlock} reads a value greater than $i$ from $\var{v}.\fld{left.head}$ at line \ref{createEndLeft}.
Thus, $new.\eleft \geq i$.  We consider two cases.

Suppose $R$'s call to \op{CreateBlock} returns the new block $B'$ and $R$'s \op{CAS} at line \ref{cas} 
installs $B'$ in $\var{v}.\fld{blocks}$.
Then, $B$ is a subblock of some block in $\var{v}$, since  $B'.\eleft$ is greater than or equal to $B$'s index
$i$ in $\var{v}.\fld{left.blocks}$.
Hence $\var{v}$ contains $op$, as required.

Now suppose $R$'s call to \op{CreateBlock} returns \nl, causing $R$ to terminate at line \ref{addOP}.
Intuitively, since there are no operations in $\var{v}$'s children to promote, $op$ is already in $\var{v}$.
We formalize this intuition.
The value computed at line \ref{createSumEnq} is
\begin{eqnarray*}
\var{num\sub{enq}} \hspace*{-2mm}
&=& \hspace*{-2mm} \var{v}.\fld{left.blocks}[new.\eleft].\fld{sum\sub{enq}} \\
&& + \var{v}.\fld{right.blocks}[new.\eright].\fld{sum\sub{enq}}\\
&&  - \var{v}.\fld{blocks}[h-1].\fld{sum\sub{enq}} \\
&=& \hspace*{-2mm}\var{v}.\fld{left.blocks}[new.\eleft].\fld{sum\sub{enq}}\\
&& + \var{v}.\fld{right.blocks}[new.\eright].\fld{sum\sub{enq}} \\
&&- \var{v}.\fld{left.blocks}[\var{v}.\fld{blocks}[h-1].\eleft].\fld{sum\sub{enq}} \\
&& - \var{v}.\fld{right.blocks}[\var{v}.\fld{blocks}[h-1].\eright].\fld{sum\sub{enq}}.
\end{eqnarray*}
It follows from \Cref{lem::sum} that $num\sub{enq}$ is the total  number of enqueues  in 
$\var{v}.\fld{left.blocks}[\var{v}.\fld{blocks}[h-1].\eleft+1..new.\eleft]$ and
$\var{v}.\fld{right.blocks}[\var{v}.\fld{blocks}[h-1].\eright+1..new.\eright]$.
Similarly, $num\sub{deq}$ is the total number of dequeues contained in these blocks.
Since $num\sub{enq}+num\sub{deq}=0$ at line \ref{testEmpty},
these blocks contain no operations.
By \Cref{blockNotEmpty}, this means the ranges of blocks are empty, so that $\var{v}.\fld{blocks}[h-1].\eleft \geq \var{new}.\eleft \geq i$.
Hence, $B$ is already a subblock of some block in $\var{v}$, so $\var{v}$ contains $op$.
\end{proof}

We now show  a double \op{Refresh} propagates blocks as required.

\begin{lemma}\label{lem::doubleRefresh}
Consider two consecutive terminating calls $R_1$, $R_2$ to \op{Refresh}($\var{v}$) by the same process.
All operations contained $\var{v}$'s children when $R_1$ begins
are contained in $\var{v}$ when $R_2$ terminates.
\end{lemma}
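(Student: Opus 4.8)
The plan is to fix an arbitrary operation $op$ contained in one of $\var{v}$'s children when $R_1$ begins, and to exhibit a \emph{single} successful \op{Refresh} on $\var{v}$ that (a) reads $\var{v}.\head$ (line \ref{readHead}) at some time after $R_1$ begins, and (b) completes its installation before $R_2$ terminates. Since blocks are only ever added to a node (never removed), $op$ remains in $\var{v}$'s children at every time after $R_1$ begins; so any \op{Refresh} whose line \ref{readHead} executes after $R_1$ begins sees $op$ in the children at that moment. \Cref{successfulRefresh} then guarantees that such a successful \op{Refresh} places $op$ into $\var{v}$ by the time its \op{CAS} completes, which gives the lemma. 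The proof is a case analysis on the outcomes of $R_1$ and $R_2$.

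In the easy cases, $R_1$ or $R_2$ itself is the witnessing \op{Refresh}. If $R_1$ performs a successful \op{CAS} at line \ref{cas} or returns at line \ref{addOP}, then by \Cref{successfulRefresh} the node $\var{v}$ contains, right after that step, every operation its children held when $R_1$ read $\var{v}.\head$; since that read is after $R_1$ begins, it captures $op$, which therefore remains in $\var{v}$ through $R_2$'s termination. Symmetrically, if $R_2$ succeeds or returns at line \ref{addOP}, \Cref{successfulRefresh} applied to $R_2$ captures $op$, because $R_2$'s read at line \ref{readHead} occurs after $R_2$ begins, hence after $R_1$ begins. So the only remaining situation is that both $R_1$ and $R_2$ reach line \ref{cas} and their \op{CAS}es \emph{fail}.

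The hard case is this double failure, and here I would argue via a third, concurrent \op{Refresh}. Let $h_1$ and $h_2$ be the values $R_1$ and $R_2$ read at line \ref{readHead}. Because $R_1$'s \op{CAS} failed, its \op{CreateBlock} returned a non-\nl\ block (otherwise it would have returned at line \ref{addOP}); so by \Cref{lem::headInc}, $\var{v}.\head > h_1$ when $R_1$ terminates, and since $R_2$ reads $h_2$ only afterwards, \Cref{nonDecreasingHead} gives $h_2 > h_1$. Now $R_2$'s \op{CAS} failed because some other \op{Refresh} $S$ had already installed a block in $\var{v}.\fld{blocks}[h_2]$ with a successful \op{CAS} (the only way an entry becomes non-\nl); this \op{CAS} precedes $R_2$'s failing \op{CAS}, hence precedes $R_2$'s termination. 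Since $S$ installed at position $h_2$, it read the value $h_2$ from $\var{v}.\head$ at its own line \ref{readHead}. But $\var{v}.\head = h_1$ when $R_1$ read it, and $\var{v}.\head$ is non-decreasing (\Cref{nonDecreasingHead}), so $\var{v}.\head$ first attains the value $h_2 > h_1$ strictly after $R_1$'s line \ref{readHead}, i.e.\ after $R_1$ begins. Thus $S$ reads the head after $R_1$ begins and succeeds before $R_2$ terminates, and \Cref{successfulRefresh} applied to $S$ places $op$ into $\var{v}$ in time.

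The main obstacle is precisely the timing argument in the last paragraph: showing that the covering successful \op{Refresh} $S$ read $\var{v}.\head$ \emph{after} $R_1$ began. This is where monotonicity of $\var{v}.\head$ (\Cref{nonDecreasingHead}) together with $h_2 > h_1$ (via \Cref{lem::headInc}) is essential, since it converts the index inequality $h_2 > h_1$ into the required temporal inequality. Everything else is a routine invocation of \Cref{successfulRefresh} and the permanence of installed blocks.
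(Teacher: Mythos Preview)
Your proof is correct and follows essentially the same approach as the paper: handle the easy cases where $R_1$ or $R_2$ succeeds (or returns at line \ref{addOP}) directly via \Cref{successfulRefresh}, and in the double-failure case use \Cref{lem::headInc} to get $h_2>h_1$, identify the third \op{Refresh} that successfully installed at index $h_2$, and use monotonicity of $\var{v}.\head$ (\Cref{nonDecreasingHead}) to conclude that its line-\ref{readHead} read occurred after $R_1$'s, so that \Cref{successfulRefresh} applies. The only cosmetic difference is that the paper also cites \Cref{lem::headPosition} to note $\var{v}.\fld{blocks}[h_2]=\nl$ at $R_1$'s read, but your purely monotonicity-based timing argument suffices.
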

\begin{proof}
If either $R_1$ or $R_2$ performs a successful \op{CAS} at line \ref{cas} or terminates at line \ref{addOP}, the claim follows
from \Cref{successfulRefresh}.
So suppose both $R_1$ and $R_2$ perform a failed \op{CAS} at line \ref{cas}.
Let $h_1$ and $h_2$ be the values $R_1$ and $R_2$ read from $\var{v}.\head$ at line \ref{readHead}.
By \Cref{lem::headInc}, $h_2>h_1$.
By \Cref{lem::headProgress}, $\var{v}.blocks[h_2]=\nl$ when $R_1$ executes line \ref{readHead}.
Since $R_2$ fails its \op{CAS} on $\var{v}.blocks[h_2]$, some other \op{Refresh} $R_3$ must have done
a successful \op{CAS} on $\var{v}.blocks[h_2]$ before $R_2$'s \op{CAS}.
$R_3$ must have executed line \ref{readHead} after $R_1$, since $R_3$ read the value $h_2$ from $\var{v}.\head$ and the value of $\var{v}.\head$ is non-decreasing, by \Cref{nonDecreasingHead}.
Thus, all operations contained in $\var{v}$'s children when $R_1$ begins
are also contained in $\var{v}$'s children when $R_3$ later executes line \ref{readHead}.
By \Cref{successfulRefresh}, these operations are contained in $\var{v}$ when $R_3$ performs its successful \op{CAS},
which is before $R_2$'s failed \op{CAS}.
\end{proof}

\begin{lemma} \label{lem::appendExactlyOnce}
When an \op{Append}($B$) terminates, $B$'s operation is contained in exactly one block in each node along the path from the process's leaf to the root.
\end{lemma}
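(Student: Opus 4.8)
The statement has two halves: when \op{Append}($B$) returns, $B$'s operation lies in \emph{at least} one block and in \emph{at most} one block of each node $v_0=\var{leaf}, v_1, \ldots, v_k=\var{root}$ on the path (with $v_{j+1}=v_j.\fld{parent}$). The second half is immediate and I would dispose of it first: by \Cref{lem::noDuplicates}, no operation can be contained in two distinct blocks of the same node, so $B$'s operation lies in at most one block of each node. The real content is the existence half, and the plan is to prove it by induction along the path, exploiting the fact (noted just before \Cref{successfulRefresh}) that blocks are added permanently to nodes, so once a node contains $B$'s operation it does so at all later times.

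For the base case, \op{Append} writes $B$ into $\var{leaf}.\fld{blocks}$ at line \ref{appendLeaf} before it calls \op{Propagate}($v_1$); hence $v_0$ contains $B$'s operation at the moment \op{Propagate}($v_1$) is invoked. The inductive step rests on the following single-node propagation claim, which I would apply to each $v_j$ in turn. \emph{Claim:} if \op{Propagate}($v$) is invoked and terminates, and $v$'s children contain $B$'s operation when \op{Propagate}($v$) is invoked, then $v$ contains $B$'s operation once the \op{Refresh}(es) at lines \ref{firstRefresh}--\ref{secondRefresh} have completed, i.e.\ before the recursive call to \op{Propagate}($v.\fld{parent}$). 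To prove the claim I would split on the first \op{Refresh} $R_1$. Its line \ref{readHead} executes after \op{Propagate}($v$) is invoked, so by permanence $B$'s operation is still in $v$'s children when $R_1$ begins. If $R_1$ returns \tr\ (a successful \op{CAS}, or termination at line \ref{addOP}), \Cref{successfulRefresh} immediately places $B$'s operation into $v$. Otherwise $R_1$ returns \fa, the second \op{Refresh} $R_2$ at line \ref{secondRefresh} is executed, and $R_1,R_2$ are two consecutive terminating \op{Refresh}($v$) calls by the same process, so \Cref{lem::doubleRefresh} places $B$'s operation into $v$.

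Chaining the claim up the path gives the result: $v_0$ contains $B$'s operation when \op{Propagate}($v_1$) is invoked; the claim (plus permanence) then carries it into $v_1$ before \op{Propagate}($v_2$) is invoked, and so on up to $\var{root}$, where termination of \op{Append} guarantees the recursion has actually reached the top. Together with the uniqueness observation above, this yields exactly one block per node.

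The main obstacle — or rather the only genuine subtlety — is that \op{Propagate} does not always perform two \op{Refresh}es: the second is run only when the first fails, so \Cref{lem::doubleRefresh} cannot be invoked unconditionally and one must fall back on \Cref{successfulRefresh} in the single-\op{Refresh} case, which is exactly what the case split in the claim does. The other point requiring care is the event-ordering argument that $B$'s operation is present in $v$'s children \emph{before} the relevant \op{Refresh} reads \fld{head}; this is handled by the permanence of operations in nodes together with \Cref{nonDecreasingHead}.
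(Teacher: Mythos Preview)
Your proof is correct and follows the same approach as the paper's: existence via \Cref{lem::doubleRefresh} (and \Cref{successfulRefresh}) climbing the path, uniqueness via \Cref{lem::noDuplicates}. You are in fact more careful than the paper on one point: you explicitly handle the case where only a single \op{Refresh} is performed (because the first one succeeds), falling back on \Cref{successfulRefresh} directly, whereas the paper simply says \op{Propagate} ``does a double \op{Refresh}'' and invokes \Cref{lem::doubleRefresh}.
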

\begin{proof}
\op{Append} adds $B$ to the process's leaf and calls \op{Propagate}, which
does a double \op{Refresh}~on each internal node on the path $P$ from the leaf to the root.
By \Cref{lem::doubleRefresh}, this ensures a block in each node on $P$ contains $B$'s operation.
There is at most one such block in each node, by \Cref{lem::noDuplicates}.
\end{proof}

\subsection{Correctness of \opemph{GetEnqueue} and \opemph{IndexDequeue}}
\label{sec::tracingCorrect}

See Appendix \ref{app::tracingDetails} for detailed proofs for this section.

We first show the \fld{super} field is accurate, since \op{IndexDequeue} uses it to trace superblocks up the tree.  This is proved by showing that the \fld{super} field of a block $B$ in node $\var{v}$ is read from  
\var{v}.\fld{parent}'s \var{head} field close to the time that $B$'s superblock $B_s$ is installed in the parent node.
On one hand, $B.super$ is written before $B_s$ is installed:
\op{Advance} writes $B.super$ before advancing $\var{v}.\head$  past $B$'s index, which must happen before the \op{CreateBlock} that creates
$B_s$ gets the value of $B_s.\eleft$ or $B_s.\eright$.
On the other hand, $B.\fld{super}$ cannot be written too long before $B_s$ is installed:
$B.\fld{super}$ is written after $B$ is installed, and \Cref{successfulRefresh} ensures that $B$ is propagated to the
parent soon after.

\begin{restatable}{lemma}{superRelationRes}
\label{superRelation}
Let $B=\var{v}.\var{blocks}[b]$.
  If $\var{v}.\fld{parent.blocks}[s]$ is the superblock of $B$ then $s-1\leq B.\fld{super}\leq s$.
\end{restatable}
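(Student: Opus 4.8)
The plan is to pin down the two moments that matter: the instant $t_r$ at which the value $h_p$ eventually stored in $B.\fld{super}$ is read from $\var{v}.\fld{parent}.\head$ (line \ref{readParentHead}), and the instant at which the superblock $B_s=\var{v}.\fld{parent.blocks}[s]$ is installed. Since \fld{super} is written by a single successful \op{CAS} (line \ref{setSuper1}), $B.\fld{super}=h_p$ for the $h_p$ read by that \op{Advance}$(\var{v},b)$ at time $t_r$. First I would collect three timing facts. (i) At $t_r$ we have $\var{v}.\head=b$: indeed $B=\var{v}.\fld{blocks}[b]\neq\nl$ forces $b\le\var{v}.\head$ by \Cref{lem::headPosition}, while the success of the \fld{super}-\op{CAS} (which requires $B.\fld{super}=\nl$) forces $b\ge\var{v}.\head$ by the third clause of \Cref{lem::headPosition}, applied at the moment of the \op{CAS}, together with \Cref{nonDecreasingHead}. (ii) $B.\fld{super}$ is written strictly before $\var{v}.\head$ is ever incremented to $b+1$, again by the third clause of \Cref{lem::headPosition}. (iii) Since $B_s$ has $B$ as a direct subblock, $B_s.\edir\ge b$, so the \op{CreateBlock} that built $B_s$ read $\var{v}.\head\ge b+1$; by (ii) this read, and hence the installation of $B_s$, happens strictly after $t_r$.

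For the upper bound $B.\fld{super}\le s$ I would argue by contradiction. If $h_p>s$, then $s<\var{v}.\fld{parent}.\head$ at time $t_r$, so $\var{v}.\fld{parent.blocks}[s]$ is already non-\nl\ at $t_r$ by \Cref{lem::headPosition}; that is, $B_s$ is installed by $t_r$, contradicting fact (iii). Hence $h_p\le s$.

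The lower bound $B.\fld{super}\ge s-1$, equivalently $s\le h_p+1$, is the main obstacle, and here I would use the helping built into \op{Refresh}. Suppose for contradiction that $s\ge h_p+2$. Then the slot $\var{v}.\fld{parent.blocks}[h_p+1]$ is filled (all slots below $s$ are, by \Cref{lem::headPosition}) by some block $C$ whose index $h_p+1\le s-1$ strictly precedes the superblock; since \edir\ is non-decreasing in the block index (\Cref{lem::headProgress}), $C.\edir\le \var{v}.\fld{parent.blocks}[s-1].\edir\le b-1$. Let $R$ be the \op{Refresh}$(\var{v}.\fld{parent})$ that installs $C$; it read $h=h_p+1$ from $\var{v}.\fld{parent}.\head$ (line \ref{readHead}), which by \Cref{nonDecreasingHead} must occur after $t_r$ (where the head was only $h_p$). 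Consequently $R$'s helping loop (lines \ref{startHelpChild1}--\ref{endHelpChild1}) reads $\var{v}.\head\ge b$ after $t_r$, and since $B=\var{v}.\fld{blocks}[b]\neq\nl$ was already installed by $t_r$, either $R$ already sees $\var{v}.\head>b$, or it sees $\var{v}.\head=b$ with $\var{v}.\fld{blocks}[b]\neq\nl$ (line \ref{ifHeadnotNull}) and calls \op{Advance}$(\var{v},b)$ (line \ref{helpAdvance}); in both cases $\var{v}.\head\ge b+1$ holds before $R$ invokes \op{CreateBlock}. Hence $C.\edir\ge b$, contradicting $C.\edir\le b-1$.

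I expect the delicate points to all live in that last paragraph: ensuring that the head read by $R$ is genuinely after $t_r$ (so that $R$'s helping loop runs while $B$ is already present in $\var{v}$), and that $R$'s own helping necessarily pushes $\var{v}.\head$ past $b$ before $C$ is built. I would double-check the case split inside the helping loop (head already $>b$ versus head $=b$ with a non-\nl\ block), and take care to reason about the block $C$ \emph{actually installed} at index $h_p+1$ rather than some losing candidate, so that the bound $C.\edir\le b-1$ obtained from \Cref{lem::headProgress} applies to the same block whose \edir\ the helping argument forces to be at least $b$.
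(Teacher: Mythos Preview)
Your proof is correct; both bounds go through as you describe. The upper bound is essentially the paper's argument, just cast as a contradiction.

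For the lower bound you take a genuinely different route. The paper applies \Cref{successfulRefresh} to the \op{Refresh} $R_{s-1}$ that installed $\var{v}.\fld{parent.blocks}[s-1]$: if $R_{s-1}$'s line~\ref{readHead} occurred after $B$ was installed in $\var{v}$, then \Cref{successfulRefresh} would force $B$ to be a subblock of some $\var{v}.\fld{parent.blocks}[1..s-1]$, contradicting that its superblock sits at index $s$. Hence $R_{s-1}$'s read precedes the installation of $B$, which in turn precedes the \op{Advance} that writes $B.\fld{super}$; monotonicity of \head\ (\Cref{nonDecreasingHead}) then gives $B.\fld{super}\ge s-1$ directly. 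You instead look at the \op{Refresh} $R$ that installed the block at index $h_p+1$ and unroll the helping loop (lines~\ref{startHelpChild1}--\ref{endHelpChild1}) to show that $R$'s \op{CreateBlock} must have seen $\var{v}.\head>b$, contradicting $C.\edir\le b-1$. This is precisely the mechanism that underlies \Cref{successfulRefresh}, so you are re-deriving a special case of that lemma inline. The paper's version is shorter because it reuses the existing lemma as a black box; your version is more self-contained and makes the role of the helping loop explicit, which is pedagogically useful but adds the case analysis you flagged as delicate. Both approaches need the same timing facts (i)--(iii), and your verification of those is sound.
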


To show  \op{GetEnqueue} and \op{IndexDequeue} work correctly, we  just  check that they correctly compute the index of the required block
and the operation's rank within the block.  
For \op{IndexDequeue}, we use \Cref{superRelation} each time \op{IndexDequeue} goes one step up the tree.
\here{mention somewhere why precondition of IndexDequeue is true? also check that block of parent indexed by sup in that routine is non-null}

\begin{restatable}{lemma}{indexDequeueRes}
\label{lem::indexDequeue}
If $\var{v}.\fld{blocks}[b]$ has been propagated to the root and $1\leq i\leq |D(\var{v}.\fld{blocks}[b])|$, 
 then \op{IndexDequeue}($\var{v}, b, i$) returns $\langle b',i' \rangle$ such that the \var{i}th dequeue in $D(\var{v}.\fld{blocks}[\var{b}])$ is the $(i')$th dequeue of $D(\var{root}.\fld{blocks}[b'])$.
\end{restatable}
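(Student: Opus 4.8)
The plan is to prove the lemma by induction on the distance from $\var{v}$ to the root, which equals the number of recursive calls \op{IndexDequeue} makes. The base case is $\var{v}=\var{root}$: line \ref{indexBaseCase} returns $\langle b,i\rangle$, and the $i$th dequeue of $D(\var{root}.\fld{blocks}[b])$ is trivially its own $i$th dequeue. For the inductive step, let $B=\var{v}.\fld{blocks}[b]$ with $\var{v}\neq\var{root}$, let $s$ be the true index of $B$'s (unique) superblock $B_s=\var{v}.\fld{parent.blocks}[s]$, and let $\fld{dir}$ indicate whether $\var{v}$ is the left or right child of its parent. I must show that after lines \ref{idsup1}--\ref{computeISuperEnd} the local variable $\var{sup}$ equals $s$ and the updated value of $\var{i}$ equals the rank of the $i$th dequeue of $D(B)$ within $D(B_s)$; the result then follows by applying the induction hypothesis to the recursive call \op{IndexDequeue}($\var{v}.\fld{parent},\var{sup},\var{i}$).

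First I would verify $\var{sup}=s$. By \Cref{superRelation}, the value $B.\fld{super}$ read at line \ref{idsup1} is either $s-1$ or $s$. The direct-subblock ranges of (\ref{defsubblock}) give that $\var{v}.\fld{parent.blocks}[s-1].\edir < b \leq \var{v}.\fld{parent.blocks}[s].\edir$. Hence if $B.\fld{super}=s$ the test at line \ref{supertest} is false and $\var{sup}$ stays $s$, while if $B.\fld{super}=s-1$ the test is true and $\var{sup}$ is incremented to $s$; in both cases $\var{sup}=s$ afterward.

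Next I would verify the rank computation using \Cref{lem::sum} together with the definition (\ref{defSeqs}) of $D(B_s)$ as the concatenation of the dequeue sequences of the direct subblocks of $B_s$, taken left child before right child. The rank of the $i$th dequeue of $D(B)$ within $D(B_s)$ is $i$ plus the number of dequeues of $D(B_s)$ lying in subblocks of $B_s$ that precede $B$. When $\fld{dir}=\fld{left}$, these are exactly the dequeues in $\var{v}.\fld{blocks}[\var{v}.\fld{parent.blocks}[s-1].\eleft+1..b-1]$, whose count \Cref{lem::sum} evaluates to the increment added at line \ref{computeISuperStart}. When $\fld{dir}=\fld{right}$, one must additionally count every dequeue in the entire left-child portion of $B_s$, namely the blocks of the left child indexed $\var{v}.\fld{parent.blocks}[s-1].\eleft+1$ through $\var{v}.\fld{parent.blocks}[s].\eleft$; \Cref{lem::sum} turns this count into the extra term added at line \ref{considerLeftBeforeRight}. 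In each case the updated $\var{i}$ is the claimed rank $i'$, and since the $i$th dequeue of $D(B)$ genuinely occurs in $D(B_s)$ we obtain $1\leq i'\leq|D(B_s)|$.

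Finally I would discharge the precondition of the recursive call so that the induction hypothesis applies. Because $B$ has been propagated to the root, $B$ is a subblock of some root block, and since the superblock chain from $B$ upward is unique (by \Cref{lem::subblocksDistinct} and its corollary on uniqueness of superblocks), $B_s$ is also a subblock of that same root block and hence itself propagated to the root; combined with $1\leq i'\leq|D(B_s)|$, the recursive call meets its precondition and the induction hypothesis finishes the argument. The main obstacle is the arithmetic bookkeeping in the right-child case: one must match the two increments at lines \ref{computeISuperStart} and \ref{considerLeftBeforeRight} against the ordered decomposition of $D(B_s)$ from (\ref{defSeqs}), being careful about exactly which child's \fld{sum\sub{deq}} prefix sums each term references. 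The argument that $\var{sup}$ is off by at most one (via \Cref{superRelation}) is the other delicate point, but it reduces cleanly to the subblock-range inequalities established above.
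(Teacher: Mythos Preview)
Your proposal is correct and follows essentially the same approach as the paper's own proof: induction on the distance from $\var{v}$ to the root, with the base case $\var{v}=\var{root}$, then using \Cref{superRelation} to show $\var{sup}=s$, using (\ref{defSeqs}) and \Cref{lem::sum} to verify the rank computation in lines \ref{computeISuperStart}--\ref{computeISuperEnd}, and finally checking the precondition of the recursive call. Your write-up is in fact more detailed than the paper's, spelling out the two-case analysis for $\var{sup}$ via the subblock-range inequalities from (\ref{defsubblock}) and making explicit the prefix-sum arithmetic and the uniqueness-of-superblock argument that the paper only gestures at.
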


\begin{restatable}{lemma}{getEnqRes}
\label{lem::get}
If $1\leq i\leq |E(\var{v}.\fld{blocks}[b])|$ then \op{getEnqueue}($\var{v},b,i$) returns the argument of the $i$th enqueue in $E(\var{v}.\fld{blocks}[b])$.
\end{restatable}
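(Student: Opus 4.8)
The plan is to prove the lemma by induction on the height of $\var{v}$ (the length of the longest downward path from $\var{v}$ to a leaf), which matches the downward recursion of \op{GetEnqueue}. Note first that since $1\le \var{i}\le |E(\var{v}.\fld{blocks}[b])|$, the block $\var{v}.\fld{blocks}[b]$ is non-empty and hence non-\nl; combined with \Cref{lem::headPosition}, all the subblock accesses performed by the routine are well-defined, so \Cref{lem::sum} applies to every \fld{sum\sub{enq}} value it reads. For the base case, $\var{v}$ is a leaf. A leaf block represents a single operation, and since it contains at least $\var{i}\ge 1$ enqueues it represents an \op{Enqueue} whose argument is stored in $\var{v}.\fld{blocks}[b].\fld{element}$. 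So $\var{i}=1$ is forced and line \ref{getBaseCase} returns the correct value.

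For the inductive step, let $\var{v}$ be internal and $B=\var{v}.\fld{blocks}[b]$. By the definition of direct subblocks in (\ref{defsubblock}), the left direct subblocks of $B$ are $\var{v}.\fld{left.blocks}[p+1..q]$ with $p=\var{v}.\fld{blocks}[b-1].\eleft$ and $q=\var{v}.\fld{blocks}[b].\eleft$, and analogously on the right. Applying \Cref{lem::sum} to the endpoints shows that $\var{sum\sub{left}}-\var{prev\sub{left}}$ is exactly the number of enqueues contributed to $E(B)$ by its left direct subblocks. By the concatenation order in (\ref{defSeqs}), those left enqueues precede the right ones in $E(B)$, so the test at line \ref{leftOrRight} correctly identifies which child contains the $\var{i}$th enqueue of $E(B)$, and the subtraction of $\var{sum\sub{left}}-\var{prev\sub{left}}$ in the $\fld{dir}=\fld{right}$ branch re-expresses $\var{i}$ as the rank of the target enqueue among the enqueues of the right direct subblocks.

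It then remains to verify the binary search, the rank computation, and the precondition of the recursive call. Using \Cref{lem::sum} once more, I would argue that the minimum $b'$ found at line \ref{getChild} lies in the range $[\var{v}.\fld{blocks}[b-1].\edir+1, \var{v}.\fld{blocks}[b].\edir]$: its existence holds because after the adjustment of $\var{i}$ the top endpoint already satisfies the threshold $\var{i}+\var{prev\sub{dir}}$, while minimality forces $b'>\var{v}.\fld{blocks}[b-1].\edir$. Thus $\var{v}.\fld{dir.blocks}[b']$ is a genuine direct subblock of $B$ holding the target enqueue. Minimality of $b'$ yields $\var{i}'\ge 1$, and the threshold inequality together with the fact that \Cref{lem::sum} equates $|E(\var{v}.\fld{dir.blocks}[b'])|$ with the difference of consecutive \fld{sum\sub{enq}} values gives $\var{i}'\le |E(\var{v}.\fld{dir.blocks}[b'])|$. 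Hence the precondition of the recursive \op{GetEnqueue} at line \ref{getRecurse} is met, and the induction hypothesis (applied to the child, of smaller height) gives that the returned value is the argument of the $\var{i}'$th enqueue of $E(\var{v}.\fld{dir.blocks}[b'])$, which is the target enqueue.

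The main obstacle is the bookkeeping in the last paragraph: carefully translating the prefix sums stored in the \fld{sum\sub{enq}} fields (via \Cref{lem::sum}) and the subblock ranges (via (\ref{defsubblock})) into the precise off-by-one bounds that justify both the range and the minimality of the binary search and the two inequalities on $\var{i}'$, all while keeping the left and right cases synchronized through the single symbol $\var{prev\sub{dir}}$ used at lines \ref{getChild} and \ref{getChildIndex}.
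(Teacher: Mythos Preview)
Your proposal is correct and follows essentially the same approach as the paper's proof: induction on the height of $\var{v}$, with the leaf base case and an inductive step that uses (\ref{defSeqs}), (\ref{defsubblock}), and \Cref{lem::sum} to justify the choice of \fld{dir}, the binary search for $b'$, and the computation of $i'$. Your write-up is more explicit than the paper's about the range of $b'$ and the two inequalities on $i'$ needed for the recursive precondition, but the structure and key ingredients are identical.
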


\subsection{Linearizability}
\label{sec::linearizability}

We show that the linearization ordering $L$ defined in 
(\ref{linearization}) is a legal permutation of a subset of the operations in 
the execution, i.e., that it includes all operations that terminate and 
if one operation $op_1$ terminates before another operation $op_2$ begins, then $op_2$ does not precede $op_1$ in $L$.  We also show the result each dequeue returns is the same as in the sequential execution  $L$.

\begin{lemma} \label{linearSat}
$L$ is a legal linearization ordering.
\end{lemma}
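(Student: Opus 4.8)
The plan is to verify the defining properties of a legal linearization ordering one by one: that $L$ is a duplicate-free permutation of a subset of the operations, that it contains every operation that terminates, and that it respects real-time precedence; then, separately, to check that each dequeue's returned value agrees with the FIFO semantics of the sequential order~$L$.

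For the first two properties most of the work is already in hand. No operation appears twice in $L$, since by \Cref{lem::noDuplicates} an operation lies in at most one root block, and within a single block each operation of $E(B)$ and $D(B)$ occurs once by the inductive definition~(\ref{defSeqs}). Every terminating operation is in $L$ because both \op{Enqueue} and \op{Dequeue} call \op{Append}, and by \Cref{lem::appendExactlyOnce} the operation is contained in a unique root block by the time its \op{Append} returns; hence it appears in some $E(B_j)\cdot D(B_j)$.

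The real-time property is the conceptual core. Suppose $op_1$ terminates before $op_2$ begins; I would show that $op_1$'s root-block index is strictly smaller than $op_2$'s, so $op_1$ precedes $op_2$ in $L$. By \Cref{lem::appendExactlyOnce}, $op_1$ already lies in some root block $B_i$ when it terminates, and by \Cref{lem::headPosition} with \Cref{nonDecreasingHead} the root's blocks are installed in increasing index order, so $B_1,\dots,B_i$ are all installed before $op_2$ begins. The key auxiliary claim is that every transitive subblock of a block exists when that block is created: \op{CreateBlock} fixes $B.\eleft$ and $B.\eright$ from the children's \fld{head} values, and by \Cref{lem::headPosition} the direct subblocks indexed by those values are non-\nl\ at that instant; a downward induction on depth then shows all leaf subblocks of $B_i$ exist when $B_i$ is created. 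Since $op_2$'s leaf block is created only when $op_2$ begins, which is after each of $B_1,\dots,B_i$ is installed, it is a subblock of none of them, so $op_2$ lies in some $B_{i'}$ with $i'>i$.

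Finally, for dequeue responses I would first establish, by induction on the block index $b$, that $\var{root}.\fld{blocks}[b].\fld{size}$ equals the number of elements in the queue after executing $E(B_1)D(B_1)\cdots E(B_b)D(B_b)$; this follows from the \op{CreateBlock} computation on line~\ref{computeLength}, since each enqueue increments and each non-null dequeue decrements the size while a dequeue on an empty queue leaves it unchanged. Given this, consider the $i$th dequeue $d$ of root block $B_b$. As $L$ places all of $E(B_b)$ before $D(B_b)$, the available count at the start of $D(B_b)$ is $\var{root}.\fld{blocks}[b-1].\fld{size}+\var{num\sub{enq}}$, so $d$ finds the queue empty exactly when $i$ exceeds this, which is precisely the test on line~\ref{checkEmpty}. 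When $d$ is non-null, the number of non-null dequeues strictly before $B_b$ equals total enqueues minus current size, namely $\var{root}.\fld{blocks}[b-1].\fld{sum\sub{enq}}-\var{root}.\fld{blocks}[b-1].\fld{size}$, and all of the first $i$ dequeues of $B_b$ are non-null, so $d$ is the $e$th non-null dequeue overall with $e$ as computed on line~\ref{computeE}. By FIFO semantics $d$ must return the $e$th enqueue of $L$, i.e.\ the $e$th element of $E(B_1)\cdots E(B_k)$; lines~\ref{FRb}--\ref{FRi} locate its block and within-block rank by binary search on \fld{sum\sub{enq}}, and \op{GetEnqueue} returns its argument by \Cref{lem::get}. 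I expect the main obstacle to be exactly this last part: tying the \fld{size} invariant to the FIFO count of non-null dequeues and verifying the off-by-one arithmetic of \op{FindResponse}, since \op{IndexDequeue} (\Cref{lem::indexDequeue}) only supplies $d$'s position in the root and all remaining correctness rests on this bookkeeping.
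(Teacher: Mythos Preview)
Your argument for the three ordering properties---no duplicates, every terminating operation is present, and real-time precedence is respected---is correct and matches the paper's approach, though you spell out the real-time step more fully than the paper does.  The paper's proof of \Cref{linearSat} is only three sentences: it cites \Cref{lem::noDuplicates} for the permutation property, \Cref{lem::appendExactlyOnce} for inclusion of terminating operations, and for real-time order simply observes that $op_1$ is already propagated to the root before $op_2$ begins, so $op_1$'s root block has smaller index.  Your downward induction showing that all leaf subblocks of a root block already exist when that root block is installed is exactly the justification the paper leaves implicit; it is a clean way to make the last step rigorous.

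Where your proposal diverges from the paper is in scope.  In the paper, ``legal linearization ordering'' means only the three properties above; the correctness of dequeue responses is \emph{not} part of \Cref{linearSat} but is stated and proved separately as \Cref{linearCorrect} (with the \fld{size} invariant isolated as \Cref{sizeCorrectness}).  So your final paragraph---the \fld{size} induction, the empty-queue test, the computation of $e$, and the appeal to \Cref{lem::get}---is all correct, but it belongs to those later lemmas rather than to \Cref{linearSat}.  If you trim the proposal to its first three paragraphs you have a faithful (and slightly more detailed) version of the paper's proof.
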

\begin{proof}
By \Cref{lem::noDuplicates}, $L$ is a permutation of a subset of the operations in the execution.
By \Cref{lem::appendExactlyOnce}, each terminating operation is propagated to the root before it terminates,
so it appears in $L$.
Also, if $op_{1}$ terminates before $op_{2}$ begins, then $op_{1}$ 
is propagated to the root before $op_2$ begins, so $op_1$ appears before $op_2$ in $L$.
\end{proof}

A simple proof (in Appendix \ref{app::tracingDetails}) shows that \fld{size} fields are computed correctly.
\begin{restatable}{lemma}{sizeCorrectRes}
\label{sizeCorrectness}
If the operations of $\var{root}.\fld{blocks}[0..b]$ are applied sequentially in the order of~$L$ on an initially empty queue, the resulting queue has $\var{root}.\fld{blocks}[b].\size$ elements.  
\end{restatable}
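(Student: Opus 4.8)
The plan is to prove the claim by induction on $b$, tracking the queue size as the operations of each successive root block are applied in the order given by $L$.

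For the base case $b=0$, I would note that $\var{root}.\fld{blocks}[0]$ is the dummy block $B_0$ with $E(B_0)$ and $D(B_0)$ empty and all integer fields (including \fld{size}) initialized to $0$; applying no operations to the empty queue leaves exactly $0 = \var{root}.\fld{blocks}[0].\size$ elements, as required.

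For the inductive step, write $B = \var{root}.\fld{blocks}[b]$ and $B' = \var{root}.\fld{blocks}[b-1]$, and assume the claim holds for $b-1$, so that after applying the operations of blocks $0..b-1$ in $L$-order the queue has $s' = B'.\size$ elements. By \Cref{lem::sum}, block $b$ contributes $e = B.\fld{sum\sub{enq}} - B'.\fld{sum\sub{enq}}$ enqueues and $d = B.\fld{sum\sub{deq}} - B'.\fld{sum\sub{deq}}$ dequeues. The key structural fact I would invoke is that, by (\ref{linearization}), $L$ places all of $E(B)$ before all of $D(B)$. Starting from size $s'$, the $e$ enqueues raise the size to $s'+e$, and then the $d$ dequeues are applied, each removing an element while the queue is non-empty and leaving the size at $0$ once it empties (a null dequeue). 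Because all enqueues precede all dequeues, the size is non-increasing throughout the dequeue phase, so the final size is exactly $\max(0, s'+e-d)$. Substituting the values of $e$ and $d$ shows this equals
\[
\max\bigl(0,\; B'.\size + (B.\fld{sum\sub{enq}} - B'.\fld{sum\sub{enq}}) - (B.\fld{sum\sub{deq}} - B'.\fld{sum\sub{deq}})\bigr),
\]
which is precisely the value assigned to $B.\size$ at line \ref{computeLength} of \op{CreateBlock}, completing the induction.

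The one point that requires care—and the main thing to justify rather than to compute—is the flooring at $0$. It is correct only because $L$ orders all of a block's enqueues before all of its dequeues, so within the block the queue size attains its minimum at the very end of the dequeue phase; this is what guarantees that exactly $\max(0, s'+e-d)$ elements remain and that any excess dequeues are precisely the null ones. If enqueues and dequeues of the same block were interleaved differently, the count of null dequeues (and hence the final size) could differ, and the simple $\max(0,\cdot)$ recurrence would no longer be valid. I do not expect any genuine difficulty beyond stating this ordering fact cleanly, since everything else is a routine unwinding of \Cref{lem::sum} together with the size recurrence.
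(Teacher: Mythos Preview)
Your proposal is correct and follows essentially the same approach as the paper: induction on $b$, with the base case handled by the dummy block's zero fields, and the inductive step combining \Cref{lem::sum} (to identify the number of enqueues and dequeues in block $b$) with the fact that $L$ places all of $E(B)$ before $D(B)$, yielding the $\max(0,\cdot)$ recurrence that matches line~\ref{computeLength}. Your extra paragraph justifying why the flooring at $0$ is valid is a nice elaboration of a point the paper states more tersely (``with enqueues before dequeues as specified by $L$'').
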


Next, we show operations return the same response as they would in the sequential execution $L$.

\begin{lemma}\label{linearCorrect}
Each terminating dequeue returns the response it would in the sequential execution $L$.
\end{lemma}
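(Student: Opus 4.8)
The plan is to prove that each terminating dequeue returns exactly the value it would return in the sequential execution defined by $L$. The overall strategy is to show that the code's computations (in \op{IndexDequeue}, \op{FindResponse}, and \op{GetEnqueue}) faithfully mirror the semantics of executing $L$ sequentially. I would organize the argument around the two cases a dequeue faces: returning \nl\ (the queue is empty at linearization time) or returning the value of a specific enqueue.

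First I would set up notation. Consider a terminating dequeue $op$; by \Cref{lem::appendExactlyOnce} it is contained in exactly one block of each node on its leaf-to-root path, and in particular in exactly one root block. Line \ref{invokeIndexDequeue} calls \op{IndexDequeue}, and by \Cref{lem::indexDequeue} this returns $\langle b, i\rangle$ such that $op$ is the $i$th dequeue of $D(\var{root}.\fld{blocks}[b])$. So in the ordering $L$ of (\ref{linearization}), $op$ is preceded by $E(B_1)D(B_1)\cdots E(B_{b-1})D(B_{b-1})\cdot E(B_b)$ and then the first $i-1$ dequeues of $D(B_b)$. The core task is to verify that \op{FindResponse}$(b,i)$ computes the correct response for a dequeue in exactly this position.

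Next I would handle the emptiness test. By \Cref{sizeCorrectness}, $\var{root}.\fld{blocks}[b-1].\size$ is the queue size just before block $B_b$'s operations are applied in $L$. Since all enqueues of $B_b$ precede all dequeues of $B_b$ in $L$, just before the $i$th dequeue of $B_b$ the queue has absorbed all $\var{num\sub{enq}}$ enqueues of $B_b$ (computed at line \ref{FRNum}) and the preceding $i-1$ dequeues. I would argue that the queue is empty when $op$ is linearized iff the number of elements available, namely $\var{root}.\fld{blocks}[b-1].\size + \var{num\sub{enq}}$, is less than $i$, which is exactly the test at line \ref{checkEmpty}; this justifies returning \nl\ at line \ref{returnNull}. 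For the non-null case I would show that $op$ returns the value of the enqueue whose overall rank among all enqueues in $L$ is $\var{e}$, as computed at line \ref{computeE}: the number of non-null dequeues strictly before $op$ equals $\var{root}.\fld{blocks}[b-1].\fld{sum\sub{enq}} - \var{root}.\fld{blocks}[b-1].\size$ (the count of dequeues already matched to enqueues through block $b-1$) plus the $i-1$ earlier dequeues of $B_b$, and a FIFO queue hands the $r$th non-null dequeue the $r$th enqueued element. Finally, \Cref{lem::get} guarantees that \op{GetEnqueue}$(\var{root}, b_e, i_e)$ returns the argument of the $\var{e}$th enqueue in $L$, where the binary search at line \ref{FRb} locates its root block and line \ref{FRi} computes its rank within that block via \Cref{lem::sum}.

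\textbf{The main obstacle} I anticipate is the bookkeeping for the rank computation: carefully verifying that $\var{e} = i + \var{root}.\fld{blocks}[b-1].\fld{sum\sub{enq}} - \var{root}.\fld{blocks}[b-1].\size$ is indeed the rank of the matching enqueue. This requires cleanly establishing the FIFO invariant that, in any prefix of $L$ applied to an initially empty queue, the number of non-null dequeues equals $(\text{number of enqueues}) - (\text{current size})$, and that the $r$th non-null dequeue returns the $r$th enqueued value regardless of how enqueues and dequeues are interleaved across block boundaries. Once that invariant is stated and proved by induction on the prefixes of $L$ (using \Cref{sizeCorrectness} to track sizes and \Cref{lem::sum} to track the enqueue and dequeue counts), the arithmetic identities in \op{FindResponse} follow directly, and combining them with \Cref{lem::indexDequeue} and \Cref{lem::get} completes the proof.
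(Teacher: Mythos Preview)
Your proposal is correct and follows essentially the same approach as the paper's proof: invoke \Cref{lem::appendExactlyOnce} and \Cref{lem::indexDequeue} to locate the dequeue as the $i$th dequeue of root block $b$, use \Cref{sizeCorrectness} and \Cref{lem::sum} to justify the emptiness test at line~\ref{checkEmpty}, then in the non-null case argue that $e$ is the rank of $op$ among non-null dequeues via the identity (non-null dequeues in a prefix) $=$ (enqueues) $-$ (size), and finish with \Cref{lem::sum} and \Cref{lem::get} for the lookup of the matching enqueue. Your explicit identification of the FIFO invariant and the need to check that the first $i-1$ dequeues of $B_b$ are all non-null (which follows from the failed emptiness test) is slightly more careful than the paper's presentation, but the structure is the same.
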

\begin{proof}
If a dequeue $D$ terminates, it is contained in some block in the root, by \Cref{lem::appendExactlyOnce}.
By \Cref{lem::indexDequeue}, $D$'s call to \op{IndexDequeue} on line \ref{invokeIndexDequeue}
returns a pair $\langle b,i\rangle$ such that $D$ is the $i$th dequeue in the block 
$B=\var{root}.\fld{blocks}[b]$.
$D$ then calls \op{FindResponse}($b,i$) on line \ref{deqRest}.
By \Cref{sizeCorrectness}, the queue contains $\var{root}.\fld{blocks}[b-1].\fld{size}$ elements
after the operations in $\var{root}.\fld{blocks}[1..b-1]$ are performed sequentially 
in the order given by $L$.
By \Cref{lem::sum}, the value of \var{num\sub{enq}} computed on line \ref{FRNum}
is the number of enqueues in $B$.
Since the enqueues in block $B$ precede the dequeues,
the queue is empty when the $i$th dequeue of $B$ occurs if 
$\var{root}.\fld{blocks}[b-1].\fld{size} + \var{num\sub{enq}} < i$.
So $D$ returns \nl\ on line \ref{returnNull} if and only if it would do so in the sequential
execution $L$.
Otherwise, the size of the queue after doing the operations in $\var{root}.\fld{blocks}[0..b-1]$
in the sequential execution $L$ is $\var{root}.\fld{blocks}[b-1].\fld{sum\sub{enq}}$ minus
the number of non-\nl\ dequeues in that prefix of $L$.
Hence, line \ref{computeE} sets $e$ to the rank of $D$ among all the non-\nl\ dequeues in $L$.
Thus, in the sequential execution~$L$, $D$ returns the value enqueued by the $e$th enqueue in $L$.
By \Cref{lem::sum}, this enqueue is the $i_e$th enqueue 
in $E(\var{root}.\fld{blocks}[b_e])$, where
$b_e$ and $i_e$ are the values $D$ computes on line \ref{FRb} and \ref{FRi}.
By \Cref{lem::get}, the call to \op{GetEnqueue} returns the argument of the required enqueue.
\end{proof}

Combining \Cref{linearSat} and \Cref{linearCorrect} provides our main result.

\begin{mytheorem}
The queue implementation is linearizable.
\end{mytheorem}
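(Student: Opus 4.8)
The plan is to exhibit the ordering $L$ defined in (\ref{linearization}) as the linearization witness and verify it satisfies the two defining requirements of linearizability: it must respect the real-time order of operations while including every completed operation, and it must be a legal sequential history of a queue in which each operation returns exactly its concurrent response. Essentially all of the work has been pushed into the two preceding lemmas, so the theorem amounts to assembling them.

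First I would invoke \Cref{linearSat} to discharge the ordering requirement. That lemma already establishes, via \Cref{lem::noDuplicates} and \Cref{lem::appendExactlyOnce}, that $L$ is a permutation of a subset of the operations containing every terminating operation, and that whenever $op_1$ finishes before $op_2$ begins, $op_1$ precedes $op_2$ in $L$. Since $L$ is built as a concatenation of the per-block enqueue and dequeue sequences, it is automatically a total order on this subset, so no further argument is needed to see that it is a genuine linearization ordering consistent with real time.

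Next I would use \Cref{linearCorrect} for response consistency. Enqueues are void and carry no return value, so they are trivially consistent with any sequential history; only dequeues have a response to check. \Cref{linearCorrect} shows that each terminating dequeue returns exactly the value it would obtain by running the operations in the order $L$ on an initially empty queue: either \nl\ when the queue is empty at its linearization point, or the argument of the appropriately ranked enqueue otherwise. Because the response computation returns the argument of the $e$th enqueue for a dequeue of rank $e$ among the non-null dequeues, it reproduces precisely the FIFO behavior of a sequential queue, so matching it certifies that $L$ obeys the queue specification.

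Combining these two facts, $L$ is a real-time--respecting total order on all completed operations whose responses agree with those of the concurrent execution, which is exactly the definition of a linearizable implementation. There is no substantial obstacle remaining at this level; the only points requiring care are to note explicitly that enqueues need no response check and that the sequential FIFO specification is verified implicitly through the rank-based response formula of \op{FindResponse} rather than as a separate standalone argument.
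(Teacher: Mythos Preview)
Your proposal is correct and follows exactly the same approach as the paper: the paper's proof is the single sentence ``Combining \Cref{linearSat} and \Cref{linearCorrect} provides our main result,'' and you have simply spelled out why those two lemmas together establish linearizability. Your additional remarks about enqueues having no response to check and about the FIFO specification being verified via the rank formula are accurate elaborations of points the paper leaves implicit.
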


\here{Should there be a lemma somewhere in this section that explicitly shows that each
object we dereference is non-null?}


\section{Analysis}
We now analyze the number of steps and the number of \op{CAS} instructions performed by operations.

\begin{proposition}\label{casbound}
Each \op{Enqueue} or \op{Dequeue} operation performs $O(\log p)$ \op{CAS} instructions.
\end{proposition}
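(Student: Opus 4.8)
The plan is to account for every \op{CAS} instruction by first locating the lines that issue them, then bounding the number of \op{CAS} instructions per auxiliary routine by a constant, and finally bounding the number of routine invocations by the height of the \ordering\ tree.

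First I would note that \op{CAS} instructions are issued only at line \ref{cas} of \op{Refresh} and lines \ref{setSuper1} and \ref{incrementHead} of \op{Advance}. Consequently the routines \op{IndexDequeue}, \op{FindResponse}, and \op{GetEnqueue}, which a \op{Dequeue} invokes after propagation, perform no \op{CAS} at all: they only read shared fields and do local arithmetic and binary searches. Hence for both \op{Enqueue} and \op{Dequeue} the entire \op{CAS} cost is incurred inside the call to \op{Append}, and it suffices to bound the \op{CAS} instructions performed there.

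Next I would bound the \op{CAS} instructions per \op{Refresh}. Each \op{Advance} performs at most two \op{CAS} instructions (lines \ref{setSuper1} and \ref{incrementHead}). A single \op{Refresh}(\var{v}) invokes \op{Advance} at most twice in its loop over $\{\fld{left},\fld{right}\}$ (line \ref{helpAdvance}), performs at most one \op{CAS} at line \ref{cas}, and makes one final \op{Advance} call at line \ref{advance}; together these are $O(1)$ \op{CAS} instructions. Then I would observe that \op{Propagate} performs a double \op{Refresh} (lines \ref{firstRefresh} and \ref{secondRefresh}) at each node it visits and recurses to the parent, so it executes $O(1)$ \op{CAS} per node on the path from the invoking leaf's parent up to the root. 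Since the tree has height $\ceil{\log_2 p}$, this path has $O(\log p)$ nodes, and the writes to the leaf in \op{Append} (lines \ref{appendLeaf} and \ref{appendEnd}) are ordinary writes rather than \op{CAS} instructions, so the total is $O(\log p)$.

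The only point that needs care is the helping \op{Advance} calls: a \op{Refresh} on a node may help advance its children's heads via line \ref{helpAdvance}, so a single node's head can be advanced by \op{Refresh}es running at the node itself or at its parent. This does not threaten the bound, however, because each such helping \op{Advance} is charged to the \op{Refresh} that invokes it, and that \op{Refresh} already contributes only $O(1)$ \op{CAS} instructions to its own operation's tally. Thus no invocation is double counted, and the $O(\log p)$ bound holds for both \op{Enqueue} and \op{Dequeue}.
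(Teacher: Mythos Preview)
Your proof is correct and follows essentially the same approach as the paper: identify that \op{CAS} occurs only inside \op{Refresh} and \op{Advance}, bound the number of \op{CAS} instructions per \op{Refresh} by a constant, and use the $\ceil{\log_2 p}$ tree height together with the double-\op{Refresh} per level to obtain $O(\log p)$. Your additional remarks (that \op{IndexDequeue}, \op{FindResponse}, and \op{GetEnqueue} issue no \op{CAS}, and that helping \op{Advance} calls are charged to the invoking \op{Refresh}) are correct refinements but not strictly necessary, since the paper's proof already implicitly charges everything to the at most two \op{Refresh}es per level.
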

\begin{proof}
An operation invokes \op{Refresh} at most twice at each of the $\ceil{\log_2 p}$ levels of the tree.
A \op{Refresh} does at most 5 \op{CAS} steps: one in line \ref{cas} and two during each \op{Advance} in line \ref{helpAdvance} or~\ref{advance}.
\end{proof}

\begin{lemma}\label{dSearchTime}
The search that \op{FindResponse}$(b,i)$ does at line \ref{FRb} to find the index $b_e$ of the block in the root containing the $e$th enqueue takes $O(\log (\var{root}.\fld{blocks}[b_e].\size + \var{root}.\fld{blocks}[b-1].\size))$ steps.
\end{lemma}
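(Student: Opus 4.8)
The plan is to use the fact that the search on line~\ref{FRb} is a doubling (exponential) search: as described in the text, it starts at index $b-1$ and probes the \fld{sum\sub{enq}} fields at indices $b-1, b-2, b-4, b-8, \ldots$ until it passes $b_e$, and then binary searches the bracketed range. Such a search inspects only $O(\log(b-b_e))$ blocks, so it suffices to prove the purely combinatorial bound
\[
b - b_e \;\le\; \var{root}.\fld{blocks}[b_e].\size + \var{root}.\fld{blocks}[b-1].\size.
\]
The whole argument rests on one observation. By the analysis of \op{FindResponse} in \Cref{linearCorrect}, the dequeue whose response is being computed is the $e$th non-null dequeue in $L$ and returns the element $x$ of the $e$th enqueue, which lies in $\var{root}.\fld{blocks}[b_e]$ (so $\var{root}.\fld{blocks}[b_e\!-\!1].\fld{sum\sub{enq}} < e \le \var{root}.\fld{blocks}[b_e].\fld{sum\sub{enq}}$) and is removed only in block $b$. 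Hence $x$ sits in the queue at the end of every block $b_e, b_e+1, \ldots, b-1$.

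To measure the distance, I would classify each block in the range $b_e < b' < b$ by whether it performs a non-null dequeue. Since $x$ is present throughout these blocks, the queue is never empty at them, so no block in this range can contain a \emph{null} dequeue. Therefore each such block is of exactly one of two types: (i) it contains at least one non-null dequeue, or (ii) it contains only enqueues. I will bound the two types separately using the conservation identity $\var{root}.\fld{blocks}[b'].\size = \var{root}.\fld{blocks}[b'].\fld{sum\sub{enq}} - d_{b'}$, where $d_{b'}$ is the number of non-null dequeues in $\var{root}.\fld{blocks}[1..b']$ (each enqueue adds one element and each non-null dequeue removes one); this follows from \Cref{lem::sum} and \Cref{sizeCorrectness}. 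Because $x$ is undequeued through block $b-1$ we have $d_{b-1}\le e-1$, and because $x$ is already enqueued by block $b_e$ we have $\var{root}.\fld{blocks}[b_e].\fld{sum\sub{enq}}\ge e$.

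For the type-(i) blocks, each non-null dequeue advances the front of the queue by one, so their number is at most $d_{b-1}-d_{b_e}\le (e-1)-d_{b_e}$; combined with $\var{root}.\fld{blocks}[b_e].\size = \var{root}.\fld{blocks}[b_e].\fld{sum\sub{enq}} - d_{b_e}\ge e - d_{b_e}$, this is at most $\var{root}.\fld{blocks}[b_e].\size - 1$. For the type-(ii) blocks, each contains at least one enqueue, so their number is at most the number of enqueues in the range, namely $\var{root}.\fld{blocks}[b-1].\fld{sum\sub{enq}} - \var{root}.\fld{blocks}[b_e].\fld{sum\sub{enq}} \le \var{root}.\fld{blocks}[b-1].\fld{sum\sub{enq}} - e$; combined with $\var{root}.\fld{blocks}[b-1].\size = \var{root}.\fld{blocks}[b-1].\fld{sum\sub{enq}} - d_{b-1} \ge \var{root}.\fld{blocks}[b-1].\fld{sum\sub{enq}} - e + 1$, this is at most $\var{root}.\fld{blocks}[b-1].\size - 1$. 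Summing, the $b-b_e-1$ blocks strictly between $b_e$ and $b$ number at most $\var{root}.\fld{blocks}[b_e].\size + \var{root}.\fld{blocks}[b-1].\size - 2$, so $b - b_e \le \var{root}.\fld{blocks}[b_e].\size + \var{root}.\fld{blocks}[b-1].\size$, and the search cost $O(\log(b-b_e))$ gives the claim (the trivial case $b_e = b$ costs $O(1)$). I expect the main obstacle to be the bookkeeping that turns ``$x$ survives'' into these two numeric inequalities — in particular ruling out null dequeues in the open range and pinning down the front-advancement count — after which the arithmetic with the \fld{sum\sub{enq}} and \size\ fields is routine.
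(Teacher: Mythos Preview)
Your argument is correct and even yields a slightly sharper inequality than the paper's ($b-b_e \le B_{b_e}.\size + B_{b-1}.\size$ versus the paper's $b-1-b_e \le 2\,B_{b_e}.\size + B_{b-1}.\size$), though of course both give the same $O(\log(\cdot))$ bound.

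The overall shape is the same as the paper's: bound the doubling search by $O(\log(b-b_e))$ and then bound $b-b_e$ using the fact that the dequeued element $x$ (the $e$th enqueue) remains in the queue from the end of block $b_e$ through block $b-1$. The decomposition you use, however, is different. The paper bounds the total number of \emph{operations} in blocks $b_e{+}1,\ldots,b{-}1$: it first caps the dequeues in that range by $B_{b_e}.\size$ (else $x$ would be removed too early), then uses the size recurrence (\Cref{sizeCorrectness}) to cap the enqueues by $B_{b-1}.\size + B_{b_e}.\size$, and finally invokes \Cref{blockNotEmpty} to turn an operation count into a block count. You instead classify the intermediate \emph{blocks} directly---those containing a dequeue versus those containing only enqueues---after first observing that no dequeue in the open range can be null because $x$ is present throughout. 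You then bound each class via the identity $\size = \fld{sum\sub{enq}} - (\text{non-null dequeues so far})$ together with $d_{b-1}\le e-1$ and $B_{b_e}.\fld{sum\sub{enq}}\ge e$. Your explicit ``no null dequeues'' step is a nice intermediate lemma that the paper uses only implicitly, and it is what lets you avoid the extra $B_{b_e}.\size$ term.
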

\begin{proof}
Let the blocks in the root be $B_1, \ldots, B_\ell$.
The doubling search for $b_e$ takes $O(\log (b-b_e))$ steps,
so we prove $b - b_e \leq 2 \cdot B_{b_e}.\size + B_{b-1}.\size + 1$.
If $b \leq b_e+1$, then this is trivial, so assume  $b>b_e+1$.
As shown in \Cref{linearCorrect}, the dequeue that calls \op{FindResponse} is in $B_b$ and is supposed to return an enqueue in $B_{b_e}$.
Thus, there can be at most $B_{b_e}.\size$ dequeues in 
$D(B_{b_e+1}) \cdots D(B_{b-1})$; otherwise in the sequential execution $L$,
all elements enqueued before the end of
$E(B_{b_e})$ would be dequeued before $D(B_b)$. 
Furthermore, by \Cref{sizeCorrectness}, the size of the queue  after the prefix of $L$ corresponding to 
$B_1,\ldots,B_{b-1}$  is 
$B_{b-1}.\size \geq B_{b_e}.size + |E(B_{b_e+1})\cdots E(B_{b-1})| - |D(B_{b_e+1})\cdots D(B_{b-1})|$.
Thus, $|E(B_{b_e+1})\cdots E(B_{b-1})| \leq B_{b-1}.\size + |D(B_{b_e+1})\cdots D(B_{b-1})| \leq B_{b-1}.\size + B_{b_e}.\size$.
So, the total number of operations in $B_{b_e+1}, \ldots, B_{b-1}$ is at most
$B_{b-1}.\size + 2\cdot B_{b_e}.\size$.
Each of these $b-1-b_e$ blocks contains at least one operation, by \Cref{blockNotEmpty}.
So, $b-1-b_e \leq B_{b-1}.\size + 2\cdot B_{b_e}.\size$.
\end{proof}

The following lemma helps bound the time for \op{GetEnqueue}.
\begin{lemma}\label{blockSize}
Each block $B$ in each node contains at most one operation of each process.
If $c$ is the execution's maximum point contention, $B$ has at most $c$ direct~subblocks.
\end{lemma}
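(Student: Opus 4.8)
The plan is to isolate a single timing fact and apply it twice. Call it the \emph{freezing claim}: if a block $B=\var{v}.\fld{blocks}[b]$ has been installed in $\var{v}$ before some operation $op$ begins, then $B$ does not contain $op$. To prove this, note that the \op{Refresh} that installed $B$ fixed $B.\eleft$ and $B.\eright$ at lines \ref{createEndLeft}--\ref{createEndRight} by reading the \fld{head} fields of $\var{v}$'s children, and these fields are immutable thereafter. Say $op$ is performed by a process whose leaf lies in $\var{v}$'s left subtree (the right case is symmetric). When $B$ was created it read $h_L=\var{v}.\fld{left}.\head$ and set $B.\eleft=h_L-1$. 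By \Cref{lem::headPosition}, indices $0,\dots,h_L-1$ of $\var{v}.\fld{left}.\fld{blocks}$ were already occupied by permanently installed blocks at that moment, and none of them can contain $op$, since $op$'s leaf block does not yet exist. Hence any block of $\var{v}.\fld{left}$ that ever contains $op$ is installed later, necessarily at an index $\geq h_L > B.\eleft$; by the subblock definition~(\ref{defsubblock}), $op$ is then not a subblock of $B$, so $op\notin B$.

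For the first claim, suppose a block $B$ in $\var{v}$ contained two operations $op_1$ and $op_2$ of the same process $P$, with $op_1$ preceding $op_2$ in $P$'s program order. Since a process runs its operations sequentially, $op_1$ terminates before $op_2$ begins. By \Cref{lem::appendExactlyOnce} and \Cref{lem::noDuplicates}, when $op_1$ terminates $\var{v}$ contains $op_1$ in exactly one block, which must be $B$; hence $B$ is installed before $op_1$ terminates, and therefore before $op_2$ begins. The freezing claim applied to $op_2$ then gives $op_2\notin B$, a contradiction. (Leaf blocks hold exactly one operation, so the claim is immediate there.)

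For the second claim, first bound the number of direct subblocks by the number of operations that $B$ contains: by \Cref{blockNotEmpty} each direct subblock contains at least one operation, by \Cref{lem::subblocksDistinct} distinct direct subblocks have disjoint operation sets, and by~(\ref{defSeqs}) the operations of $B$ are exactly the union of those of its direct subblocks. By the first claim, the number of operations in $B$ equals the number of distinct processes having an operation in $B$, so it remains to show this count is at most $c$, the maximum point contention. I would exhibit a single instant at which all operations in $B$ are simultaneously pending. Let $op^\ast$ be the operation in $B$ whose \op{Append} is invoked latest. If some operation $op'$ in $B$ terminated before $op^\ast$ began, then exactly as above $\var{v}$ would contain $op'$ in $B$ at $op'$'s termination, so $B$ would be installed before $op^\ast$ begins, and the freezing claim would give $op^\ast\notin B$, a contradiction. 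Hence no operation of $B$ terminates before $op^\ast$ begins; at the instant $op^\ast$ begins, every operation of $B$ has begun and none has terminated, so all are pending. The point contention at that instant is at least the number of distinct processes owning these operations, which is therefore at most $c$. Combining the inequalities yields at most $c$ direct subblocks.

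The main obstacle is the freezing claim: one must argue precisely that an operation starting after $B$ is installed cannot slip into $B$'s already-frozen subblock range. This rests on pinning down, via \Cref{lem::headPosition} and \Cref{nonDecreasingHead}, that the block eventually carrying such an operation lands strictly beyond $B.\eleft$ (or $B.\eright$) in the relevant child. Everything else is bookkeeping with the disjointness and non-emptiness lemmas already established.
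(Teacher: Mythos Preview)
Your proof is correct and follows essentially the same approach as the paper's. The paper also argues that $B$ must be installed before the earliest of its operations terminates (using \Cref{lem::appendExactlyOnce}), and then uses the same ``freezing'' idea---that a block created before an operation begins cannot contain it---though the paper states this implicitly in one sentence rather than isolating it as a claim. The only cosmetic difference is the choice of witness instant for the contention bound: the paper picks the \emph{earliest termination} $t$ of any operation in $B$ and observes all of $B$'s operations are pending at $t$, whereas you pick the \emph{latest invocation} and argue the same. These are dual and equally valid.

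One small remark on rigor: your justification of the freezing claim (``none of them can contain $op$, since $op$'s leaf block does not yet exist'') is really an induction on the height of $\var{v}$---a block's leaf subblocks are all installed before the block itself is created, which follows by iterating your one-level argument down to the leaves. This is the same step the paper glosses over, so you are at least as careful as the original.
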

\begin{proof}
Suppose $B$ contains an operation of process $p$.
Let $op$ be the earliest operation by $p$ contained in $B$.
When $op$ terminates, $op$ is contained in $B$ by \Cref{lem::appendExactlyOnce}.
Thus, $B$ cannot contain any later operations by $p$, since $B$ is created before
those operations are invoked.

Let $t$ be the earliest termination of any operation contained in $B$.
By \Cref{lem::appendExactlyOnce}, $B$ is created before $t$, so all operations contained in $B$
are invoked before $t$.  Thus, all are  running concurrently at $t$, so $B$ contains at most $c$ operations.
By definition, the direct subblocks of $B$ contain these $c$ operations, and each operation is contained
in exactly one of these subblocks, by Lemma \ref{lem::subblocksDistinct}.
By \Cref{blockNotEmpty}, each direct subblock of $B$ contains at least one operation,
so $B$ has at most $c$ direct subblocks.
\end{proof}

We now bound step complexity in terms of the number of processes $p$, the maximum contention $c\leq p$, and the size of the queue. 

\begin{mytheorem}\label{enqDeqTime}
Each \op{Enqueue} and null Dequeue takes $O(\log p)$ steps 
and each non-null \op{Dequeue} takes
$O(\log p\log c + \log q_e+ \log q_d)$ steps,
where $q_d$ is the size of the queue when the \op{Dequeue} is linearized and 
$q_e$ is the size of the queue when the \op{Enqueue} of the value returned is linearized.
\end{mytheorem}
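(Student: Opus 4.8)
The plan is to bound the step complexity of each operation by decomposing its work into the three phases: appending to the leaf, propagating to the root, and (for dequeues) computing the response. The first two phases are common to all operations and I would handle them first. Appending a block to the leaf is $O(1)$ work. Propagation calls \op{Propagate}, which performs a double \op{Refresh} at each of the $\ceil{\log_2 p}$ internal nodes on the path from the leaf to the root. I would argue that each \op{Refresh} (and each \op{Advance} and \op{CreateBlock} it calls) does only $O(1)$ work, since \op{CreateBlock} just reads a constant number of \fld{blocks} entries and computes sums and the \fld{size} field in $O(1)$ time, as established in \Cref{sec:fields}. Hence both phases together cost $O(\log p)$ steps, which already settles the bound for \op{Enqueue}, since an enqueue terminates after propagation.

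For a \op{Dequeue}, I would next analyze \op{IndexDequeue}, called on line \ref{invokeIndexDequeue}. This routine walks up the tree one level at a time from the leaf to the root; at each of the $O(\log p)$ levels it does only $O(1)$ arithmetic using the \fld{super}, \eleft, \eright, and \fld{sum\sub{deq}} fields (correctness guaranteed by \Cref{superRelation} and \Cref{lem::indexDequeue}), so \op{IndexDequeue} contributes $O(\log p)$ steps. A null dequeue is detected by \op{FindResponse} at line \ref{checkEmpty} after only $O(1)$ further work, which gives the $O(\log p)$ bound for null dequeues. The remaining, and most delicate, part is a non-null \op{Dequeue}, whose cost is dominated by the binary/doubling search at line \ref{FRb} and the descent \op{GetEnqueue}.

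For the search at line \ref{FRb}, I would invoke \Cref{dSearchTime}, which bounds it by $O(\log(\var{root}.\fld{blocks}[b_e].\size + \var{root}.\fld{blocks}[b-1].\size))$ steps. I then need to connect these \fld{size} quantities to the parameters $q_e$ and $q_d$ in the statement. By \Cref{sizeCorrectness}, $\var{root}.\fld{blocks}[b-1].\size$ is exactly the size of the queue just before the dequeue's block is applied, which is $O(q_d)$ (up to the operations within the same block, a number at most $p$), and $\var{root}.\fld{blocks}[b_e].\size$ is the queue size just after the block containing the returned enqueue, which relates to $q_e$. So this search costs $O(\log q_e + \log q_d)$ steps. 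For \op{GetEnqueue}, I would argue it descends one level per recursive call over $O(\log p)$ levels, and at each node it performs a binary search over the direct subblocks of the current block. By \Cref{blockSize}, a block has at most $c$ direct subblocks, so each such binary search costs $O(\log c)$ steps, giving $O(\log p \log c)$ total for the descent. Summing the three dequeue phases yields $O(\log p) + O(\log p) + O(\log p \log c + \log q_e + \log q_d) = O(\log p \log c + \log q_e + \log q_d)$, as claimed.

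The main obstacle I anticipate is the careful handling of the binary-search ranges inside \op{GetEnqueue} at line \ref{getChild}: I must confirm that the range $[\var{v}.\fld{blocks}[b-1].\edir+1 .. \var{v}.\fld{blocks}[b].\edir]$ over which the search runs has length at most $c$ (so that the per-level cost is genuinely $O(\log c)$ rather than $O(\log(\text{global block count}))$), which is precisely what \Cref{blockSize} provides via the bound on the number of direct subblocks. A secondary subtlety is translating the \fld{size} fields appearing in \Cref{dSearchTime} into the execution-level quantities $q_e$ and $q_d$ defined relative to linearization points; here I would lean on \Cref{sizeCorrectness} to identify \fld{size} with the true queue size at the appropriate point in $L$, absorbing the intra-block slack of at most $p$ elements into the logarithm without changing the asymptotic bound.
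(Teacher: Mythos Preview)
Your proposal is correct and mirrors the paper's own proof: the same decomposition into \op{Append}/\op{Propagate} ($O(\log p)$), \op{IndexDequeue} ($O(\log p)$), the doubling search at line \ref{FRb} bounded via \Cref{dSearchTime} together with \Cref{sizeCorrectness} and the at-most-$p$ intra-block slack from \Cref{blockSize}, and the \op{GetEnqueue} descent bounded by $O(\log p \log c)$ via \Cref{blockSize}. The only cosmetic difference is that the paper writes the search cost as $O(\log(q_e+q_d+p))$ before absorbing the $\log p$ term, whereas you jump directly to $O(\log q_e + \log q_d)$ after noting the slack.
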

\begin{proof}
An \op{Enqueue} or null \op{Dequeue} creates a block, appends it to the process's 
leaf and propagates it to the root.  The \op{Propagate}  does $O(1)$ steps 
at each node on the path from the process's leaf to the root.
A null \op{Dequeue} additionally calls \op{IndexDequeue}, which also does $O(1)$ steps
at each node on this path. 
So, the total number of steps for either type of operation is $O(\log p)$.

A non-null \op{Dequeue} must also search at line \ref{FRb} and call \op{GetEnqueue}
at line \ref{findAnswer}.
By \Cref{dSearchTime}, the doubling search takes $O(\log(q_e+q_d+p))$ steps,
since the size of the queue can change by at most $p$ within one block (by \Cref{blockSize}).
\op{GetEnqueue} does a binary search within each node on a path from the root to a leaf.
Each node $\var{v}$'s search is within the subblocks of one block in $\var{v}$'s parent.
By \Cref{blockSize}, each such search takes $O(\log c)$ steps, for a total of $O(\log p\log c)$ steps.
\end{proof}

\begin{corollary}
The queue implementation is wait-free.
\end{corollary}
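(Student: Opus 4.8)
The plan is to derive wait-freedom directly from the step-complexity bound of \Cref{enqDeqTime}. Recall that an implementation is wait-free precisely when every operation is guaranteed to terminate within a finite number of its own steps, regardless of how the other processes are scheduled. So the entire task reduces to exhibiting a finite bound on the number of steps each operation takes.

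First I would invoke \Cref{enqDeqTime}, which already bounds the number of steps taken by each operation type: $O(\log p)$ for an \op{Enqueue} or a null \op{Dequeue}, and $O(\log p\log c + \log q_e + \log q_d)$ for a non-null \op{Dequeue}. The parameters appearing in these bounds are all finite: $p$ is the fixed number of processes, $c\leq p$ is the maximum point contention, and $q_e, q_d$ are queue sizes at specific linearization points, each finite in any execution. Hence each operation completes after finitely many of its own steps, which is exactly the definition of wait-freedom.

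The one structural fact worth checking explicitly, beyond simply citing the theorem, is that no operation can loop indefinitely before reaching the configuration the theorem analyzes. This is immediate from the code structure: \op{Propagate} performs exactly a double \op{Refresh} at each of the $\ceil{\log_2 p}$ levels and then recurses strictly upward toward the root, with no mechanism to revisit a node. Crucially, by \Cref{lem::doubleRefresh} even a double \op{Refresh} whose \op{CAS} steps both fail still guarantees that the pending operation has been propagated, so \op{Propagate} never has to retry a level; this is precisely what rules out the unbounded retry loops that afflict CAS-based MS-queue variants. Likewise, \op{IndexDequeue} and \op{GetEnqueue} recurse monotonically up and down the tree respectively, each performing only $O(1)$ work plus a bounded binary search per level, and \op{FindResponse}'s doubling search terminates within the bound established in \Cref{dSearchTime}.

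There is no genuine obstacle here: the corollary is an immediate consequence of the bounded worst-case step complexity, and the absence of retry loops is the only point requiring a (brief) appeal to the code and to \Cref{lem::doubleRefresh}.
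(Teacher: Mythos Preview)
Your proposal is correct and takes essentially the same approach as the paper: the corollary is stated immediately after \Cref{enqDeqTime} with no separate proof, so wait-freedom is meant to follow directly from the finite worst-case step bounds. Your additional remarks about the absence of retry loops and the bounded recursion depth are accurate elaborations of why those bounds hold, but they go beyond what the paper itself spells out.
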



\section{Bounding Space Usage}
\label{reducing}

Operations remain in the \fld{blocks} arrays forever. 
This uses space proportional to the number of enqueues that have been invoked.
Now, we modify the implementation to remove blocks that are no longer needed, so that space usage is
polynomial in $p$ and $q$, while (amortized) step complexity is still polylogarithmic.  For lack of space, details are in Appendix~\ref{reducing-details}.
We replace the \fld{blocks} array in each node by a red-black tree (RBT)
that stores the blocks.
Each block has an additional \fld{index} field
that represents its position within the original \fld{blocks} array, and
blocks in a RBT are sorted by \fld{index}.
The attempt to install a new block in $\fld{blocks}[i]$  on line \ref{cas}
is replaced by an attempt to insert a new block with index $i$ into the RBT.
Accessing the block in $\fld{blocks}[i]$ is 
replaced by searching the RBT for the  index $i$.
The binary searches for a block in line \ref{FRb} and \ref{getChild} can simply search the RBT
using the \fld{sum\sub{enq}} field, since the RBT is also sorted with respect to this field, by \Cref{lem::sum}.
 
Known lock-free search trees have step complexity that includes a term linear in $p$ \cite{EFHR14,Ko20}.  
However, we do not require all the standard search tree operations.
Instead of a standard insertion, we allow a \op{Refresh}'s insertion to fail if another
concurrent \op{Refresh} succeeds in inserting a block, just as the \op{CAS} on line \ref{cas}
can fail if a concurrent \op{Refresh} does a successful \op{CAS}.
Moreover, the insertion should succeed only if the newly inserted block has a larger index than any other block in the RBT.
Thus, we can use a particularly simple concurrent RBT implementation.
A sequential RBT can be made persistent using the classic node-copying technique of 
Driscoll et al.~\cite{DSST89}:  all RBT nodes are immutable, and operations on the 
RBT make a new copy of each RBT node $x$ that must be modified, as well
as each RBT node along the path from the RBT's root to~$x$.
The RBT reachable from the new copy of the root is the result of applying the RBT operation.
This only adds a constant factor to the running time of any routine designed for a (sequential) RBT.
Once a process has performed an update to the RBT representing the blocks of a node 
$\var{v}$ in the \ordering\ tree, 
it uses a CAS to swing $\var{v}$'s pointer from the previous RBT root to the new RBT root.
A search in the RBT can simply read the pointer to the RBT root and perform a standard
sequential search on it.
Bashari and Woelfel ~\cite{DBLP:conf/podc/BashariW21} used persistent RBTs in a similar way for a snapshot data structure.

To prevent RBTs from growing without bound, we would like to discard
blocks that are no longer needed.
Ensuring the size of the RBT is polynomial in $p$ and $q$ will 
also keep the running time of our operations polylogarithmic.
Blocks should be kept if they contain operations still in progress.
Moreover, a block containing an \op{Enqueue}($x$) must be kept until $x$ is dequeued.

To maintain good amortized time, we periodically do a garbage collection (GC) phase.
If a \op{Refresh} on a node adds a block whose \var{index} is a multiple of $G=p^2\ceil{\log p}$, it does GC to remove obsolete blocks from the node's RBT.
To determine which blocks can be thrown away, we use a global array $\var{last}[1..p]$ where 
each process writes the index of the last block in the root
containing a null dequeue or an enqueue whose element it dequeued.
To perform GC, a process reads $\var{last}[1..p]$ and finds the maximum entry $m$.
Then, it helps complete every other process's pending dequeue 
by computing the dequeue's response and writing it in the block in the leaf that represents the dequeue.
Once this helping is complete, it follows from the FIFO property of the queue that elements enqueued 
in $\fld{root.blocks}[1..m-1]$ have all been dequeued, so GC can discard all subblocks of those.
Fortunately, there is an  RBT \op{Split} operation that can remove
these obsolete blocks from an RBT in logarithmic time \cite[Sec.~4.2]{Tar83}.

An operation $op$'s search of a RBT may fail to find the required block $B$ that has been removed 
by another process's GC phase.  If $op$
is a dequeue, $op$ must have been helped before $B$ was discarded, so $op$ can simply read its response from
its own leaf.  If $op$ is an enqueue, it can simply terminate.

Our GC scheme ensures each RBT has  $O(q+p^2\log p)$ blocks, so RBT operations take $O(\log(p+q))$ time.
Excluding GC, an operation does $O(1)$ operations on RBTs at each level of the tree for a total of
$O(\log p\log(p + q))$ steps.
A GC phase takes $O(p\log p \log(p+q))$ steps to help complete all pending operations.
If all processes carry out this GC phase, it takes a total of $O(p^2\log p\log(p+q))=O(G\log(p+q))$ steps.
Since there are at least $G$ operations between two GC phases, each node contributes $O(\log(p+q))$ steps to each operation in an amortized sense.
Adding up over all nodes an operation may have to do GC on, 
an operation spends $O(\log p\log(p+q))$ steps doing GC in an amortized sense.
So the total amortized step complexity is  $O(\log p \log(p+q))$ per operation.


\section{Future Directions}

Our focus was on optimizing step complexity for worst-case executions.
However, our queue has a higher cost than the MS-queue in the best case (when an operation
runs by itself).
Perhaps our queue could be made adaptive by having an operation capture a starting node
in the \ordering\ tree (as in \cite{DBLP:conf/stoc/AfekDT95}) rather than starting at a statically assigned leaf.
A possible application of our queue  might be to use it as the slow path in the
fast-path slow-path methodology  \cite{10.1145/2370036.2145835} to
get a queue that has good performance in practice while also having good worst-case step complexity.

A gap remains between our queue, which takes $O(\log^2 p + \log q)$ steps per operation,
and Attiya and Fouren's $\Omega(\min(c,\log\log p))$ lower bound \cite{DBLP:conf/opodis/AttiyaF17}.
It would be interesting to determine how the true step complexity of lock-free queues (or, more generally, bags)
depends on $p$.
Since a queue is also a bag, our queue is the first lock-free bag we know of that has polylogarithmic step complexity.

We believe the approach used here to implement a lock-free queue 
could be applied to obtain other lock-free
data structures with a polylogarithmic step complexity.
For example, we can easily adapt our routines to implement a  vector data structure that stores a sequence and
provides three operations: \opa{Append}{e} to add an element \var{e} to the end of the sequence,
\opa{Get}{i} to read the \var{i}th element in the sequence, and
\opa{Index}{e} to compute the position of element \var{e} in the sequence.
\here{doublecheck all preceding claims for accuracy.}
We are investigating whether a similar approach can be used for stacks, deques or even priority queues.

\begin{acks}
We thank Franck van Breugel and the anonymous referees for their helpful comments.
Funding was provided by the Natural Sciences and Engineering Research Council of Canada.
\end{acks}

\newpage

\bibliographystyle{ACM-Reference-Format}
\bibliography{queues.bib}

\newpage

\appendix

\section{Detailed Proofs for Section \ref{sec::correctness}}
\label{app::tracingDetails}

Detailed proofs omitted from Section \ref{sec::correctness} for lack of space appear here.

\sumRes*
\begin{proof}
Initially, each \fld{blocks} array  contains only an empty block $B_0$ in location 0.
By definition, $E(B_0)$ and $D(B_0)$ are empty sequences.
Moreover, $B_0.\fld{sum\sub{enq}} = B_0.\fld{sum\sub{deq}} = 0$, so the claim is true.

We show that each installation of a block $B$ into some location $\var{v}.\fld{blocks}[i]$ preserves the claim,
assuming the claim holds before this installation.  We consider two cases.

If $\var{v}$ is a leaf, $B$ was created at line \ref{enqNew} or \ref{deqNew}.
For line \ref{enqNew}, $B$ represents a single enqueue, so $|E(B)|=1$ and $|D(B)|=0$.
Since $B.\fld{sum\sub{enq}}$ is set to $\var{v}.\fld{blocks}[i-1].\fld{sum\sub{enq}}+1$ and
$B.\fld{sum\sub{deq}}$ is set to $\var{v}.\fld{blocks}[i-1].\fld{sum\sub{deq}}$, the claim follows from the hypothesis.
The proof for line~\ref{deqNew}, where $B$ has a single dequeue, is similar.

Now suppose $\var{v}$ is an internal node. By the definition of subblocks in (\ref{defsubblock}) and \Cref{lem::headProgress}, the
subblocks of $\var{v}.\fld{blocks}[1..i]$ are $\var{v}.\fld{left.blocks}[1..B.\eleft]$ 
and $\var{v}.\fld{right.blocks}[1..B.\eright]$.
Thus, the enqueues in $E(\var{v}.\fld{blocks}[0])\cdots E(\var{v}.\fld{blocks}[i])$ are those in\linebreak
$E(\var{v}.\fld{left.blocks}[0]) \cdots E(\var{v}.\fld{left.blocks}[B.\eleft])$ and those in \linebreak
$E(\var{v}.\fld{left.blocks}[0]) \cdots E(\var{v}.\fld{left.blocks}[B.\eright])$.
By the hypothesis, the total number of these enqueues is \linebreak
$\var{v}.\fld{left.blocks}[B.\eleft].\fld{sum\sub{enq}} + \var{v}.\fld{right.blocks}[B.\eright].\fld{sum\sub{enq}}$, \linebreak
which is the value that line \ref{createSumEnq} stored in $B.\fld{sum\sub{enq}}$ when $B$ was created.
The proof for \fld{sum\sub{deq}} (stored on line~\ref{createSumDeq}) is similar.
\end{proof}

\superRelationRes*
\begin{proof}
We first show that $B.\fld{super}\leq s$.
Let $R_s$ be the instance of \op{Refresh}($\var{v}.\fld{parent}$) that installs $B$'s superblock 
in $\var{v}.\fld{parent.blocks}[s]$.
By the definition of subblocks (\ref{defsubblock}), $R_s$'s read $r$ of $\var{v}.head$ at line \ref{createEndLeft} or \ref{createEndRight} obtains a value greater than $b$.
By \Cref{lem::headPosition}, $B.\fld{super}$ is not $\nl$ when $r$ occurs, which means
that $B.\fld{super}$ was set (by line~\ref{setSuper1}) to a value read from $\var{v}.\fld{parent}.\head$ before $r$.
When $r$ occurs, $\var{v}.\fld{parent.blocks}[s] = \nl$, since the later \op{CAS} by $R_s$ at line
\ref{cas} succeeds.
So, by \Cref{lem::headPosition}, $\var{v}.\fld{parent}.\head \leq s$ when $r$ occurs.
Since the value stored in $B.\fld{super}$ was read from $\var{v}.\fld{parent.head}$ before $r$ and the \head\ field is non-decreasing by \Cref{nonDecreasingHead}, it follows that $B.super\leq s$.

Next, we show that $B.\fld{super}\geq s-1$.
The value stored in $B.\fld{super}$ at line \ref{setSuper1} is read from $\var{v}.\fld{parent}.\head$ at line \ref{readParentHead} and \head\ is always at least 1, so $B.\fld{super} \geq 1$.
So, if $s\leq 2$, the claim is trivial.  Assume $s>2$ for the rest of the proof.
By \Cref{lem::headProgress}, $\var{v}.\fld{parent.blocks}[s-1]$ is not $\nl$.  Let $R_{s-1}$ be the call to
$\op{Refresh}(\var{v}.\fld{parent})$ that installed the block in $\var{v}.\fld{parent.blocks}[s-1]$.
Let $r'$ be the step when $R_{s-1}$ reads $s-1$ in $\var{v}.\fld{parent}.\head$ at line \ref{readHead}.
This read $r'$ must be before $B$ is installed in $\var{v}$;
otherwise, \Cref{successfulRefresh} would imply that $B$ is a subblock of one of 
$\var{v}.\fld{parent.blocks}[1..s-1]$, contrary to the hypothesis.
Now, consider the call to \op{Advance}($\var{v}, b$) that writes $B.\fld{super}$.
It is invoked either 
at line \ref{helpAdvance} after seeing $\var{v}.\fld{blocks}[b]\neq \nl$ at line \ref{ifHeadnotNull}
or at line \ref{advance} after ensuring $\var{v}.\fld{blocks}[b]\neq \nl$ at line~\ref{cas}.
Either way, the \op{Advance} is invoked after $B$ is installed, and therefore after $r'$.
By \Cref{nonDecreasingHead}, $\var{v}.\fld{parent}.\head$ is non-decreasing, so 
the value this \op{Advance} reads in $\var{v}.\fld{parent}.\head$ and
writes in $B.\fld{super}$ is greater than or equal to the value $s-1$ that $r'$ reads in $\var{v}.\fld{parent}.\head$.
\end{proof}

\indexDequeueRes*
\begin{proof}
We prove the claim by induction on the depth of node $\var{v}$. The base case where $\var{v}$ is the root is trivial (see Line \ref{indexBaseCase}).
Assuming the claim holds for $\var{v}$'s parent, we prove it for $\var{v}$.
Let $B=\var{v}.\fld{blocks}[b]$ and $B'$ be the superblock of $B$.
\op{IndexDequeue}($\var{v}, b, i$) first computes the index $sup$ of $B'$ in $\var{v}.\fld{parent}$.
By \Cref{superRelation}, this index is either $B.super$ or $B.super+1$.
The correct index is determined by testing on line \ref{supertest} whether $B$ is a subblock of $\var{v}.\fld{parent.blocks}[B.\fld{super}]+1$.

Next, the position of the required dequeue in $D(B')$ is computed in 
lines \ref{computeISuperStart}--\ref{computeISuperEnd}. 
We first add the number of dequeues in the subblocks of $B'$ in $\var{v}$ that precede $B$ on line \ref{computeISuperStart}.
If $\var{v}$ is the right child of its parent, then all of the subblocks of $B'$ from $\var{v}$'s left sibling
also precede the required dequeue, so we add the number of dequeues in those subblocks in line \ref{considerLeftBeforeRight}.

Finally, \op{IndexDequeue} is called recursively on $\var{v}$'s parent.
Since $B$ has been propagated to the root, so has its superblock $B'$.
Thus, all preconditions of the recursive call are met.
By the induction hypothesis, the recursive call returns the location of the required dequeue in the root.\end{proof}

\getEnqRes*
\begin{proof}
We prove the claim by induction on the height of node $\var{v}$.
If $\var{v}$ is a leaf, the hypothesis implies that $i=1$ and the block $\var{v}.\fld{blocks}[b]$ represents 
an enqueue whose argument is stored in $\var{v}.\fld{blocks}[b].\fld{element}$.
\op{GetEnqueue} returns the argument of this enqueue at line \ref{getBaseCase}.

Assuming the claim holds for $\var{v}$'s children, we prove it for $\var{v}$.
Let $B$ be $\var{v}.\fld{blocks}[b]$.
By (\ref{defSeqs}),
$E(B)$ is obtained by concatenating the enqueue sequences of the direct subblocks
of $B$, which are listed in (\ref{defsubblock}).
By \Cref{lem::sum}, $\var{sum\sub{left}}-\var{prev\sub{left}}$ is the number
of enqueues in $E(B)$ that come from $B$'s subblocks in $\var{v}$'s left child.
Thus, $dir$ is set to the direction for the child of $\var{v}$ that contains the required enqueue.
Moreover, when line \ref{endChooseDir} is reached, $i$ is the position of the required enqueue within the portion $E'$ of $E(B)$ that comes from that child.
Thus,  line \ref{getChild} finds the index $b'$ of the subblock $B'$ containing the required enqueue.
By \Cref{lem::sum}, $\var{v}.\var{dir.blocks}[b'-1].\var{sum\sub{enq}} - \var{prev\sub{dir}}$ is the number of 
enqueues in $E'$ before the enqueues of block $B'$, so
the value $i'$ computed on line \ref{getChildIndex} is the position of the required enqueue within $E(B')$.
Thus, the recursive call on line \ref{getRecurse} satisfies its precondition, and 
returns the required result, by the induction hypothesis.
\end{proof}

\sizeCorrectRes*
\begin{proof}
We prove the claim by induction on $b$. 
The base case when ${b=0}$ is trivially true, since the queue is initially empty and 
$\var{root}.\fld{blocks}[0]$ contains an empty block whose \size\ field is $0$. 
Assuming the claim holds for $b-1$, we prove it for $b$.
The \fld{size} field of the block $B$ installed in $\var{root}.\fld{blocks}[b]$ is computed
at line \ref{computeLength} of a call to \op{CreateBlock}(\var{root, b}).
By the induction hypothesis, $root.\fld{blocks}[b-1].\size$ gives the size of the queue before the operations
of block $B$ are performed.
By \Cref{lem::sum}, the values of \var{num\sub{enq}} and \var{num\sub{deq}}
are the number of enqueues and dequeues contained in $B$.
Hence, the size of the queue after the operations of $B$ are performed (with enqueues before dequeues as specified by $L$)
is $\max(0, \var{root}.\fld{blocks}[b-1].\size + \var{num\sub{enq}} - \var{num\sub{deq}})$.
\end{proof}


\section{Detailed Description of the Bounded-Space Implementation}
\label{reducing-details}

\renewcommand{\algorithmiccomment}[1]{\hfill\eqparbox{COMMENTSINGLEAPP}{\com\ #1}}

Here, we give more details of the bounded-space construction sketched in Section \ref{reducing}.
We first describe the modifications to the implementation.
To avoid confusion, we use nodes to refer to the nodes of the ordering tree, and blocks to refer
to the nodes of a RBT (since the RBT stores blocks).

The space-bounded implementation uses two additional shared arrays:
the \var{leaf} array allows processes to access one another's leaves to perform helping, and
the \var{last} array is used to determine which blocks are safe to discard.
\begin{itemize}
\item \typ{Node[]} \var{leaf}$[1..p]$ \Comment{\var{leaf}$[k]$ is the leaf assigned to process $k$}
\item \typ{int[]} \var{last}$[1..p]$ \Comment{\var{last}$[k]$ is the largest \fld{index} of a block in the}\\
\mbox{ }\Comment{root that process $k$ has observed to contain either an en-}\\
\mbox{ }\Comment{queue whose value has been dequeued or a null dequeue }
\end{itemize}

The \fld{blocks} field of each node in the \ordering\ tree is implemented as a pointer to the root of a RBT of \typ{Blocks} rather than an infinite array.  
Each RBT is initialized with an empty block with index~0.
Any access to an entry of the \fld{blocks} array is replaced by a search in the RBT.
The node's \head\ field, which previously gave the next position to insert into the \fld{blocks} array is no
longer needed; we can instead simply find the maximum \fld{index} of any block in the RBT.
To facilitate this, \op{MaxBlock} is a query operation on the RBT that 
returns the block with the maximum \fld{index}.
We can store, in the root of the RBT, a pointer to the maximum block so that \op{MaxBlock}
can be done in constant time, without affecting the time of other RBT operations.
Similarly, a \op{MinBlock} query finds the block with the minimum \fld{index} in a RBT.

Blocks no longer require the \fld{super} field.  It was used to quickly find a block's 
superblock in the parent node's \fld{blocks} array, but this can now be done efficiently 
by searching the parent's \fld{blocks} RBT instead.
Each \typ{Block} has an additional field.
\begin{itemize}
\item \typ{int} \var{index} \Comment{position this block would have in the \fld{blocks} array}
\end{itemize}
To facilitate helping, each \typ{Block} in a leaf has one more additional field that is used only for blocks that store a dequeue operation.
\begin{itemize}
\item \typ{Object} \var{response} \Comment{response of the dequeue in the block}
\end{itemize}

\here{Talk about what happens if a block is not found in an RBT}

Pseudocode for the space-bounded implementation appears in Figures \ref{pseudocode1GC} and \ref{pseudocode2GC}.
New or modified code appears in blue.
The \op{Propagate} and \op{GetEnqueue} routines are unchanged.
A few lines have been added to \op{FindResponse} to update the \var{last} array
to ensure that it stores the value described above.
Minor modifications have also
been made to \op{Enqueue}, \op{Dequeue}, \op{CreateBlock}, \op{Refresh} and \op{Append}
to accommodate  the switch from an array of blocks to a RBT of blocks (and the corresponding disappearance
of the \head\ field).
In addition, the second half of the \op{Dequeue} routine is now in a
separate routine called \op{CompleteDeq} so that it can also be used by other processes
helping to complete the operation.
The \op{Refresh} routine no longer needs to set the \fld{super} field of blocks since
that field has been removed.
The \op{IndexDequeue} routine, which must trace the location of a dequeue along
a path from its leaf to the root has a minor modification to search the \fld{blocks} RBT at 
each level instead of using the \fld{super} field.

\renewcommand{\algorithmiccomment}[1]{\hfill\eqparbox{COMMENTDOUBLE}{\com\ #1}}

\begin{figure*}
\begin{minipage}[t]{0.41\textwidth}
\begin{algorithmic}[1]
\setcounter{ALG@line}{200}

\Function{void}{Enqueue}{\typ{Object} \var{e}} 
    \gc{
    \State \var{h} \assign \Call{MaxBlock}{\var{leaf}.\fld{blocks}}.\fld{index}+1
	\State \hangbox{let \var{B} be a new \typ{Block} with \fld{element} \assign\ \var{e},\\
		$\fld{sum\sub{enq}} \assign\ \var{leaf}.\fld{blocks}[h-1].\fld{sum\sub{enq}}+1,$\\
		$\fld{sum\sub{deq}} \assign\ \var{leaf}.\fld{blocks}[h-1].\fld{sum\sub{deq}}$,\\
		$\fld{index}\assign h$}\label{enqNewGC}
	}
    \State \Call{Append}{\var{B}}
\EndFunction{Enqueue}

\spac

\Function{Object}{Dequeue()}{} 
    \gc{
    \State \var{h} \assign \Call{MaxBlock}{\var{leaf}.\fld{blocks}}.\fld{index}+1
    \State \hangbox{let \var{B} be a new \typ{Block} with \fld{element} \assign\ \nl,\\
	    $\fld{sum\sub{enq}} \assign\ \var{leaf}.\fld{blocks}[h-1].\fld{sum\sub{enq}},$\\
	    $\fld{sum\sub{deq}} \assign\ \var{leaf}.\fld{blocks}[h-1].\fld{sum\sub{deq}}+1$,\\
	    $\fld{index}\assign h$}\label{deqNewGC}
	}
    \State \Call{Append}{\var{B}}
    \gc{\State \Return \Call{CompleteDeq}{\var{leaf}, $h$}}
\EndFunction{Dequeue}

\spac

\gc{
\Function{Object}{CompleteDeq}{\var{leaf}, $h$} 
    \State \linecomment finish propagated dequeue in $\var{leaf}.\fld{blocks}[h]$
    \State $\langle \var{b}, \var{i}\rangle \assign\ \Call{IndexDequeue}{\var{leaf}, h, 1}$
    \State $\var{res} \assign \Call{FindResponse}{b, i}$
	\State \Return{\var{res}}
\EndFunction{CompleteDeq}
}

\spac

\Function{void}{Append}{\typ{Block} \var{B}}  \linecomment{append block to leaf and propagate to root}
    \gc{\State \var{leaf}.\fld{blocks} \assign \Call{AddBlock}{\var{leaf}, \var{leaf}.\fld{blocks}, \var{B}}\label{appendLeafGC}}
    \State \Call{Propagate}{\var{leaf}.\fld{parent}} 
\EndFunction{Append}

\spac

\gc{
\Function{RBT}{AddBlock}{\typ{Node} \var{v}, \typ{RBT} \var{T}, \typ{Block} \var{B}} 
    \State \linecomment{add block $\var{B}\neq\nl$ to \var{T}; do GC if necessary}
    \If{\var{B}.\fld{index} is a multiple of $G$}
    	\State \linecomment{Do garbage collection}\label{GCstart}
        \State $s \assign \Call{SplitBlock}{\var{v}}.\var{index}$\label{callSplitBlock}
        \State \Call{Help}{}\label{invokeHelp}
        \State $T' \assign\ \Call{Split}{T,s}$\label{splitGC}
        \State \linecomment{\op{Split} removes blocks with $\fld{index} < s$}
        \State \Return \Call{Insert}{$T',B$}\label{GCend}
	\Else\ \Return \Call{Insert}{$T, B$}
    \EndIf
\EndFunction{AddBlock}
}

\spac

\gc{
\Function{Block}{SplitBlock}{\typ{Node} v}
	\State \linecomment{figure out where to split \var{v}'s RBT}
	\If{$\var{v}=\var{root}$}
		\State $m \assign 0$
		\For{$k\assign 1..p$} $m\assign \max(m, \var{v}.\fld{last}[k])$
		\EndFor
		\State $B\assign \var{root}.\fld{blocks}[m-1]$\label{SBlookup1}
	\Else
		\State $B_p \assign \Call{SplitBlock}{\var{v}.\fld{parent}}$\label{SBrecurse}
		\State \var{dir} \assign ($\var{v}=\var{v}.\fld{parent.left}$ ? \fld{left} : \fld{right})
		\State $B \assign \var{v}.\fld{blocks}[B_p.\edir]$\label{SBlookup2}
	\EndIf
	\State \linecomment If $B$ was discarded, use leftmost block instead
	\State \Return ($B = \nl$ ? $\Call{MinBlock}{\var{v}.\var{blocks}}$ : $B$) \label{substitute}
\EndFunction{SplitBlock}
}

\spac

\Function{void}{Propagate}{\typ{Node} \var{v}}
    \State \linecomment{propagate blocks from \var{v}'s children to root}
    \If{\bf{not} \Call{Refresh}{\var{v}}} \label{firstRefreshGC}  \hfill \com\ double refresh
        \State \Call{Refresh}{\var{v}} \label{secondRefreshGC}
    \EndIf
    \If{$\var{v} \neq \var{root}$} \hfill \com\ recurse up tree
        \State \Call{Propagate}{\var{v}.\fld{parent}}
    \EndIf
\EndFunction{Propagate}

\spac

\Function{boolean}{Refresh}{\typ{Node} \var{v}} \linecomment{try to append a new block to $\var{v}.\fld{blocks}$}
	\gc{
	\State $T \assign \var{v}.\fld{blocks}$\label{refreshReadTGC}
    \State $\var{h} \assign \Call{MaxBlock}{T}.\fld{index}+1$ \label{refreshReadMax}
    }
    \State \var{new} \assign\ \Call{CreateBlock}{\var{v, h}} \label{invokeCreateBlockGC}
    \If{\var{new = \nl}} \Return{\tr} \label{addOPGC}
    \Else  
    	\gc{
	    \State $T' \assign \Call{AddBlock}{v, T, \var{new}}$\label{refreshABGC}
		\State \Return \Call{CAS}{\var{v}.\fld{blocks}, $T$, $T'$}\label{casGC}
		}
    \EndIf

\EndFunction{Refresh}

\end{algorithmic}
\end{minipage}
\hfill
\begin{minipage}[t]{0.56\textwidth}

\begin{algorithmic}[1]
\setcounter{ALG@line}{267}

\gc{
\Function{boolean}{Propagated}{\typ{Node} \var{v}, \typ{int} \var{b}}  \linecomment{check if $\var{v}.\fld{blocks}[b]$ has  propagated to \var{root}}
    \State \linecomment{Precondition:  $\var{v}.\fld{blocks}[b]$ exists}
    \If {$\var{v} = \var{root}$} \Return{true}
    \Else
    	\State $T\assign \var{v}.\fld{parent.blocks}$
		\State \fld{dir} \assign\ (\var{v}.\fld{parent.left} = \var{v} ? \fld{left} : \fld{right}) 
		\If{$\Call{MaxBlock}{T}.\edir < b$} \Return \fa
		\Else
			\State $B_p$ \assign min block in $T$ with $\edir \geq b$\label{searchsuper3}
			\State \Return \Call{Propagated}{\var{v}.\fld{parent}, $B_p$.\fld{index}}
		\EndIf
	\EndIf
\EndFunction{Propagated}
}

\spac

\Function{$\langle\typ{int}, \typ{int}\rangle$}{IndexDequeue}{\typ{Node} \var{v}, \typ{int} \var{b}, \typ{int} \var{i}}
    \State \linecomment{return $\langle\var{x, y}\rangle$ such that \var{i}th dequeue in $D(\var{v}.\fld{blocks}[\var{b}])$ is \var{y}th dequeue of $D(\var{root}.\fld{blocks}[\var{x}])$}
    \State \linecomment{Precondition: \var{v}.\fld{blocks}[\var{b}] exists and has propagated to root and $|D(\var{v}.\fld{blocks}[\var{b}])|\geq \var{i}$}
    \If{$\var{v} = \var{root}$} \Return $\langle\var{b, i}\rangle$ \label{indexBaseCaseGC}
    \Else
	    \State \fld{dir} \assign\ (\var{v}.\fld{parent.left} = \var{v} ? \fld{left} : \fld{right}) 
    	\gc{
		\State $T\assign \var{v}.\fld{parent.blocks}$
		\State $B_p$ \assign min block in $T$ with $\edir \geq b$\label{searchsuper1}
		\State $B_p'$ \assign max block in $T$ with $\edir < b$\label{searchsuper2}
	    }
	    \State \linecomment{compute index \var{i} of dequeue in superblock $B_p$}
	    \State \hangbox{\var{i} += $\var{v}.\fld{blocks}[\var{b}-1].\fld{sum\sub{deq}} -
	    		\var{v}.\fld{blocks}[\gc{B_p'}.\edir].\fld{sum\sub{deq}}$}
        \If{$\fld{dir} = \fld{right}$} 
        	\State \hangbox{\var{i} += $\var{v}.\fld{blocks}[\gc{B_p}.\eleft].\fld{sum\sub{deq}} - \mbox{ }
					\var{v}.\fld{blocks}[\gc{B_p'}.\eleft].\fld{sum\sub{deq}}$}\label{considerLeftBeforeRightGC}
        \EndIf \label{computeISuperEndGC}
        \State \Return\Call{IndexDequeue}{\var{v}.\fld{parent}, $\gc{B_p.\fld{index}}$, \var{i}}
    \EndIf
\EndFunction{IndexDequeue}

\spac

\gc{
\Function{void}{Help}{} \linecomment{help pending operations}
    
    \For{$\ell$ in $\var{leaf}[1..k]$}
        \State $\var{B} \assign \Call{MaxBlock}{\ell.\fld{blocks}}$
        \If{$\var{B}.\fld{element} = \nl$ and $\var{B}.\fld{index}>0$ and \Call{Propagated}{$\ell, \var{B}.\fld{index}$}} 
            \State \linecomment{operation is a propagated dequeue}
            \State $\var{B}.\fld{response} \assign\ \Call{CompleteDeq}{\ell, \var{B}.\fld{index}}$ 
        \EndIf
    \EndFor
\EndFunction{Help}
}
\spac

\Function{Block}{CreateBlock}{\typ{Node} \var{v}, \typ{int} \var{i}} \linecomment{create new block  to install in \var{v}.\fld{blocks}[\var{i}]}
    \State let \var{new} be a new \typ{Block} \label{initNewBlockGC}
    \gc{
    \State \var{new}.\eleft \assign\ $\Call{MaxBlock}{\var{v}.\fld{left}.\fld{blocks}}.\fld{index}$\label{createEndLeftGC}
    \State \var{new}.\eright \assign\ $\Call{MaxBlock}{\var{v}.\fld{right}.\fld{blocks}}.\fld{index}$\label{createEndRightGC}
    \State $\var{new}.\fld{index} \assign i$\label{createIndexGC}
    }
	\State \var{new}.\fld{sum\sub{enq}} \assign\ \var{v}.\fld{left.blocks}[\var{new}.\eleft].\fld{sum\sub{enq}} + 
			\var{v}.\fld{right.blocks}[\var{new}.\eright].\fld{sum\sub{enq}}
	\State \var{new}.\fld{sum\sub{deq}} \assign\ \var{v}.\fld{left.blocks}[\var{new}.\eleft].\fld{sum\sub{deq}} +
			\var{v}.\fld{right.blocks}[\var{new}.\eright].\fld{sum\sub{deq}}
    \State \var{num\sub{enq}} \assign\ $\var{new}.\fld{sum\sub{enq}} - \var{v}.\fld{blocks}[\var{i}-1].\fld{sum\sub{enq}}$\label{computeNumEnqGC}
    \State \var{num\sub{deq}} \assign\ $\var{new}.\fld{sum\sub{deq}} - \var{v}.\fld{blocks}[\var{i}-1].\fld{sum\sub{deq}}$
    \If{$\var{v} = \var{root}$}
        \State \var{new}.\fld{size} \assign\ max(0, $\var{v}.\fld{blocks}[\var{i}-1].\size\ + 
        	\var{num\sub{enq}} - \var{num\sub{deq}}$)\label{computeLengthGC}
    \EndIf
    \If{$\var{num\sub{enq}} + \var{num\sub{deq}} = 0$}\label{testEmptyGC}
        \State \Return \nl \hfill \com\ no blocks to be propagate to \var{v}\label{CBnullGC}
    \Else
        \State \Return \var{new}
    \EndIf
\EndFunction{CreateBlock}

\spac 

\Function{element}{FindResponse}{\typ{int} \var{b}, \typ{int} \var{i}} \linecomment{find response to \var{i}th dequeue in $D(\var{root}.\fld{blocks}[\var{b}])$}
    \State \linecomment{Precondition:  $1\leq i\leq |D(\var{root}.\fld{blocks}[\var{b}])|$}
    \State \var{num\sub{enq}} \assign\ $\var{root}.\fld{blocks}[\var{b}].\fld{sum\sub{enq}} - \mbox{ }$
    		$\var{root}.\fld{blocks}[\var{b}-1].\fld{sum\sub{enq}}$\label{FRnumenqGC}
    \If{$\var{root}.\fld{blocks}[\var{b}-1].\size + \var{num\sub{enq}} < \var{i}$}\label{checkEmptyGC} \linecomment{queue is empty when dequeue occurs}
        \gc{
        \If{$b > \var{last}[\var{id}]$} $\var{last}[\var{id}]\assign b$
        \EndIf
        }
        \State \Return \nl \hfill \label{returnNullGC}
    \Else \ \linecomment{response is the \var{e}th enqueue in the root}
        \State \var{e} \assign\ \var{i} + \var{root}.\fld{blocks}[\var{b}-1].\fld{sum\sub{enq}} - 
			\var{root}.\fld{blocks}[\var{b}-1].\size\label{computeEGC}
		 
		\State find min $b_e \leq \var{b}$ with $\var{root}.\fld{blocks}[b_e].\fld{sum\sub{enq}} \geq \var{e}$\label{FRsearchGC} \linecomment{use binary search}
		\State $i_e \assign\ \var{e} - \var{root}.\fld{blocks}[b_e-1].\fld{sum\sub{enq}}$ \linecomment{find rank of enqueue within its block}
		\gc{
        \State \var{res} \assign \Call{GetEnqueue}{\var{root}, $b_e$, $i_e$}\label{findAnswerGC}
        \If {$b_e > \var{last}[\var{id}]$} $\var{last}[\var{id}]\assign b_e$
        \EndIf
        \State \Return \var{res}
		}
    \EndIf
\EndFunction{FindResponse}

\end{algorithmic}
\end{minipage}
\vspace*{-4mm}
\caption{Bounded-space queue implementation.\label{pseudocode1GC}  $G$ is a constant, which we choose to be $p^2 \ceil{\log p}$.}
\end{figure*}

\begin{figure*}
\begin{algorithmic}[1]
\setcounter{ALG@line}{341}

\Function{element}{GetEnqueue}{\typ{Node} \var{v}, \typ{int} \var{b}, \typ{int} \var{i}} \Comment{returns argument of \var{i}th enqueue in $E(\var{v}.\fld{blocks}[\var{b}])$}
    \State \linecomment{Preconditions: $\var{i}\geq 1$ and \var{v}.\fld{blocks}[\var{b}] exists and contains at least \var{i} enqueues}
    \If{\var{v} is a leaf node} \Return \var{v}.\fld{blocks}[\var{b}].\fld{element} \label{getBaseCaseGC}
    \Else 
        \State \var{sum\sub{left}} \assign\ \var{v}.\fld{left.blocks}[\var{v}.\fld{blocks}[\var{b}].\eleft].\fld{sum\sub{enq}} \Comment{\#\ of enqueues in \var{v}.\fld{blocks}[1..$\var{b}$] from left child}
        \State \var{prev\sub{left}} \assign\ \var{v}.\fld{left.blocks}[\var{v}.\fld{blocks}[$\var{b}-1$].\eleft].\fld{sum\sub{enq}} \Comment{\#\ of enqueues in \var{v}.\fld{blocks}[1..$\var{b}-1$] from left child}
        \State \var{prev\sub{right}} \assign\ \var{v}.\fld{right.blocks}[\var{v}.\fld{blocks}[$\var{b}-1$].\eright].\fld{sum\sub{enq}} \Comment{\#\ of enqueues in \var{v}.\fld{blocks}[1..$\var{b}-1$] from right child}
        \If{$\var{i} \leq \var{sum\sub{left}} - \var{prev\sub{left}}$} \label{leftOrRightGC} \cmt{required enqueue is in \var{v}.\fld{left}}
            \State \fld{dir} \assign\ \fld{left}
        \Else \Comment{required enqueue is in \var{v}.\fld{right}}
            \State \fld{dir} \assign\ \fld{right}
            \State $\var{i}\ \assign\ \var{i} - (\var{sum\sub{left}} - \var{prev\sub{left}})$
        \EndIf
        \State \linecomment{Use binary search to find enqueue's block in \var{v}.\fld{dir.blocks} and its rank within block}
        \State find minimum $\var{b}'$ in range [\var{v}.\fld{blocks}[\var{b}-1].\edir+1..\var{v}.\fld{blocks}[\var{b}].\edir] s.t. $\var{v}.\fld{dir.blocks}[\var{b}'].\fld{sum\sub{enq}} \geq \var{i} + \var{prev\sub{dir}}$\label{getChildGC}
        \State $\var{i}'$ \assign\ $\var{i} - (\var{v}.\fld{dir.blocks}[\var{b}'-1].\fld{sum\sub{enq}} - \var{prev\sub{dir}})$
        \State \Return\Call{GetEnqueue}{\var{v}.\fld{dir}, $\var{b}'$, $\var{i}'$}
    \EndIf
\EndFunction{GetEnqueue}

\end{algorithmic}

\caption{Bounded-space queue implementation, continued.\label{pseudocode2GC}}
\end{figure*}

The new routines, \op{AddBlock}, \op{SplitBlock}, \op{Help} and \op{Propagated} are
used to implement the garbage collection (GC) phase.
When a \op{Refresh} or \op{Append} wants to add a new block to a node's \fld{blocks} RBT,
it calls the new \opemph{AddBlock} routine.
Before attempting to add the block to a node's RBT, \op{AddBlock} triggers a GC phase on the RBT if 
the new block's \fld{index} is a multiple of the constant $G$, which we choose to be $p^2\ceil{\log p}$.
This ensures that obsolete blocks are removed from the RBT once every $G$ times a new block is added to it.
The GC phase uses \op{SplitBlock} to determine the index $s$ of the oldest block to keep,
calls \op{Help} to help all pending dequeues that have been propagated to the root
(to ensure that all blocks before $s$ can safely be discarded),
uses the standard RBT \op{Split} routine \cite{Tar83}
to remove all blocks with \fld{index} less than $s$,
inserts the new block
and finally performs a \op{CAS} to swing the node's \fld{blocks} pointer to the new RBT.

To determine the oldest block in a node \var{v} to keep, the \opemph{SplitBlock} routine  first
uses the \var{last} array to
find the most recent block $B_{root}$ in the root that contains either an enqueue that has been dequeued
or a null dequeue.  By the FIFO property of queues, all enqueues in blocks before $B_{root}$ are either
dequeued or will be dequeued by a dequeue that is currently in progress.  Once those pending dequeues have been
helped to complete by line \ref{invokeHelp},
it is safe to discard any blocks in the root older than $B_{root}$, as well
as their subblocks.\footnote{If we used the more conservative approach of discarding blocks whose indices 
are smaller than the \emph{minimum} entry of \var{last} instead of the maximum, helping would be unnecessary, but then one slow process could prevent GC from discarding any blocks, so the space would not be bounded.}
The \op{SplitBlock} uses the \eleft\ and \eright\ fields to find the last block in \var{v} that is a subblock
of $B_{root}$ (or any older block in the root, in case $B_{root}$ has no subblocks in \var{v}).
As \op{SplitBlock} is in progress, it is possible that some block that it needs in a node $\var{v}'$
along the path from \var{v} to the root is discarded by another GC phase.
In this case, \op{SplitBlock} uses the last subblock in \var{v} of the oldest block in $\var{v}'$ instead
(since a GC phase on $\var{v}'$ determined that all blocks older than that are safe to discard anyway).

The \opemph{Help} routine is fairly straightforward:  it loops through all leaves and
helps the dequeue that is in progress there if it has already been propagated to the root.
The \opemph{Propagated} function is used to determine whether the dequeue has propagated to the root.

In the code, we use $\var{v}.\fld{blocks}[i]$ to refer to the block in the RBT stored in $\var{v}.\fld{blocks}$
with index~$i$.
A search for this block may sometimes not find it, if it has already been discarded
by another process's GC phase.
As mentioned in Section \ref{reducing}, if this happens to an enqueue operation,
the enqueue can simply terminate because the fact that the block is gone means that another
process has helped the enqueue reach the root of the ordering tree.
Similarly, if  a dequeue operation performs a failed search on a RBT, the dequeue can return the value 
written in the \fld{response} field of the 
leaf block that represents the dequeue and terminate, since some other process
will have written the \fld{response} there before discarding the needed block.
We do not explicitly write this early termination in the pseudocode every time we do a lookup in an RBT.
There is one exception to this rule:  if an RBT lookup for block $B$ returns \nl\
on line \ref{SBlookup1} or \ref{SBlookup2} of
\op{SplitBlock} because the required block has been discarded, 
we continue doing GC, since we do not want a GC phase on one node
to be prevented from cleaning up its RBT because a GC phase on a different node threw away some blocks
that were needed.
Line \ref{substitute} says what to do in this case.


\subsection{Correctness}
There are enough changes to the algorithm that a new proof of correctness
is required.  Its structure mirrors the proof of the original algorithm, but requires additional
reasoning to ensure GC does not interfere with other routines.

\subsubsection{Basic Properties}

The following observation describes
how the set of blocks in a node's RBT can be modified.

\begin{lemma}\label{RBTupdates}
Suppose a step of the algorithm changes $\var{v}.\fld{blocks}$ from a non-empty tree $T$ to $T'$.
If the set of \fld{index} values  in $T$ is $I$, then the set of \fld{index} values in $T'$ is 
$(I \cap [m-1,\infty))\cup \{\max(I) + 1\}$ for some~$m$.
\end{lemma}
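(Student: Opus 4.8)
The plan is to trace every program step that can overwrite the pointer $\var{v}.\fld{blocks}$ and to show that each such step replaces the old tree by the output of a single call to \op{AddBlock}. Scanning the pseudocode, $\var{v}.\fld{blocks}$ is written in exactly two places: the assignment on line \ref{appendLeafGC} inside \op{Append} (used only for a leaf, which has no concurrent writers, since \op{Help} modifies only the \fld{response} fields of existing blocks and never a leaf's \fld{blocks} pointer) and the \op{CAS} on line \ref{casGC} inside \op{Refresh}. In the first case, the value passed as the second argument of \op{AddBlock} is, by the absence of concurrent writers, exactly the tree present just before the assignment; in the second case, a \op{CAS} that changes $\var{v}.\fld{blocks}$ succeeds only when the current value still equals the tree $T$ captured on line \ref{refreshReadTGC}. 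So in both cases the step replaces a tree $T$, whose index set I write as $I$ with $M := \max(I)$, by the tree $T' = \op{AddBlock}(\var{v}, T, B)$ for the freshly created block $B$.

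Next I would pin down that $B.\fld{index} = M+1$. For the \op{Refresh} case, line \ref{refreshReadMax} sets $h = \op{MaxBlock}(T).\fld{index}+1 = M+1$ relative to the captured $T$, and line \ref{createIndexGC} of \op{CreateBlock} stores this $h$ into $B.\fld{index}$. For the \op{Append} case, \op{Enqueue} and \op{Dequeue} (lines \ref{enqNewGC} and \ref{deqNewGC}) set $B.\fld{index}$ to $\op{MaxBlock}(\var{leaf}.\fld{blocks}).\fld{index}+1$, which equals $M+1$ because no other process writes the leaf's tree between that read and the \op{AddBlock} call. In particular $M+1 \notin I$, so inserting $B$ genuinely adds a new index.

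The core of the argument is then a case analysis inside \op{AddBlock}. If $B.\fld{index}$ is not a multiple of $G$, \op{AddBlock} returns $\op{Insert}(T,B)$, whose index set is $I \cup \{M+1\}$; choosing $m = \min(I)$ makes $I \cap [m-1,\infty) = I$, which matches the claimed form. If $B.\fld{index}$ is a multiple of $G$, \op{AddBlock} computes $s$ on line \ref{callSplitBlock}, calls \op{Help}, forms $T' = \op{Split}(T,s)$ on line \ref{splitGC}, and returns $\op{Insert}(T',B)$ on line \ref{GCend}. Here I would invoke the defining property of the sequential \op{Split} routine \cite{Tar83}: it removes exactly the blocks with $\fld{index} < s$, so the index set of $\op{Split}(T,s)$ is $I \cap [s,\infty)$, and after the insertion it is $(I \cap [s,\infty)) \cup \{M+1\}$; taking $m = s+1$ gives the stated form. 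This also covers the degenerate subcase $s > M$, where \op{Split} empties the tree and the result is just $\{M+1\}$, still of the claimed form with $m-1 = s$.

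The main obstacle I anticipate is the bookkeeping in the second step: establishing that the $M$ appearing in $B.\fld{index}=M+1$ is the maximum of the very same index set $I$ that labels the tree being overwritten. This is immediate for a leaf, but for \op{Refresh} it relies on matching the tree $T$ read on line \ref{refreshReadTGC} (whose maximum is taken on line \ref{refreshReadMax}) with the comparison value of the successful \op{CAS} on line \ref{casGC}. Everything else is a direct reading of \op{AddBlock} together with the specification of \op{Split}, so once this consistency is nailed down the two-level case analysis finishes the proof.
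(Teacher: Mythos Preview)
Your proposal is correct and follows essentially the same approach as the paper: identify the two write sites for $\var{v}.\fld{blocks}$ (line~\ref{appendLeafGC} for leaves and the \op{CAS} on line~\ref{casGC} for internal nodes), argue that in each case the new tree is $\op{AddBlock}(\var{v},T,B)$ with $B.\fld{index}=\max(I)+1$, and then unfold the two branches of \op{AddBlock}. The paper's proof is terser---it simply says ``\op{AddBlock} may optionally \op{Split} the RBT and then add $B$''---whereas you spell out the resulting index set and the choice of $m$ in each case; your extra detail is fine and the ``main obstacle'' you flag (that the $T$ read on line~\ref{refreshReadTGC} is the same $T$ the successful \op{CAS} compares against) is exactly the point the paper handles implicitly.
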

\begin{proof}
The RBT of a node is updated only at line \ref{appendLeafGC} or \ref{casGC}.

If line \ref{appendLeafGC} of an \op{Append} operation modifies $\var{v}.\fld{blocks}$,
then \var{v} is a leaf node, and no other process ever modifies $\var{v}.\fld{blocks}$.
$T'$ was obtained from $T$ by calling \op{AddBlock}($T,B$).
$B$ was created either by the \op{Enqueue} or \op{Dequeue} that called \op{Append}.
Either way, $B.\fld{index} = \max(I)+1$.
The \op{AddBlock} that creates $T'$ may optionally \op{Split} the RBT and then add $B$ to it.
So the claim is satisfied.

If line \ref{casGC} of a \op{Refresh} modifies $\var{v}.\fld{blocks}$, then \var{v} is an internal node.
After reading $T$ from $\var{v}.\fld{blocks}$ at line \ref{refreshReadTGC},
the \op{Refresh} then creates the  block \var{new},
and then calls \op{AddBlock}($T,new$) to create $T'$.
Line \ref{createIndexGC} sets $\var{new}.\var{index} = \max(I)+1$.
The \op{AddBlock} that creates $T'$ may optionally \op{Split} the RBT and then add $new$ to it.
So the claim is satisfied.
\end{proof}

Since each RBT starts with a single block with \fld{index} 0, the following is an easy consequence of \Cref{RBTupdates}.
\begin{corollary}\label{nonEmptyIncreasing}
The RBT stored in each node \var{v} is never empty and always stores a set of blocks with consecutive indices.
Moreover, its maximum \fld{index} can only increase over time.
\end{corollary}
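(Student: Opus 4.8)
The plan is a straightforward induction on the sequence of steps that modify $\var{v}.\fld{blocks}$, using \Cref{RBTupdates} to control how the set of \fld{index} values evolves. I would carry the invariant that, at all times, the set of indices of the blocks in $\var{v}$'s RBT is a non-empty set of consecutive integers, and that the maximum index never decreases. The non-emptiness and consecutiveness claims of the corollary are then immediate from the invariant, and the monotonicity of the maximum is exactly the last clause of the invariant.

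For the base case, recall that each RBT is initialized with a single block whose \fld{index} is $0$, so its index set is $\{0\}$: non-empty, trivially consecutive, with maximum $0$. Between steps that do not touch $\var{v}.\fld{blocks}$ the RBT is unchanged, so nothing needs to be checked there.

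For the inductive step, suppose the invariant holds just before a step that changes $\var{v}.\fld{blocks}$ from $T$ to $T'$. Let $I$ be the index set of $T$; by the induction hypothesis $I=\{a,a+1,\ldots,b\}$ for some $a\leq b$, where $b=\max(I)$. By \Cref{RBTupdates}, the index set of $T'$ is $(I\cap[m-1,\infty))\cup\{b+1\}$ for some $m$. The key observation is that, because $I$ is a contiguous range, $I\cap[m-1,\infty)$ is a \emph{suffix} of that range: it is either empty (when $m-1>b$) or equal to $\{\max(a,m-1),\ldots,b\}$, which still contains $b$. In the first case $T'$ has index set $\{b+1\}$, and in the second case it has index set $\{\max(a,m-1),\ldots,b,b+1\}$. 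Either way the set is non-empty, consecutive, and has maximum $b+1>b$, so the invariant is preserved and the maximum strictly increases at this step.

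Since the maximum stays constant between modifying steps and strictly increases at each one, it is non-decreasing over all of time, which completes the proof. The argument is routine; the only point requiring a moment's care is verifying that discarding a bottom portion of a contiguous range (the effect of the \op{Split}, captured by intersecting with $[m-1,\infty)$) and adding exactly one new element on top ($\max(I)+1$) cannot create a gap. This holds precisely because whatever survives the intersection, if non-empty, retains the old maximum $b$, to which the newly added $b+1$ is adjacent.
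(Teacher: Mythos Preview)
Your proof is correct and follows essentially the same approach as the paper: the paper's proof is a one-line remark that the corollary is an easy consequence of \Cref{RBTupdates} together with the initialization of each RBT to a single block with \fld{index} $0$. You have simply spelled out the induction that the paper leaves implicit, and your case analysis of $(I\cap[m-1,\infty))\cup\{b+1\}$ is exactly the verification the paper omits.
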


Since RBTs are always non-empty, calls to \op{MaxBlock} have well-defined answers.  
Throughout the proof, we use $\var{v}.\blocks[b]$ to refer to the block with \fld{index} $b$ that appeared
in \var{v}'s tree at some time during the execution.
It follows from \Cref{RBTupdates} and \Cref{nonEmptyIncreasing} that 
each time a new block appears in \var{v}'s RBT, its \fld{index} is greater than any block
that has appeared in \var{v}'s RBT earlier.
Thus, $\var{v}.\blocks[b]$ is unique, if it exists.
We also use this notation in the code to indicate that a search of the RBT should be performed for the block
with \fld{index} $b$.

We now establish that Definition (\ref{defsubblock}) of a block's subblocks
still makes sense by proving the analogue of Lemma \ref{lem::headProgress}.


\begin{customlemma}{\ref{lem::headProgress}$'$}\label{lem::headProgressGC}
 If $h>0$ and a block with index $h$ has been inserted into $\var{v}.\fld{blocks}$ then 
 $\var{v}.\fld{blocks}[h-1].\fld{end\sub{left}} \leq \var{v}.\fld{blocks}[h].\fld{end\sub{left}}$ and 
 $\var{v}.\fld{blocks}[h-1].\fld{end\sub{right}} \leq \var{v}.\fld{blocks}[h].\fld{end\sub{right}}$.
\end{customlemma}
\begin{proof}
The block $B$ with index $h$ was installed into \var{v}'s RBT by the \op{CAS} at line \ref{casGC}.
Suppose that \op{CAS} changed the tree from $T$ to $T'$.
Before this \op{CAS}, line \ref{refreshReadTGC} read the tree $T$ from $\var{v}.blocks$,
line \ref{refreshReadMax} found a block $B'$  with \fld{index} $h-1$ in $T$,
and then line \ref{invokeCreateBlockGC} created the block $B$ with $\fld{index}=h$.
Since $B'$ was already in $T$ before $B$ was created, the
\op{CreateBlock}$(\var{v},b-1)$ that created $B'$ terminated before the 
\op{CreateBlock}$(\var{v},b)$ that created $B$ started.
By \Cref{nonEmptyIncreasing}, the value that line \ref{createEndLeftGC} of \op{CreateBlock}$(\var{v},b-1)$ 
stores in $B'.\eleft$ is less than or equal to the value that line \ref{createEndLeftGC} 
of \op{CreateBlock}$(\var{v},b)$ stores in $B.\eleft$.  
Similarly, the values stored in $B'.\eright$ and $B.\eright$ at line \ref{createEndRightGC} satisfy the claim.
\end{proof}

\Cref{lem::headProgressGC} implies that 
the nodes of an in-order traversal of any RBT have non-decreasing values of
$\eleft$ (and of $\eright$).
Thus, the searches for a block based on \eleft\ or \eright\ values 
at lines \ref{searchsuper3}, \ref{searchsuper1} and \ref{searchsuper2},
which are used to look for the superblock of a node or its predecessor, can be done in logarithmic time.

\Cref{lem::subblocksDistinct}, \Cref{lem::noDuplicates}, \Cref{lem::sum}, \Cref{blockNotEmpty} 
hold for the modified algorithm.  Their proofs are identical to those given in 
Section \ref{sec::basicProperties} since they  depend only on \Cref{lem::headProgress} (which can be replaced by \Cref{lem::headProgressGC})
and the definition of subblocks given in (\ref{defsubblock}).
In particular, \Cref{lem::sum} says that nodes in an in-order traversal of a RBT have non-decreasing
values of \fld{sum\sub{enq}}  so the searches of a RBT based on
\fld{sum\sub{enq}} values in lines
\ref{FRsearchGC} and \ref{getChildGC} can be done in logarithmic time.

\subsubsection{Propagating Operations to the Root}

Next, we prove an analogue of \Cref{successfulRefresh}.
We say a node \var{v} \emph{contains} an operation
if some block containing the operation has previously appeared in \var{v}'s RBT (even if 
the block has been removed from the RBT by a subsequent \op{Split}).

\begin{customlemma}{\ref{successfulRefresh}$'$}
\label{successfulRefreshGC}
Let $R$ be a call to \op{Refresh}(\var{v}) that performs a successful \op{CAS} on line \ref{casGC} (or terminates at line \ref{addOPGC}).
In the configuration after that CAS (or termination, respectively), \var{v} contains all operations that \var{v}'s children contained 
when $R$ executed line~\ref{refreshReadTGC}.
\end{customlemma}
\begin{proof}
Suppose \var{v}'s child (without loss of generality, $\var{v}.\fld{left}$) contained an operation $op$ 
when $R$ executed line \ref{refreshReadTGC}.
Let $i$ be the index of the block containing $op$ that was in \var{v}'s RBT before 
$R$ executed line \ref{refreshReadTGC}.
We consider two cases.

Suppose $R$'s call to \op{CreateBlock} returns a new block $B'$
that is installed in $\var{v}.\fld{blocks}$
by $R$'s \op{CAS} at line \ref{casGC}.
The \op{CreateBlock} set $B'.\eleft$ to the maximum \fld{index} in $\var{v}.\fld{left}$'s RBT at line \ref{createEndLeftGC}.
By \Cref{nonEmptyIncreasing}, this maximum \fld{index} is bigger than $i$.
By the definition of subblocks, some block in \var{v} contains $B$ as a subblock
and therefore \var{v} contains $op$.

Now suppose $R$'s call to \op{CreateBlock} returns \nl, causing $R$ to terminate at line \ref{addOPGC}.
Let $h$ be the maximum \fld{index} in $T$ plus 1.
By reasoning identical to the last paragraph of \Cref{successfulRefresh}'s proof,
it follows from the fact that $\var{num\sub{enq}}+\var{num\sub{deq}}=0$ at line \ref{testEmptyGC}
that the blocks $\var{v}.\fld{left.blocks}[\var{v}.\fld{blocks}[h-1].\eleft+1..new.\eleft]$ and
$\var{v}.\fld{right.blocks}[\var{v}.\fld{blocks}[h-1].\eright+1..new.\eright]$
contain no operations.
By \Cref{blockNotEmpty}, each block contains at least one operation, so
these ranges must be empty, and $\var{v}.\fld{blocks}[h-1].\eleft \geq \var{new}.\eleft\geq i$.
This implies that the block $B$ containing $op$ is a subblock of some block that has appeared
in \var{v}'s RBT, so $op$ is contained in \var{v}.
\end{proof}

This allows us to show that a double \op{Refresh} propagates operations up the tree, 
as in \Cref{lem::doubleRefresh}.

\begin{customlemma}{\ref{lem::doubleRefresh}$'$}\label{lem::doubleRefreshGC}
Consider two consecutive terminating calls $R_1$, $R_2$ to \op{Refresh}(\var{v}) by the same process.
All operations contained \var{v}'s children when $R_1$ begins
are contained in \var{v} when $R_2$ terminates.
\end{customlemma}
\begin{proof}
If either $R_1$ or $R_2$ performs a successful \op{CAS} at line \ref{casGC} or terminates at line \ref{addOPGC}, the claim follows
from \Cref{successfulRefreshGC}.
So suppose both $R_1$ and $R_2$ perform a failed \op{CAS} at line \ref{casGC}.
Then some other \op{CAS} on $\var{v}.\fld{blocks}$ succeeds between the time each \op{Refresh}
reads $\var{v}.\fld{blocks}$ at line \ref{refreshReadTGC} and performs its \op{CAS} at line \ref{casGC}.
Consider the \op{Refresh} $R_3$ that does this successful \op{CAS} during $R_2$.
$R_3$ must have read $\var{v}.\fld{blocks}$ after the successful \op{CAS} during $R_1$.
The claim follows from \Cref{successfulRefreshGC} applied to $R_3$.
\end{proof}

\Cref{lem::appendExactlyOnce} can then be proved in the same way as in Section \ref{sec::propagating}.

\subsubsection{GC Keeps Needed Blocks}

The correctness of \op{GetEnqueue} and \op{IndexDequeue}, which are very similar to the 
original implementation, are dependent only on the fact that GC does not 
discard blocks needed by those routines.
The following results are used to show this.

We say a block is \emph{finished} if 
\begin{itemize}
\item
it has been propagated to the root,
\item
the value of each enqueue contained in the block has either been returned by a dequeue
or written in the \fld{response} field of a dequeue, and
\item
each dequeue contained in the block has terminated or some process has written to the \fld{response} field in the leaf block that represents it.
\end{itemize}
Intuitively, once a block is finished, operations no longer need  the block
to compute responses to operations.
The following is an immediate consequence of the definition of finished and what it means for an operation to be contained in a block.

\begin{observation}\label{subblocksFinished}
A block is finished if and only all of its subblocks are finished.
\end{observation}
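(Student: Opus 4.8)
The plan is to reduce all three clauses in the definition of \emph{finished} to two kinds of statements — one about the \emph{operations} a block contains, and one about the subblock/superblock structure — each of which decomposes cleanly over subblocks. The central fact is that the operations contained in a block $B$ are exactly the union, over the direct subblocks of $B$, of the operations those subblocks contain; by \Cref{lem::subblocksDistinct} this union is disjoint, and iterating it down to the leaves recovers the definition that $B$ represents the operations in its leaf subblocks. Every block is non-empty by \Cref{blockNotEmpty}, so $B$ has at least one direct subblock, which I will need below.

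First I would prove the forward direction: if $B$ is finished, then every subblock $B'$ of $B$ is finished. The propagation clause is monotone downward — if a root block contains $B$ as a subblock, then by transitivity of the subblock relation it also contains $B'$, so $B'$ has been propagated to the root. The remaining two clauses are universally quantified over the operations contained in the block; since the operations of $B'$ form a subset of the operations of $B$, each clause for $B$ immediately implies the corresponding clause for $B'$.

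Next I would prove the converse through direct subblocks: if all direct subblocks of $B$ are finished, then $B$ is finished. Because the operation set of $B$ is the disjoint union of the operation sets of its direct subblocks, the enqueue/response clause and the dequeue/termination clause hold for $B$ exactly when they hold for each direct subblock. The only genuinely non-trivial point is the propagation clause, which is where I expect the main difficulty. Here I would use that $B$ is the unique superblock of each of its direct subblocks (each block has at most one superblock, as noted after \Cref{lem::subblocksDistinct}). Fix any direct subblock $B_c$, which exists since $B$ is non-empty. If $B_c$ has been propagated to the root, then the chain of superblocks starting at $B_c$ reaches a block in the root; this upward chain is unique by uniqueness of the superblock, and its first link is $B_c \mapsto B$, so $B$ lies on the chain and is therefore itself propagated to the root.

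Combining the two directions finishes the argument: ``$B$ finished'' gives ``all subblocks of $B$ finished'' by the forward direction, while ``all subblocks finished'' trivially includes ``all direct subblocks finished'', which yields ``$B$ finished'' by the converse. The crux is the backward propagation step; once the uniqueness of the superblock is invoked to pin down the single upward chain through $B$, everything else reduces to the set-theoretic decomposition of a block's operations over its subblocks.
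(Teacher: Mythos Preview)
Your argument is correct and considerably more detailed than the paper, which declares the observation an immediate consequence of the definitions and gives no separate proof; in particular, you rightly single out the propagation clause in the backward direction as the only step that is not pure set theory, and the uniqueness-of-superblock chain is the right way to handle it.

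One small caveat: your inference that non-emptiness of $B$ (via \Cref{blockNotEmpty}) yields at least one direct subblock fails when $B$ lies in a leaf. In that case the right-hand side of the biconditional is vacuously true while the left need not be, so the observation as literally stated does not cover leaf blocks; the paper tacitly intends it for blocks in internal nodes, and in fact invokes only the forward direction, which holds for all blocks.
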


\here{Technically, following proof might have to be wrapped up in a big induction with the argument
that outputs are linearized correctly, to argue that the matchup between enqueues and dequeues
is the same as it would be in the linearization ordering.  (Check whether this is really needed.)
Or alternatively, we could add an assumption that if every operation agrees with the linearization
so far, then the invariant is preserved by each step.
Then, when proving the linearization is correct, everything will work out.
But just leave it as is for the submission, since this is a subtle point}

\begin{invariant}\label{lem::finished}
If the minimum \fld{index} of any block in \var{v}'s RBT is $b_{min}$, then
each block with $\fld{index}$ at most $b_{min}$ that was ever added to \var{v} is finished.
\end{invariant}
\begin{proof}
The invariant is true initially, since the minimum \fld{index} block in \var{v}'s RBT is the empty block, which is (vacuously) finished.

We show that every step preserves the invariant.  
We need only consider a step $st$ that modifies a node's RBT.
The minimum \fld{index} of \var{v}'s RBT can only change when \var{v}'s RBT changes, either at line 
\ref{appendLeafGC} (if \var{v} is a leaf) or at line \ref{casGC} (if \var{v} is an internal node).
In either case, the step $st$ changes $\var{v}.\fld{blocks}$ 
from $T$ to $T'$, where $T'$ is obtained by  a call $A$ to \op{AddBlock}(\var{v}, $T$, $B$).
(In the case of a leaf \var{v}, this is true because only the process that owns the leaf ever
writes to $\var{v}.\fld{blocks}$.)
If $A$ does not do GC (lines \ref{GCstart}--\ref{GCend}), 
then $T'$ is obtained by adding a new block to $T$, so by \Cref{RBTupdates} the minimum \fld{index}
is unchanged and the invariant is trivially preserved.
So consider the case where $A$ performs GC.
We must show that any block of \var{v} whose index is less than or equal to the
minimum \fld{index} in $T'$ is finished when $T'$ is installed in $\var{v}.\fld{blocks}$.
Since $T'$ is obtained by discarding all blocks with \fld{index} at most $s$ (and adding
a block with a larger index), it suffices to show that all blocks that were ever added
to \var{v}'s RBT with \fld{index} at most $s$ are finished.

We must examine how $A$'s call to the recursive algorithm \op{SplitBlock} computes the value of $s$.

\begin{claim}\label{splitfinished}
If one of the recursive calls to $\op{SplitBlock}(x)$ within $A$'s call to \op{SplitBlock},
then that block and all earlier blocks in $x$ are finished when $st$ occurs.
\end{claim}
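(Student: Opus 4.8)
The plan is to prove Claim~\ref{splitfinished}---that if a recursive call $\op{SplitBlock}(x)$ inside $A$ returns a block $B$, then $B$ and all earlier blocks of $x$ are finished by the time $st$ occurs---by induction on the depth of the recursion in \op{SplitBlock}, taking the base case to be the topmost recursive call $\op{SplitBlock}(\var{root})$ and letting the inductive step move one level down the tree toward $\var{v}$. Throughout I would use three ingredients: that being \emph{finished} is stable (none of its three conditions can be undone once it holds); \Cref{subblocksFinished}, which says a block is finished exactly when all of its subblocks are; and \Cref{lem::finished} applied at \emph{earlier} configurations, which is sound because the enclosing argument is the step-induction establishing \Cref{lem::finished}, and we are analyzing $st$ under the hypothesis that the invariant held before it.

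For the base case, $\op{SplitBlock}(\var{root})$ computes $m=\max_k \var{last}[k]$ and returns $B=\var{root}.\fld{blocks}[m-1]$ (or $\op{MinBlock}$ if that block was discarded). Writing $m=\var{last}[k_0]$, the meaning of the \var{last} array says process $k_0$ observed in \op{FindResponse} that $\var{root}.\fld{blocks}[m]$ either holds a null dequeue it performed or an enqueue whose value its dequeue returned. In both cases the FIFO semantics force every enqueue in $\var{root}.\fld{blocks}[1..m-1]$ to have already been matched to a non-null dequeue: a null dequeue in block $m$ witnesses an empty queue, so all earlier enqueues were removed; and a returned enqueue of rank $e$ in block $m$ means the first $e$ enqueues---which include all of blocks $1..m-1$---were consumed by the first $e$ non-null dequeues. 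Each such matching dequeue, and each dequeue contained in blocks $1..m-1$, has propagated to the root, so the call to \op{Help} on line~\ref{invokeHelp}---which runs after \op{SplitBlock} but before $st$---writes a response for every still-pending one (the others having already terminated). Since these blocks are in the root they are trivially propagated, so every enqueue value in blocks $1..m-1$ is now ``returned or written'' and every dequeue there is ``terminated or written,'' making $B$ and all earlier blocks finished at $st$. If instead $B$ was \nl, an earlier GC on $\var{root}$ discarded it; \Cref{lem::finished} at that earlier time plus stability give that $\op{MinBlock}(\var{root}.\fld{blocks})$ and everything before it are finished.

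For the inductive step with $x\neq\var{root}$, the recursive call returns $B_p=\op{SplitBlock}(x.\fld{parent})$, which by the induction hypothesis is finished together with all earlier blocks of $x.\fld{parent}$. The routine then returns $B=x.\fld{blocks}[B_p.\edir]$, the last direct subblock of $B_p$ in $x$. By the definition of subblocks in~(\ref{defsubblock}), the blocks of $x$ with index in $[1,B_p.\edir]$ are exactly the direct subblocks of $x.\fld{parent}.\fld{blocks}[1..B_p.\fld{index}]$; since all those parent blocks are finished, \Cref{subblocksFinished} yields that all of their direct subblocks---hence $B$ and every earlier block of $x$ (the index-$0$ empty block being trivially finished)---are finished. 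The \nl{} substitution case is handled exactly as in the base case, via stability and \Cref{lem::finished} at the time of the earlier discard.

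The main obstacle is the base case, specifically pinning down the FIFO argument and its interplay with \op{Help}: one must argue that an entry of \var{last} certifies that \emph{all} earlier root enqueues were consumed (not just the one $k_0$ dequeued), that the matching dequeues have all reached the root so \op{Help} can finish them, and that \op{Help}'s loop over leaves via \op{MaxBlock} actually reaches every still-pending dequeue while completed ones are already finished. A secondary delicacy is the apparent circularity between this claim and \Cref{lem::finished}: the \nl{} substitution cases appeal to the invariant at earlier configurations, so I would make explicit that the step-induction proving \Cref{lem::finished} legitimizes this appeal.
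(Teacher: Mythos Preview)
Your proposal is correct and follows essentially the same approach as the paper's proof: induction on depth starting from the root, with the base case combining the semantics of the \var{last} array, the FIFO property, and the subsequent \op{Help} call, and the inductive step invoking \Cref{subblocksFinished} on the parent's finished blocks; the \nl{} substitution is handled in both by appealing to \Cref{lem::finished} at earlier configurations. Your write-up is in fact more careful than the paper's in spelling out the FIFO case analysis and in explicitly arguing that blocks $1,\ldots,B_p.\edir$ of $x$ are precisely the direct subblocks of $x.\fld{parent}.\fld{blocks}[1..B_p.\fld{index}]$, whereas the paper compresses the inductive step into two sentences.
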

\begin{proof}[Proof of Claim]
We prove this claim  by induction on the depth of $x$.

For the base case, suppose $x$ is the root.
\op{SplitBlock} finds the maximum value $m$ in \var{last}, which is the index of some block
that contains an operation that is either a null dequeue or an enqueue whose value is the response for a dequeue that has been propagated to the root (since these are the only ways that an entry of \var{last} can be set to $m$).
By the FIFO property of queues, the values enqueued by enqueues in $\var{root}.\fld{blocks}[1..m-1]$
are all dequeued by operations that have already been installed in $\var{root}.\fld{blocks}$ before
the end of the \op{SplitBlock}.
Between the termination of $A$'s call to \op{SplitBlock} at line \ref{callSplitBlock}
and the \op{CAS} step $st$ after $A$ terminates,  $A$ helps all pending dequeues at line \ref{invokeHelp}.
Thus, after this helping (and before step $st$),
$\var{root}.\fld{blocks}[1..m-1]$ are all finished blocks, so the claim is true.

For the induction step, we assume the claim holds for $x$'s parent, and prove it for $x$.
We consider two cases.

If \op{SplitBlock}($x$) returns the minimum block of $x$'s RBT at line \ref{substitute},
then the claim follows from the assumption that \Cref{lem::finished} holds at all times before $st$.

Otherwise, \op{SplitBlock}($x$) returns the block $B$ at line \ref{substitute}.
By the induction hypothesis, the block $B_p$ computed at line \ref{SBrecurse}
(and all earlier blocks of $x$) are finished when $st$ occurs.
By \Cref{subblocksFinished}, the block $B$ in $x$ indexed by $B_p.\eleft$ or $B_p.\eright$
is also finished when $st$ occurs.  This completes the proof of \Cref{splitfinished}.
\renewcommand{\qedsymbol}{$\diamondsuit$}
\end{proof}

If the \op{Split} at line \ref{splitGC} of $A$ modifies the RBT, then it discards
all the blocks older than the one returned by \op{SplitBlock} at line \ref{callSplitBlock}.
By \Cref{splitfinished}, the minimum block in the new tree will satisfy the invariant when $st$
installs the new tree in $\var{v}.\fld{blocks}$.
\end{proof}

We remark that \Cref{lem::finished} guarantees that GC keeps one block that is finished and discards all blocks with smaller indices.
This is because the first unfinished block may still be traversed by an operation in the future,
and when examining that block the operation may need information from the preceding block.  For example,
when \op{FindResponse} is called on a block with index $b$, line \ref{FRnumenqGC} looks up the block with
index $b-1$.

\begin{lemma}\label{blocknotfound}
If a \op{Dequeue} operation fails to find a block in an RBT,
then it has been propagated to the root and its result has been written in its \fld{response} field.
If an \op{Enqueue} operation fails to find a block in an RBT, it has been propagated to the root.
\end{lemma}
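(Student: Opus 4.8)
The plan is to reduce everything to \Cref{lem::finished}: a search fails only because the block was thrown away by a garbage-collecting \op{Split}, and every discarded block is \emph{finished}. First I would establish this reduction. By \Cref{nonEmptyIncreasing}, each RBT always stores a contiguous range of indices and never discards its maximum, so a search for the block with a given index $k$ returns \nl\ exactly when $k$ is smaller than the RBT's current minimum index $b_{min}$, i.e.\ when the block has already been removed by a \op{Split}. Since then $k\le b_{min}$, \Cref{lem::finished} guarantees that the missing block is finished. The three clauses in the definition of \emph{finished} then supply everything the lemma asks for, and the remaining work is to connect the missing block to the operation that was searching for it.

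For a \op{Dequeue} $D$, I would split on the routine in which the failed search occurs. If it occurs in \op{IndexDequeue}, the block being sought in node $\var{v}$ (either the superblock of $\var{v}.\fld{blocks}[b]$ in the parent, or a neighbouring block of index at most $b$) contains $D$. A finished block containing $D$ has, by definition, been propagated to the root (so $D$ has reached the root) and, since $D$ has not terminated, some process has already written $D$'s answer into the \fld{response} field of $D$'s leaf block. If instead the failure occurs in \op{FindResponse} or \op{GetEnqueue}, then \op{IndexDequeue} already returned $D$'s location in the root, so $D$ is at the root; the missing block now contains the \emph{enqueue} whose value $D$ must return, and being finished, that enqueue's value has already been returned by, or written into the \fld{response} of, its matching dequeue. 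That matching dequeue is $D$ itself, and since $D$ has not terminated, the value sits in $D$'s \fld{response}.

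For an \op{Enqueue} $E$, the only RBT searches happen while $E$ is propagating, inside \op{Refresh} and \op{CreateBlock} at some internal node $\var{v}$. The missing block is finished, hence already propagated to the root, and I would use that together with the propagation machinery to show $E$ itself has reached the root: by the time $E$ refreshes $\var{v}$, \Cref{lem::doubleRefreshGC} applied at the child level guarantees $E$'s operation already sits in $\var{v}$'s child, and the successful refreshes that advanced $\var{v}$'s RBT beyond the head $E$ had read must (by \Cref{successfulRefreshGC}) have merged that child block into a superblock of $\var{v}$; combined with \Cref{lem::finished}, this forces $E$'s operation into a block of $\var{v}$ that is itself propagated to the root.

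I expect the enqueue case to be the main obstacle: pinning down, via the relative timing of the refreshes that created and later discarded the block, that the finished block really does drag $E$'s own operation all the way to the root (rather than merely to $\var{v}$) requires care, and is the place where \Cref{successfulRefreshGC}, \Cref{lem::doubleRefreshGC} and \Cref{lem::finished} must be combined most delicately. A secondary subtlety, on the dequeue side, is the identification of $D$ as the dequeue matched to the discarded enqueue block in \op{GetEnqueue}; this uses that the FIFO enqueue--dequeue pairing agrees with the linearization order $L$, which is exactly the joint-induction point flagged earlier in this section.
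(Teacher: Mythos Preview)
Your reduction to \Cref{lem::finished} and the treatment of the post-propagation \op{Dequeue} case (\op{IndexDequeue}, \op{FindResponse}, \op{GetEnqueue}) match the paper's argument, and you correctly flag the FIFO enqueue--dequeue matching as the subtle point there.

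Where you diverge from the paper is the propagation phase (your \op{Enqueue} case, and also the \op{Dequeue}'s \op{Append}/\op{Refresh}/\op{CreateBlock} phase, which you do not cover separately). The paper does \emph{not} go through \Cref{successfulRefreshGC} and \Cref{lem::doubleRefreshGC} to push $E$ from $\var{v}$'s child into $\var{v}$ and then invoke \Cref{lem::finished} there. Instead, it observes directly that a block \emph{containing the operation itself} has been removed from some RBT, and then reads off ``propagated to the root'' straight from the definition of finished. The implicit point is that every block looked up during propagation at $\var{v}$ is a recent block---essentially a current maximum in $\var{v}$ or in one of $\var{v}$'s children---while the operation's own block in the relevant child sits at an index no larger; so if the looked-up block has been split away, the operation's block has been split away too. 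This index-comparison shortcut replaces your whole chain of refresh lemmas.

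Your route is not wrong in spirit, but the step ``the successful refreshes that advanced $\var{v}$'s RBT beyond the head $E$ had read must have merged $E$'s child block into $\var{v}$'' needs care: the \op{Refresh} that installed the next block in $\var{v}$ may have read $\var{v}.\fld{blocks}$ (and hence $\var{v}$'s children) \emph{before} $E$ landed in the child, so \Cref{successfulRefreshGC} alone does not place $E$ in $\var{v}$. You would have to chase the chain of later refreshes up to the new minimum block, which is exactly the complication the paper's index argument sidesteps.
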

\begin{proof}
Any block that GC removes from an RBT is finished, by \Cref{lem::finished}, so if an 
\op{Enqueue} or \op{Dequeue} fails to find a block while it is propagating itself up to
the root (for example, during the \op{CreateBlock} routine), then a block containing
the operation itself has been removed from a RBT, so the operation has propagated to the root,
by the definition of finished.
Moreover, by \Cref{lem::finished}, if the operation is a dequeue, then its result is in its \fld{response} field.

After propagation to the root, a \op{Dequeue} must access blocks that contain the dequeue,
as well as the enqueue whose value it will return (if it is not a null dequeue).
If any of those blocks have been removed, it follows from \Cref{lem::finished} that the
\op{Dequeue}'s result is written in its \fld{response} field.
\end{proof}

By \Cref{blocknotfound}, an operation that fails to find a block in an RBT can terminate.
If it is a \op{Dequeue}, it can return the result written in its \fld{response} field.
Since no other process updates the RBT in a process's leaf, the block containing the \fld{response}
will be the last block in the leaf's RBT, and the last block of an RBT is never removed by GC.
Thus, the \fld{response} field will still be there when a \op{Dequeue} needs it.

\subsubsection{Linearizability}

The correctness of the \op{IndexDequeue} and \op{GetEnqueue}  operations
can be proved in the same way as in Section \ref{sec::tracingCorrect},
since they are largely unchanged (except for the simplification that \op{IndexDequeue}
can simply search for a block's superblock instead of using the block's \fld{super} 
field to calculate the superblock's position).
They will give the correct response, provided none of the blocks they need to access have been 
removed by GC.  But as we have seen above, if that happens, the \op{Enqueue} or \op{Dequeue} can simply terminate.

Similarly, the results of Section \ref{sec::linearizability} 
can be reproved in exactly the same way as for the original algorithm to establish
that the space-bounded algorithm is linearizable.

\subsection{Analysis}
\label{sec::GCanalysis}

\here{need to prove a lemma analogous to \Cref{blockSize}}

We first bound the size of RBTs.  Let $q_{max}$ be the maximum size of the queue at any time during the sequential execution given by the linearization $L$.
Recall that GC is done on a node every $G$ times its RBT is updated, and we chose $G$ to be $p^2\ceil{\log p}$.
Part of the proof of the following lemma is similar to the proof of \Cref{dSearchTime}.

\begin{lemma}\label{boundAfterGC}
If the maximum \fld{index} in a node's RBT is a multiple of $G$, then
it contains at most $2q_{max}+4p+1$ blocks.
\end{lemma}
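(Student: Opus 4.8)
The plan is to bound the number of retained blocks by counting the operations they contain, in the spirit of \Cref{dSearchTime}. Since the maximum \fld{index} in the RBT of a node $\var{v}$ increases by exactly $1$ each time a block is added (\Cref{nonEmptyIncreasing} and \Cref{RBTupdates}), the maximum being a multiple $kG$ of $G$ means the block of index $kG$ was just inserted by an \op{AddBlock} that performed garbage collection. Right after that insertion $\var{v}.\fld{blocks}$ consists exactly of the blocks with \fld{index} in $[s_v,kG]$, where $s_v$ is the \fld{index} returned by \op{SplitBlock}($\var{v}$), so the count is $kG-s_v+1$. Let $m$ be the maximum entry of \var{last} read during this \op{SplitBlock}, and let $\beta$ be the \fld{index} of the last subblock in $\var{v}$ of $\var{root}.\fld{blocks}[m-1]$, defined through the immutable \eleft\ and \eright\ fields and hence well-defined even if that block has since been discarded. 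Whether or not \op{SplitBlock} substitutes \op{MinBlock} on line \ref{substitute}, we have $s_v\ge\beta$, so it suffices to prove $kG-\beta\le 2q_{max}+4p$; since the blocks of $\var{v}$ have consecutive indices and each nonempty block holds at least one operation (\Cref{blockNotEmpty}) with distinct blocks holding disjoint operations (\Cref{lem::subblocksDistinct}), $kG-\beta$ is at most the number of distinct operations contained in the blocks of $\var{v}$ with \fld{index} in $(\beta,kG]$.

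Next I would fix the reference time $\tau$ at which \op{SplitBlock} reads \var{last}; all blocks with index in $(\beta,kG]$ were created before $\tau$, and at $\tau$ the maximum entry of \var{last} is $m$. By the definition of $\beta$ and monotonicity of the subblock ranges (\Cref{lem::headProgressGC}), every such block is a subblock of a root block of index $\ge m$, so each operation it contains either (i) has propagated to the root and therefore lies in some $\var{root}.\fld{blocks}[b]$ with $b\ge m$, or (ii) has not yet propagated. Operations of type (ii) number at most $p$, since a process that has not finished propagating its current operation has no other operation in progress. It then remains to bound the operations of type (i), namely the operations in root blocks of index in $[m,M]$, where $M$ is the maximum root index at $\tau$.

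For that count I would argue as in \Cref{dSearchTime}, using the key invariant that every \op{Dequeue} completed by time $\tau$ wrote into \var{last} a value $\le m$: such a dequeue is either null and located in a root block of index $\le m$, or it returns an enqueue lying in a root block of index $\le m$. Consequently: (a) the enqueues in root blocks of index in $(m,M]$ are either still in the queue after the prefix of $L$ ending at block $M$ (at most $\var{root}.\fld{blocks}[M].\size\le q_{max}$ of them, using \Cref{sizeCorrectness}) or are dequeued within that prefix, necessarily by an in-progress dequeue (at most $p$); (b) the non-null dequeues in those blocks either completed by $\tau$, in which case they return distinct enqueues present in the queue at the end of block $m$ (at most $\var{root}.\fld{blocks}[m].\size\le q_{max}$), or are in-progress (at most $p$); and (c) the null dequeues in those blocks must be in-progress, since a completed null dequeue in a block of index $>m$ would have set \var{last} above $m$, giving at most $p$, while the operations of block $m$ itself number at most $p$ because a block contains at most $p$ operations (at most one per process, via \Cref{blockSize}). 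Summing gives at most $2q_{max}+3p$ operations of type (i), hence at most $2q_{max}+4p$ operations of both types, so $kG-\beta\le 2q_{max}+4p$ and the RBT holds at most $2q_{max}+4p+1$ blocks.

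The main obstacle will be the concurrency bookkeeping rather than the arithmetic: I must pin every count to the single reference time $\tau$ so that ``every completed dequeue has \var{last}\ $\le m$'' is valid, verify that the blocks being counted already exist at $\tau$, and handle the case in which the intended split block has already been discarded by a concurrent collection. This last point is exactly why the argument is phrased in terms of the immutable index $\beta$ and the consecutive blocks that ever existed, rather than the blocks physically present after \op{SplitBlock}. Some care is also needed to invoke \Cref{lem::headProgressGC} correctly, so that the indices in $(\beta,kG]$ genuinely correspond to subblocks of root blocks of index $\ge m$.
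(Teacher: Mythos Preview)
Your proposal is correct and follows essentially the same strategy as the paper's proof: fix a reference configuration just before \op{SplitBlock}, partition the retained operations into those already propagated to root blocks of index $\geq m$ and those not yet propagated (at most $p$), and bound the former using the key invariant that every completed dequeue has caused some \var{last} entry to record a value $\leq m$. Your accounting of dequeues in $(m,M]$---splitting by completed versus in-progress rather than by where the returned enqueue lives---is in fact slightly tighter than the paper's, but the overall decomposition and the use of \Cref{blockNotEmpty}, \Cref{blockSize}, and the \var{last} invariant are the same.
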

\begin{proof}
Consider the invocation $A$ of \op{AddBlock} that updates a node \var{v}'s RBT with the insertion of a block whose 
\fld{index} is a multiple of $G$.
Then, $A$ performs a GC phase.
Let $C$ be the configuration before $A$ invokes \op{SplitBlock} on line \ref{callSplitBlock}.
That call to \op{SplitBlock} recurses up to the root, where it computes $m$ by reading the \var{last}
array.
Let $L_1$ be the prefix of the linearization $L$ corresponding to blocks $1..m$ of the root.
Let $L_2$ be the next segment of the linearization corresponding to blocks $m+1..\ell$ of the root, where
$\ell$ is the last block added to the root's RBT before $C$.

We first bound the number of operations in $L_2$.

The number of enqueues in $L_2$ whose values are still in the queue at the end of $L_2$ is at most $q_{max}$.
Any enqueue in $L_2$ whose value is not still in the queue at the end of $L_2$
must still be in progress at~$C$; otherwise the process that dequeued it would have
set its \var{last} entry to a value greater than the index of the block of the root that contains the enqueue prior to $C$,
contradicting the definition of $m$.
So, there are at most $p$ enqueues in $L_2$ whose values are still in the queue at the end of $L_2$.
Thus, there are at most $q_{max}+p$ enqueues in $L_2$.

If a dequeue in $L_2$ returns a non-null value in the sequential execution $L$,
then the value it returns was either in the queue at the end of $L_1$ or it was enqueued during $L_2$.
Thus, there are at most $q_{max}+(q_{max}+p)$ dequeues in $L_2$ that return non-null values.
Any dequeue in $L_2$ that returns a null value in the sequential execution $L$ must still be in progress
at $C$; otherwise the process that performed the dequeue would have set its \var{last} entry
to a value greater than the index of the block of the root that contains the dequeue prior to $C$, 
contradicting the definition of $m$.
So, there are at most $p$ null dequeues in $L_2$.
Thus, there are at most $2q_{max} + 2p$ dequeues in $L_2$.

$A$'s call to \op{SplitBlock} determines the index $s$ used to split \var{v}'s RBT by following 
\eleft\ and \eright\ pointers from the root down to \var{v}.
So, the block returned is a subblock of block $m$ of the root, unless
at some point along the path of subblocks the subblock has already been removed by a split,
in which case \op{SplitBlock} returns a subblock of a block $m'>m$ in the root.

Next, we bound the number of operations in \var{v}'s blocks that are retained when $A$  sets
$T'\assign\op{Split}(T,s)$.
Since $T$ was read before $C$, any operation in $T$ is either in progress at $C$ or has been propagated to the root before $C$.\here{cite a lemma about propagation before operation terminates}
Thus, there are at most $p$ operations in $T$ that do not appear in $L_1\cdot L_2$.
All the rest of the operations in blocks of $T'$ have been propagated to blocks $m..\ell$ of the root.
There are at most $p$ operations in block $m$ of the root 
and we showed above that there are at most $2q_{max} + 2p$ in blocks 
$m+1..\ell$ of the root.
Thus, there are at most $2q_{max}+4p$ operations in blocks of $T'$.
Since each block is non-empty by \Cref{blockNotEmpty}, $T'$ contains at most $2q_{max}+4p$ blocks, and one more 
block is inserted before $A$ sets \var{v}'s \fld{blocks} to the resulting RBT.
\end{proof}

\begin{corollary}\label{RBTbound}
At all times, the size of a node's RBT is $O(q_{max}+p+G)$. 
\end{corollary}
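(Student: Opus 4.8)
The plan is to derive this directly from \Cref{boundAfterGC} by combining the post-GC bound with the fact that GC is triggered exactly once every $G$ insertions into a node's RBT. First I would fix an arbitrary configuration, consider a single node $\var{v}$, and let $h$ be the maximum \fld{index} currently present in $\var{v}$'s RBT; this is well-defined since the RBT is never empty, and $h$ is non-decreasing, by \Cref{nonEmptyIncreasing}. Write $kG$ for the largest multiple of $G$ that is at most $h$, so that $kG \le h < (k+1)G$.

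Assuming $k \ge 1$, the block with \fld{index} $kG$ was installed into $\var{v}$'s RBT by some \op{Refresh} (or \op{Append}) whose call to \op{AddBlock} ran a GC phase, precisely because $kG$ is a multiple of $G$. In the configuration immediately after that block was installed, the maximum \fld{index} in $\var{v}$'s RBT equals $kG$, so by \Cref{boundAfterGC} the RBT then contained at most $2q_{max}+4p+1$ blocks. Next I would bound how much the RBT can grow between that configuration and the current one. By \Cref{nonEmptyIncreasing} the \fld{index} values present in the RBT are always consecutive and the maximum only increases, so every block inserted after \fld{index} $kG$ has \fld{index} in $\{kG+1,\ldots,h\}$, of which there are at most $h-kG \le G-1$. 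Moreover, no GC phase occurs in this window: a GC phase is triggered only by inserting a block whose \fld{index} is a multiple of $G$, and $kG$ is the only such multiple in $[kG,h]$ (since $h < (k+1)G$); hence no block is discarded between the two configurations, and the RBT pointer only changes through the successful CAS that installs the next block. Therefore the current RBT has at most $(2q_{max}+4p+1)+(G-1) = 2q_{max}+4p+G$ blocks.

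Finally I would dispose of the boundary case $k=0$, i.e.\ $h < G$: here no GC has yet run on $\var{v}$, and by \Cref{nonEmptyIncreasing} the RBT holds the blocks with indices $0,\ldots,h$, so at most $h+1 \le G$ blocks, again within the claimed bound. In all cases the size is $2q_{max}+4p+G = O(q_{max}+p+G)$.

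I do not expect a genuine obstacle here, since this is essentially a bookkeeping argument once \Cref{boundAfterGC} is in hand; the only points requiring care are justifying that (i) no blocks are removed between two consecutive GC triggers and (ii) the indices added in that window are consecutive and fewer than $G$. Both follow immediately from the GC trigger condition in \op{AddBlock} together with \Cref{nonEmptyIncreasing}, so the write-up amounts to stating these invariants precisely and assembling the arithmetic above.
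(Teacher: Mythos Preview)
Your proposal is correct and follows essentially the same approach as the paper's proof: bound the RBT size at the last moment its maximum \fld{index} was a multiple of $G$ using \Cref{boundAfterGC}, then observe that at most $G$ further insertions (and no deletions) have occurred since. The paper's proof is simply a terser version of the same argument, compressing your case analysis and the justification that no \op{Split} happens between consecutive GC triggers into the single observation that each update increases the maximum \fld{index} by exactly one.
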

\begin{proof}
Each update to a node's RBT adds at most one block to it, increasing its maximum \fld{index} by 1.
Thus, there are at most $G$ updates since the last time its maximum \fld{index} was a multiple of $G$.
The claim follows from \Cref{boundAfterGC}.
\end{proof}

The following theorem bounds the space that is reachable (and therefore cannot be freed by the environment's garbage collector) at any time.
\here{For later version:  There may be room to improve this bound by a more careful analysis since the bounds so far
didn't use the fact that operations only appear along one path in the tree; in other words
doing a bound on total size of all RBTs at one level of the ordering tree at once, rather than
on each node separately, might reduce it to qlogp + poly(p).  Might also be possible to be
more precise about how it depends on "current" size of queue rather than max size.}

\begin{mytheorem}\label{spaceBound}
The queue data structure uses a maximum of $O(pq_{max}+p^3\log p)$ words of memory at any time.
\end{mytheorem}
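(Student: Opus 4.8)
The plan is to bound the reachable memory by multiplying the number of nodes in the ordering tree by the maximum size of any single RBT, and then to argue that every other contribution is of lower order. The one genuinely delicate point is justifying that the obsolete versions produced by node-copying do not count as reachable, since otherwise the persistent RBTs could appear to use unbounded space.

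First I would count the ordering tree nodes. Since the ordering tree is a static binary tree with one leaf per process and height $\ceil{\log_2 p}$, it has $p$ leaves and therefore $O(p)$ nodes in total. Each node holds a pointer to the root of its RBT of \typ{Block}s, and each \typ{Block} and each RBT node occupies $O(1)$ words. Next I would bound the size of one RBT: by \Cref{RBTbound}, every RBT that is ever installed in a node's \fld{blocks} field contains $O(q_{max}+p+G)$ blocks, and a red-black tree storing $n$ blocks uses $O(n)$ RBT nodes. Substituting $G=p^2\ceil{\log p}$ gives $O(q_{max}+p^2\log p)$ words per currently installed RBT, so the installed RBTs over all $O(p)$ ordering tree nodes use $O(p\cdot(q_{max}+p+p^2\log p))=O(pq_{max}+p^3\log p)$ words.

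The main obstacle is accounting for node-copying, which creates a new version of the RBT for each update. I would argue as follows. Whenever a \op{Refresh} swings a node's \fld{blocks} pointer via the \op{CAS} at line \ref{casGC} (or an \op{Append} updates a leaf at line \ref{appendLeafGC}), the previous RBT root, together with every RBT node reachable only through it, becomes unreachable from the shared structure, so the environment's collector may reclaim it; hence only the current version of each node's RBT is permanently reachable. The only remaining references to older versions are the $O(1)$ local root pointers each process holds while performing a search or an \op{AddBlock}. Each such version was itself current at an earlier time, so by \Cref{RBTbound} it too has size $O(q_{max}+p^2\log p)$, contributing only $O(p)$ extra RBTs and hence another $O(pq_{max}+p^3\log p)$ term.

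Finally I would dispose of the transient and auxiliary memory. While a process builds a new RBT with \op{Split} and \op{Insert}, it allocates only $O(\log(q_{max}+p+G))=O(\log(p+q_{max}))$ fresh nodes per level of the ordering tree, for a total of $O(p\log p\log(p+q_{max}))$ across all processes, which is dominated by the bound above. The shared arrays $\var{leaf}[1..p]$ and $\var{last}[1..p]$ use only $O(p)$ words. Adding all contributions and absorbing the lower-order $O(p^2)$ term into $O(p^3\log p)$ yields the claimed maximum of $O(pq_{max}+p^3\log p)$ words of reachable memory at any time.
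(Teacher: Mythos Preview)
Your proof is correct and follows essentially the same approach as the paper: count the $O(p)$ ordering-tree nodes, apply \Cref{RBTbound} to bound each installed RBT by $O(q_{max}+p+G)=O(q_{max}+p^2\log p)$ blocks, and note that each process additionally holds $O(1)$ local RBT-root pointers. The paper's proof is a terse three sentences that simply asserts the local-pointer point; your elaboration on why obsolete node-copied versions are unreachable and your separate accounting of freshly allocated nodes during \op{Split}/\op{Insert} are valid refinements the paper leaves implicit. One small wording quibble: the under-construction tree $T'$ held by an \op{AddBlock} was never itself ``current,'' but since it is obtained from a current $T$ by a \op{Split} and one \op{Insert}, its size is still bounded as you need.
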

\begin{proof}
There are $2p-1$ nodes in the ordering tree.  Aside from the RBT, each node uses $O(1)$ memory words.
Each process may hold pointers to $O(1)$ RBTs that are no longer current in local variables.
So the space bound follows from \Cref{RBTbound} and the fact that $G$ is chosen
to be $p^2\ceil{\log p}$.
\end{proof}

\here{It may be possible, via a more careful analysis, to reduce the bound on space usage to $O(\log^2 p(q_{max}+p^3)$.}

Although individual operations may now become more expensive because they have to perform GC,
we show that operations still have polylogarithmic amortized step complexity.

\begin{mytheorem}
The amortized step complexity of each operation is $O(\log p \log(p+q_{max}))$.
\end{mytheorem}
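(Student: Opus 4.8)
The plan is to split each operation's cost into a \emph{base cost} (everything done apart from garbage collection) and a \emph{GC cost}, bound the base cost in the worst case, and show the GC cost is $O(\log p\log(p+q_{max}))$ only amortized. For the base cost, I would first invoke \Cref{RBTbound}: every node's RBT holds $O(q_{max}+p+G)$ blocks, and since $G=p^2\ceil{\log p}$ we have $\log(q_{max}+p+G)=O(\log(p+q_{max}))$, so each RBT search, \op{Insert}, \op{MaxBlock}, \op{MinBlock} or \op{Split} costs $O(\log(p+q_{max}))$ steps. An \op{Enqueue} or \op{Dequeue} performs $O(1)$ such RBT operations per node while propagating up the tree (in \op{Propagate}, \op{Refresh}, \op{CreateBlock}, \op{AddBlock}), and a dequeue does $O(1)$ more per node in \op{IndexDequeue}, \op{FindResponse} and \op{GetEnqueue} (the doubling and binary searches are now single RBT searches). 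Over the $\ceil{\log_2 p}$ levels this gives a base cost of $O(\log p\log(p+q_{max}))$.

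Next I would bound the cost of one GC phase (lines~\ref{GCstart}--\ref{GCend} of \op{AddBlock}). \op{SplitBlock} recurses up the $O(\log p)$ levels doing $O(1)$ RBT operations per level and reads the length-$p$ array \var{last} once, costing $O(p+\log p\log(p+q_{max}))$; \op{Help} loops over all $p$ leaves, running \op{Propagated} and possibly \op{CompleteDeq} for each, each of which is itself $O(\log p\log(p+q_{max}))$, so \op{Help} costs $O(p\log p\log(p+q_{max}))$ and dominates, while the concluding \op{Split} and \op{Insert} add only $O(\log(p+q_{max}))$. Thus one GC phase costs $O(p\log p\log(p+q_{max}))$. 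I would then observe that GC is triggered on a node $v$ only when a block whose \fld{index} is a multiple of $G$ is inserted, and that each such index spawns at most $O(p)$ GC phases: only a process that read $\op{MaxBlock}(v.\fld{blocks})=kG-1$ before the first block of index $kG$ is installed can launch a GC phase for that index, and each process launches $O(1)$ of them. Hence the \emph{total} work charged to reaching index $kG$ on $v$ is $O(p)\cdot O(p\log p\log(p+q_{max}))=O(p^2\log p\log(p+q_{max}))=O(G\log(p+q_{max}))$.

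Finally I would amortize. Let $N_v$ be the number of blocks ever added to $v$; since indices are consecutive (\Cref{nonEmptyIncreasing}), $v$ undergoes $\lfloor N_v/G\rfloor$ GC triggers, so the total GC work at $v$ over the whole execution is $(N_v/G)\cdot O(G\log(p+q_{max}))=O(N_v\log(p+q_{max}))$. Summing over the $2p-1$ nodes gives total GC work $O\!\left(\log(p+q_{max})\sum_v N_v\right)$. By \Cref{blockNotEmpty} every block contains at least one operation, and by \Cref{lem::noDuplicates} an operation lies in at most one block per node and only in the $O(\log p)$ nodes on its leaf-to-root path; hence $\sum_v N_v$ is at most the number of (block, operation) incidences, which is $O(n\log p)$ for $n$ operations. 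So the total GC work is $O(n\log p\log(p+q_{max}))$, i.e.\ $O(\log p\log(p+q_{max}))$ amortized per operation, which together with the base cost proves the theorem.

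I expect the GC accounting to be the main obstacle. The delicate points are arguing that a single index that is a multiple of $G$ can spawn at most $O(p)$ concurrent GC phases (so the per-trigger cost $O(G\log(p+q_{max}))$ exactly cancels the $1/G$ density of triggers), and that the blocks created across all nodes number only $O(n\log p)$; together these are what make the amortized per-operation bound collapse to $O(\log p\log(p+q_{max}))$ rather than something with an extra factor of $p$.
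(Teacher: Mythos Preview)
Your proposal is correct and follows essentially the same approach as the paper: split the cost into base work and GC work, bound a single GC phase at $O(p\log p\log(p+q_{max}))$, argue that each index $kG$ on a node $v$ triggers at most one GC phase per process (hence $O(p)$ total, costing $O(G\log(p+q_{max}))$), and amortize over the $G$ blocks between consecutive triggers. The paper's write-up charges each operation $O(\log(p+q_{max}))$ of GC cost at each node on its leaf-to-root path, whereas you sum the total GC work as $O(\sum_v N_v\cdot\log(p+q_{max}))$ and bound $\sum_v N_v\le O(n\log p)$ via the block--operation incidence count; these two accountings are equivalent. Your justification that each process launches at most one GC phase for a given index~$kG$ is slightly terse---the paper spells out that after the \op{CAS} following the first such \op{AddBlock} (successful or not), the RBT's maximum index is already at least $kG$, so the same process cannot read max $kG-1$ again---but you correctly flag this as the delicate point.
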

\begin{proof}
\here{this proof may need some modification since code was changed to change how split point is computed}
It follows from \Cref{RBTbound} and our choice of $G=p^2\ceil{\log p}$
that all the RBT routines we use to perform \op{Split}, \op{Insert} and searches for
blocks with a particular index or for a \fld{sum\sub{enq}} value (in line \ref{FRsearchGC} or \ref{getChildGC}) can  be done in $O(\log(p+q_{max}))$ steps.

First, we bound the number of steps taken \emph{excluding} the GC phase in line \ref{GCstart}--\ref{GCend}.
An \op{Enqueue} or null \op{Dequeue} does $O(1)$ RBT operations and other work at each level of the tree during \op{Propagate},
for a total of $O(\log p \log(p+q_{max}))$ steps.
A non-null \op{Dequeue} must also search for a block in the root at line \ref{FRsearchGC}
and call \op{GetEnqueue}.  At each level of the tree, \op{GetEnqueue} does $O(1)$ RBT operations (including a search at line \ref{getChildGC}) and $O(1)$ other steps.
Thus, a \op{Dequeue} also takes $O(\log p \log(p+q_{max}))$ steps.

Now we consider the additional steps a process takes while doing GC in line \ref{GCstart}--\ref{GCend}
and show that the amortized number of GC steps each operation performs is also $O(\log p\log(p+q_{max}))$.
If a process does GC in a call to \op{AddBlock}($\var{v},T,B$) where $B$ has \fld{index} $r\cdot G$ for some integer $r$, we call this the process's \emph{$r$th helping phase on \var{v}}.

We argue that each process $P$ can do an $r$th helping phase on \var{v} at most once.
Consider $P$'s first call $A$ to \op{AddBlock} that does an $r$th helping phase on \var{v}.
Let $\var{v},T,B$ be the arguments of $A$.
Any call to \op{AddBlock} on internal node \var{v} is from line \ref{refreshABGC} of \op{Refresh}, so $B.\fld{index}$
is the maximum \fld{index} in $T$ plus 1.
The \op{Refresh} that called $A$ performed a \op{CAS} at line \ref{casGC}.  Either the \op{CAS} succeeds
or it fails because 
some other \op{CAS} changes \var{v}.\fld{blocks} from $T$ to another tree.
Either way, by \Cref{RBTupdates}, \var{v}'s RBT's maximum \fld{index} will be at least $r\cdot G$ at all times
after this CAS.
So if a subsequent \op{Refresh} by process $P$ 
ever calls \op{AddBlock} on \var{v} again, the block it passes as the third argument
will have $\fld{index}>r\cdot G$, so $P$ will not perform an $r$th helping phase on \var{v} again.

Each helping phase takes $O(p)$ steps to read the \var{last} array, and
$O(p \log p \log(p+q_{max}))$ steps in \op{Help},
and $O(\log(p+q_{max}))$ steps to split and insert a new node into a RBT.
Thus, for each integer $r$ and each node \var{v}, a total of $O(p^2\log p\log(p+q_{max}))$ steps
are performed by all processes during their $r$th helping phase on \var{v}.
We can amortize these steps over the operations that appear in 
$\var{v}.\fld{blocks}[(r-1)G+1..rG]$.
By \Cref{blockNotEmpty}, there are at least $G$ such operations, 
so each operation's amortized number of steps for GC at each node along the path from its leaf to the root
is $O(p^2\log p\log(p+q_{max})/G)=O(\log(p+q_{max}))$.
Hence each operation's amortized number of GC steps is $O(\log p\log(p+q_{max}))$.
\end{proof}

The implementation remains wait-free:  the depth of recursion in each routine is  bounded
by the height of the tree and the only loop is the counted loop in the \op{Help} routine.
Since each operation still  does only two \op{CAS} instructions at each level of the tree (at line \ref{casGC}), the following proposition still holds for the space-bounded version of the queue.

\begin{customprop}{\bf{\ref{casbound}}$'$}
Each operation performs $O(\log p)$ \op{CAS} instructions in the worst case.
\end{customprop}

\end{document}